\newtheorem{mydef}{\bfseries Definition}
\newtheorem{mytheo}{\bfseries Theorem}
\newtheorem{mylemma}{\bfseries Lemma}
\newtheorem{myexample}{\bfseries Example}
\newtheorem{myprop}{\bfseries Proposition}
\newtheorem{myassumption}{\bfseries Assumption}
\newcommand{\mypara}[1]{\medskip\noindent {\bf #1}\quad}
\newcommand{\theoremlike}[2]{\par\medskip\penalty-250\refstepcounter{mylemma}{{\bfseries\noindent#2
			\ref{#1}.}}}
\newcommand{\thmhelperpre}[2]{\theoremlike{#1}{#2}}
\newcommand{\thmhelperpost}{\par\medskip}
\newenvironment{reftheo}[1]{\thmhelperpre{#1}{Theorem}}{\thmhelperpost}
\newenvironment{refprop}[1]{\thmhelperpre{#1}{Proposition}}{\thmhelperpost}
\newcommand{\mal}{\textit{Mal}}
\newcommand{\app}{\textit{App}}
\newcommand{\att}{\textit{Att}}
\newcommand{\login}{\textit{login}}
\newcommand{\logout}{\textit{logout}}
\newcommand{\lookup}{\textit{lookup}}
\newcommand{\img}{\mathcal{G}} 
\newcommand{\pomdp}{\mathcal{P}} 
\newcommand{\obs}{\mathcal{O}}
\newcommand{\obsfunc}{O}
\newcommand{\transfunc}{P}
\newcommand{\syncmod}{\mathcal{S}}
\newcommand{\agt}{\textit{Plr}}
\newcommand{\act}{\textit{Act}}
\newcommand{\lab}{\textit{Label}}
\newcommand{\last}{\textup{last}}
\newsavebox{\@brx}
\newcommand{\llangle}[1][]{\savebox{\@brx}{\(\m@th{#1\langle}\)}%
  \mathopen{\copy\@brx\kern-0.5\wd\@brx\usebox{\@brx}}}
\newcommand{\rrangle}[1][]{\save	
box{\@brx}{\(\m@th{#1\rangle}\)}%
  \mathclose{\copy\@brx\kern-0.5\wd\@brx\usebox{\@brx}}}
\newcommand{\syn}{\textup{Sync}}
\newcommand{\play}{\textit{Play}}
\newcommand{\hist}{\textit{Hist}}
\newcommand{\choice}{\mathcal{C}}
\newcommand{\enab}{\textup{En}}
\newcommand{\snote}[1]{\textcolor{red}{\textbf{#1}}}
\newcommand{\hnote}[1]{\textcolor{blue}{\textbf{#1}}}
\newcommand{\jnote}[1]{\textcolor{green}{\textbf{#1}}}
\newcommand{\marko}[1]{
	\ifthenelse{\equal{#1}{}}{\mathop{\rightsquigarrow}}{\overset{#1}{\rightsquigarrow}}
	}
\newcommand{\inter}[1]{
	\ifthenelse{\equal{#1}{}}{\mathop{\hookrightarrow}}{\overset{#1}{\hookrightarrow}}
	}
\newcommand{\intersmall}[1]{
	\mathop {\smash\hookrightarrow}\limits^{\vrule width 0pt height 0pt depth 4pt\smash{#1}}	
}
\newcommand{\markosmall}[1]{
	\mathop {\smash\rightsquigarrow}\limits^{\vrule width 0pt height 0pt depth 4pt\smash{#1}}	
}
\newcommand{\arrow}[2]{\mathrel{\raisebox{-2pt}{$\xrightarrow{#1}_{#2}$}}}
\newcommand{\pr}{\textup{Pr}}
\newcommand{\cyl}{\textup{Cyl}}
\newcommand{\cylta}{\textup{Cyl}^\textup{ia}}
\newcommand{\exptime}{\textsc{ExpTime}}
\newcommand{\vect}[1]{\boldsymbol{#1}}
\newcommand{\dist}{\Delta}
\newcommand{\Nset}{\mathbb{N}}
\newcommand{\Nseto}{\Nset_0}
\newcommand{\Qset}{\mathbb{Q}}
\newcommand{\Rset}{\mathbb{R}}
\newcommand{\Rsetpo}{\mathbb{R}_{\ge 0}}
\newcommand{\sigmafielda}{\mathcal{H}}
\newcommand{\sigmafieldb}{\mathcal{I}}
\newcommand{\mycut}[1]{}
\newcommand{\initstate}{s^{in}}
\newcommand{\sigmafield}{\mathcal{F}}
\newcommand{\mdp}{\mathcal{M}}
\newcommand{\tra}{P}
\newcommand{\states}{S}
\newcommand{\actions}{A}
\newcommand{\structure}{\mathcal{G}}
\newcommand{\donothing}{\varnothing}
\newcommand{\row}{\mathrm{row}}
\newcommand{\col}{\mathrm{col}}
\newcommand{\sync}{\mathrm{sync}}
\begin{document}

\title{Distributed Synthesis in Continuous Time}

\author{Holger Hermanns\inst{1} \and Jan Kr\v{c}\'al\inst{1} \and Steen Vester\inst{2}}

\institute{Saarland University -- Computer Science, Saarbr\"{u}cken, Germany\\ \email{\{hermanns, krcal\}@cs.uni-saarland.de}
\and Technical University of Denmark\\ \email{stve@dtu.dk}}
%
\maketitle

\begin{abstract}
We introduce a formalism modelling communication of distributed agents strictly in \emph{continuous-time}. Within this framework, we study the problem of synthesising local strategies for individual agents such that a specified
set of goal states is reached, or reached with at least a given probability. 
The flow of time is modelled explicitly based on continuous-time randomness, with two natural implications: 
First, the non-determinism stemming from interleaving disappears.
Second, when we restrict to a subclass of \emph{non-urgent} models, the quantitative value problem for two players can be solved in EXPTIME.
Indeed, the explicit continuous time enables players to communicate their states by delaying synchronisation (which is unrestricted for non-urgent models).
In general, 
the problems are undecidable already for two players in the quantitative case and three players in the qualitative case.
The qualitative undecidability is shown by a reduction to decentralized POMDPs for which we provide the strongest (and rather surprising) undecidability result so far.
%


\end{abstract}

\section{Introduction}\label{sec:intro}

\emph{Distributed} self-organising and self-maintaining systems are
posing interesting design challenges, and have been subject to many
practical~\cite{sinopoli} as well as theoretical~\cite{pnueli-rosner,pnueli-rosner-async}
investigations.  Distributed systems interact in real time,
and one very general way to reason about their timing behaviour is to
assume that arbitrary continuous probability distributions govern the
timing of local steps as well as of communication steps. We are
interested in how foundational properties of such distributed
systems differ from models where timing is abstracted. 
As principal means of communication we consider symmetric handshake
communication, since it can embed other forms of communication
faithfully~\cite{milner-calculi-for-synchrony-and-asynchrony,baier-katoen}
including asynchronous and input/output-separated communication.

 
 

As an example, consider the problem of leaking a secret from a
sandboxed malware to an attacker. The behaviour of attacker and
malware (and possibly other components) are prescribed in terms of
states, private transitions, labelled synchronisation
transitions, and delay transitions  which model both local
computation times and synchronisation times. The delays are governed by
arbitrary continuous probability distributions over real time. 
Handshake synchronisation is assumed to take place if all devices able
to do so agree on the same transition
label. 
Otherwise the components run fully asynchronously.  The sandboxing can
be thought of as restricting the set of labels allowed to occur on
synchronisation transitions.
 The  question we focus on is how to synthesise the
component control strategies for malware and attacker so that 
they reach their target (of leaking the secret) almost
surely or with at least a given probability $p$.

More precisely, we consider a parallel composition of $n$
modules 
synchronizing via handshake communication. 
The modules are modelled by interactive Markov chains (IMCs)~\cite{Her02,how-why}, a
generalization of labelled transition systems and of continuous time Markov
chains, equipped with a well-understood compositional theory.
Each module may in each state enable private actions, as well as
synchronisation actions. 
It is natural to view such a \emph{distributed IMC} as a game
with $n+1$ players, where the last player controls the 
interleaving of the modules. Each of the other $n$ players
controls the decisions in a single module, only based on 
its local timed history containing only transitions
that have occurred within the module.
On entering a state of its module, each player selects and commits to executing one of the actions enabled. A private action is executed immediately while
a synchronisation action requires a CSP-style handshake~\cite{CSP}, it
is executed once all modules able to perform this action have
committed to it. 
  
  
For representing delay distributions, we make one decisive and one
technical restriction. First, we assume that each distribution is
continuous. 
This for instance disallows deterministic delays of, say, $3$ time units.
It is an important simplification assumed along our explorations of continuous-time distributed control. Second, we restrict to exponential distributions. This is a pure technicality, since (a) 
our results can be developed with general continuous
distributions, at the price of excessive notational and technical
overhead, and (b) exponential distributions can approximate arbitrary continuous distributions arbitrarily close~\cite{neuts1981matrix}. 
Together, these assumptions enable us to work
in a setting close to interactive Markov chains.


Apart from running in continuous time, the concepts behind distributed
IMCs are rather common. Closely related are models based on
probabilistic automata~\cite{segala} or (partially observable) Markov
decision processes \cite{Paz71,BGIZ02}. In these settings, the power
of the interleaving player $n+1$ is a matter of ongoing
debate~\cite{switched-pioa,task-pioa,dargenio-pelozo}. The problem is
that without additional (and often complicated) assumptions this
player is too powerful to be realistic, and can for instance leak
information between the other players. This is a problem, e.g. 
in the security context, 
making model checking results overly pessimistic~\cite{giro-dargenio}.

In sharp contrast to the discrete-time settings, in our distributed IMCs the interleaving player
$n+1$ does not have decisive influence on the resulting game.
The reason  is
that 
the interleaving player
can only affect the order of transitions between two delays, but
neither \emph{which transitions} are taken nor what the different players \emph{observe}. 
This is rooted in  
the common alphabet synchronisation and
especially the continuous-time nature of the game: 
the
probability of two local modules changing state at the same time
instant is zero, except if synchronising. 
\begin{example}
\label{ex1}
    We consider the model displayed on the right where the
    delay transitions are labelled by some rate $\lambda$. It displays
    a very simplistic malicious $\app$ trying to communicate a
    secret to an outside $\att$acker, despite being
    sandboxed. Innocently
    looking action $\login$, $\logout$ and $\lookup$ synchronise $\app$ and $\att$, while the unlabelled transitions denote some private actions of the respective module.
\end{example}

\begin{wrapfigure}[10]{r}{0.435\linewidth}
	\vspace{-2.5em}
	\begin{tikzpicture}[yscale=0.45,xscale=0.45]
	\scriptsize
	
	\tikzstyle{every node}=[ellipse, draw=black,
	inner sep=0pt, minimum width=13pt, minimum height=13pt]
	
	\draw (0,8.3) node [draw=none] (app)	{\textbf{App:}};
	\draw (0,6) node (s0)	{$c_0$};
	\draw (6.5,6) node (s5)	{};
	\draw (2,7.2) node (l1)	{$t_1$};
	\draw (5,7.2) node (l2)	{$t_2$};
	\draw (8,7.2) node (l3)	{$t_3$};
	\draw (10,7.2) node (l4)	{$t_4$};
	\draw (2,4.8) node (r1)	{$b_1$};
	\draw (5,4.8) node (r2)	{$b_2$};
	\draw (8,4.8) node (r3)	{$b_3$};
	\draw (10,4.8) node (r4)	{$b_4$};
	
	\path[->] ($(s0)+(-1,0)$) edge (s0);
	
	\path[->] (s0) edge node [above left, draw=none] {$\lambda$} (l1);
	\path[->] (s0) edge node [below left, draw=none] {$\lambda$} (r1);
	
	\path[->] (l1) edge [loop above] node [above=-3, draw=none] {$\lambda$} (l1);
	\path[->] (l4) edge [loop above] node [above=-3, draw=none] {$\lambda$} (l4);
	\path[->] (l1) edge node [above=-3, draw=none] {\login} (l2);	
	\path[->] (l2) edge [] node [above=-2, draw=none] {$\lambda$} (s5);
	\path[->] (l2) edge node [above=-3, draw=none] {\lookup} (l3);
	\path[->] (l2) edge [bend right = 45] node [above=-3, draw=none] {\logout} (l1);
	\path[->] (l3) edge (l4);
	
	\path[->] (s5) edge [loop right] node [right=-3, draw=none] {$\lambda$} (s5);
	
	\path[->] (r1) edge [loop above] node [above=-3, draw=none] {$\lambda$} (r1);
	\path[->] (r4) edge [loop above] node [above=-3, draw=none] {$\lambda$} (r4);
	\path[->] (r1) edge node [below=-3, draw=none] {\login} (r2);	
	\path[->] (r2) edge [] node [below=-3, draw=none] {$\lambda$} (s5);
	\path[->] (r2) edge node [below=-3, draw=none] {\lookup} (r3);
	\path[->] (r2) edge [bend right = 45] node [above=-3, draw=none] {\logout} (r1);
	\path[->] (r3) edge (r4);
	
	\begin{scope}[xshift=-10cm,yshift=-4.4cm]
	\draw (10,7.5) node [draw=none] (app)	{\textbf{Att:}};
	
	\draw (12,6) node (t0)	{$\bar{c}_1$};
	\draw (15,6) node (t1)	{$\bar{c}_2$};
	\draw (18,6) node (t2)	{$\bar{c}_3$};
	\draw (20,7) node (t3)	{$\bar{t}_4$};
	\draw (20,5) node (t4)	{$\bar{b}_4$};
	
	\path[->] (t1) edge [bend left = 45] node [above=-3, draw=none] {$\lambda$} (t0);
	
	\path[->] ($(t0)+(-1,0)$) edge (t0);
	
	\path[->] (t0) edge [loop above] node [above=-3, draw=none] {$\lambda$} (t0);
	\path[->] (t0) edge node [above=-3, draw=none] {\login} (t1);
	\path[->] (t1) edge node [above=-3, draw=none] {\lookup} (t2);
	\path[->] (t2) edge (t3);
	\path[->] (t2) edge (t4);
	
		\path[->] (t3) edge [loop above] node [left, draw=none] {$\lambda$} (t3);
	\path[->] (t4) edge [loop above] node [left, draw=none] {$\lambda$} (t4);
	\end{scope}
	\end{tikzpicture}
\end{wrapfigure}

%
%
%
%

	\vspace{-0.3em}
    Initially, the $\app$ can only let time pass. The $\att$acker
    player has no other choice than committing to handshaking on
    action $\login$. A race of the delay transitions will occur that at some point will lead to
    either state $(t_1,\bar{c}_1)$ or $(b_1,\bar{c}_1)$, with equal probability. Say in $(t_1,\bar{c}_1)$,
    the $\app$ player can only commit to action
    $\login$. The synchronisation will happen immediately since the
    $\att$acker is committed to $\login$ already, leading to $(t_2,\bar{c}_2)$. Now the $\app$ player has either to
    commit to action $\lookup$ or $\logout$. The latter will induce a
    deadlock 
	due to a mismatch in players' commitments.
Instead assuming the earlier, the state synchronously and
    immediately changes to $(t_3,\bar{c}_3)$.
	The $\att$acker player can now use its local timed
    history to decide which of the private actions to pick. Whatever
    it chooses, an interleaving of private actions of
    the two modules follows in zero time.
    Unless the reachability condition
	considers transient states such as $(t_3,\bar{t}_4)$ where no time is spent,
the player
    resolving the interleaving 
has no influence on 
the outcome.

    Now, assume the reachability condition is the state set
    $\{(t_4,\bar{t}_4),(b_4,\bar{b}_4)\}$. This
    corresponds to the $\att$acker player correctly determining the
    initial race of the $\app$, and can be considered as a leaked
    secret. However, according to the explanations provided, it should
    be obvious that the probability of guessing correctly (by
    committing properly in state $\bar{c}_3$) is no larger than 0.5, just
    because the players are bound to decide only based on the local
    history. The crucial question is: is there an algorithm to compute such probabilities, in general?
\vspace*{1ex}

  \mypara{Our contribution} This paper is the first to explore distributed
  cooperative reachability games with continuous-time flow modelled
  explicitly. The formalism we study is based on interactive Markov chains,
  which in turn has been applied across a wide range of engineering
  domains.
 We aim at synthesising local strategies for the players to
 reach with at least a given probability a specified set of goal states. If this probability is 1 we call the problem \emph{qualitative}, otherwise \emph{quantitative}. We consider \emph{existential} problems,
 asking for the existence of strategies with these properties, and
 \emph{value} problems, asking for strategies approximating the given
 probability value arbitrarily closely.
We have three main results:
  \begin{enumerate}
  	\item We show that, under mild assumptions on the winning condition, in continuous-time distributed synthesis the interleaving player has no power.
  	\item In general, we establish that the quantitative problems
          are undecidable for two or more players, the qualitative
          value problem is undecidable for two or more players and the
          qualitative existence problem is EXPTIME-hard for two
          players and undecidable for three or more players.
  	\item However, when focusing on the subclass of 2-player
          \emph{non-urgent} distributed IMCs, the
          quantitative value problem can be solved in
          exponential time. Non-urgency enables 
          changing the decisions committed to after some time. Thus, it
          empowers the players to reach a distributed consensus
          about the next handshake to perform by observing the only
          information they jointly have 
		access to: the
          advance of time.
  \end{enumerate}
%
  
    The qualitative undecidability comes from a novel result about
    decentralised partially observable Markov decision processes (DEC-POMDP), a multi-player extensions of POMDP.
	While qualitative existence is decidable for POMDP~\cite{BGB12}, we show that qualitative existence is undecidable for DEC-POMDP already for 2 players. It is to the knowledge of the authors the strongest undecidability result for DEC-POMDPs with infinite horizon which is of its own interest.
	By a reduction from DEC-POMDP to distributed IMCs that adds one player, we get undecidability of qualitative existence for 3 or more players in distributed
    IMCs.  

\section{Distributed Interactive Markov Chains}\label{sec:defs}

%

We denote by $\Rset$, $\Rsetpo$, $\Nset$, and $\Nseto$ the sets of real numbers, non-negative real numbers, positive integers, and non-negative integers, respectively. Furthermore, for a finite set $X$, we denote by $\dist(X)$ the set of \emph{discrete probability distributions} over $X$, i.e. functions $f:X\to[0,1]$ such that $\sum_{x\in X} f(x) = 1$.
Finally, for a tuple $\vect{x}$ from a product space $X_1 \times \cdots \times X_n$ and for $1 \leq i \leq n$, we use functional notation $\vect{x}(i)$ to denote the $i$th element of the tuple.

We first give a definition of a (local) module based on the formalism of Interactive Markov Chains (IMC). Then we introduce (global) distributed IMC. 
%
\begin{mydef}[IMC]
An \emph{IMC} (module) is a tuple $(S, \act, \inter{}, \marko{}, \initstate)$ where

\begin{itemize}

\item $S$ is a finite set of states with an initial state $\initstate$,
\item $\act$ is a finite set of actions,
\item $\inter{} \subseteq S \times \act \times S$ is the \emph{action} transition relation,
\item $\marko{} \subseteq S \times \mathbb{Q}_{> 0} \times S$ is the finite \emph{delay} transition relation.
\end{itemize}
\end{mydef}

\noindent
We write $s \intersmall{a} s'$ when $(s,a,s') \in \inter{}$ and $s \markosmall{\lambda} s'$ when $(s,\lambda, s') \in \marko{}$ ($\lambda$ being the \emph{rate} of the transition). We say that action $a$ is \emph{available} in $s$ if $s \intersmall{a} s'$ for some~$s'$.

\begin{mydef}[Distributed IMC]
A \emph{distributed IMC}  is a tuple $$\structure = ((S_i, \act_i, {\inter{}}_i, {\marko{}}_i, \initstate_{i}))_{1 \leq i \leq n}$$ of modules for players $\agt = \{1,...,n\}$. 
Furthermore, by $\act = \bigcup_{i} \act_i$ we denote the set of all actions, and by $S = S_1 \times ... \times S_n$ the set of (global) states.
\end{mydef}

Intuitively, a distributed IMC moves in continuous-time from a (global) state to a (global) state using transitions with labels from $\lab = \act \cup \agt$: 
%
\begin{itemize}
	\item An action transition with label $a \in \act$ corresponds to \emph{synchronous} communication of all players in $\syn(a) := \{j \in \agt \mid a \in \act_j \}$ and can only be taken when it is \emph{enabled}, i.e. when all these players choose their local transitions with action $a$ at the same time.
It is called a \emph{synchronisation} action if $|\syn(a)| \geq 2$ and a \emph{private} action if $|\syn(a)| = 1$.
	\item A delay transition of any player $j \in \agt$ is taken \emph{independently} by the player after a random delay, i.e. the set of players that synchronise over label $j$ is $\syn(j) = \{j\}$.
\end{itemize}

Formally, the (local) \emph{choices of player $j$} range over $\choice_j = \inter{}_j \cup \{\bot\}$. When in (local) state $s$, the player may pick only a choice \emph{available in $s$}. That is, either an action transition of the form $s \intersmall{a} s'$ or $\bot$ if there is no such action transition.
We define \emph{global choices} as $\choice = \choice_1 \times \cdots \times \choice_n$. 
A global choice $\vect{c}$ induces the set $\enab(\vect{c}) = \{a \in \act \mid \forall j \in \syn(a) : \vect{c}(j) = (\cdot, a,\cdot) \}$ of actions  \emph{enabled} in $\vect{c}$.

To avoid that time stops by taking infinitely many action steps in zero time, 
we pose a standard assumption
 prohibiting cycles~\cite{HKK13,conf/mmb/HermannsJ08,imca,mrmc,Kat09b}: we require that for every action $a\in\act$ there is a player $j\in\syn(a)$ such that the labelled transition system $(S_j,\inter{}_j)$ does not have any cycle involving action $a$.

%

The behaviour of a distributed IMC is a \emph{play}, an infinite sequence
$$\rho = \vect{s_0} \vect{c_0} \arrow{a_1,t_1}{} \vect{s_1} \vect{c_1} \arrow{a_2,t_2}{} \vect{s_2} \vect{c_2}  \cdots $$
where each $\vect{s_i} \in S$ is the state after $i$ moves, $\vect{c_i} \in \choice$ is the choice of the players in the state $\vect{s_i}$, and $a_{i+1} \in \lab$ and $t_{i+1} \in \Rsetpo$ are the label and the absolute time of the next transition taken. By $\play$ we denote the set of all plays. 
Which play is taken depends on the \emph{strategies} of the players, on the \emph{scheduler} which resolves interleaving of communication whenever multiple actions are enabled, and on the rules (involving randomness) given later.

\subsection{Schedulers and strategies}
First we define strategies and schedulers basing their decision on the current local and global history, respectively. A \emph{(global) history} is a finite prefix 
$$h = \vect{s_0} \vect{c_0} \arrow{a_1,t_1}{} \cdots \arrow{a_{i},t_{i}}{} \vect{s_i}$$
of a play ending with a state. For given $h$, we get the \emph{local history} of player $j$ as
$$\pi_j(h) = \vect{s'_{0}}(j) \vect{c'_{0}}(j) \arrow{a'_1,t'_1}{} \cdots \arrow{a'_{\ell}, t'_{\ell}}{} \vect{s'_{\ell}}(j)$$
where $\vect{s'_0} \vect{c'_0} \arrow{a'_1,t'_1}{} \cdots \arrow{a'_{\ell}, t'_{\ell}}{} \vect{s'_\ell}$ is the subsequence of $h$ omitting all steps not visible for player $j$, i.e. all $\arrow{a_{m}, t_{m}}{} \vect{s_m} \vect{c_m}$ with $j \not\in\syn(a_m)$. The set of all global histories is denoted by $\hist$; the set of local histories of player $j$ by $\hist_j$. 

\begin{example}\label{ex2}
 Consider again Example~\ref{ex1}. Let $\app$ be controlled by player 1 and $\att$ by player 2. For the following history we get corresponding local histories 
 \begin{align*}
h &= (c_0,\bar{c}_1) (\bot, \login) \arrow{1, 0.42}{} (t_1,\bar{c}_1) (\login, \login) \arrow{\login, 0.42}{} (t_2,\bar{c}_2), \\
\pi_1(h) &= c_0~ \bot \arrow{1, 0.42}{} t_1 ~ \login \arrow{\login,0.42}{} t_2,
\qquad \pi_2(h) = \bar{c}_1 ~\login \arrow{\login,0.42}{} \bar{c}_2
 \end{align*}
Note that the attacker can neither observe the Markovian transition nor the local state of the \app{}. The \app{} cannot observe the local state of the attacker either, but it can be deduced from the local history of the \app{}.
\end{example}

A \emph{strategy} for player $j$ is a measurable
function $\sigma: \hist_j \rightarrow \dist(\choice_j)$ that assigns to any local history $h$ of player $j$ a probability distribution over choices available in the last state of $h$. We say that a strategy $\sigma$ for player $j$ is 
\emph{pure} if for all $h$ we have $\sigma(h)(c) = 1$ for some $c$; and
\emph{memoryless} if for all $h$ and $h'$ with equal last local state we have $\sigma(h) = \sigma(h')$.
%

A \emph{scheduler} is a measurable function  $\delta: \hist \times \choice \to \dist(\act) \cup \{\bot\}$ that assigns to any global history $h$ and global choice $\vect{c}$ a probability distribution over actions enabled in $\vect{c}$; or a special symbol $\bot$ again denoting that no action is enabled.

\begin{example}
	The available local choices in $(t_2,\bar{c}_2)$, the last state of $h$ from above, are $\{(t_2, \mathit{lookup}, t_3), (t_2, \mathit{logout}, t_1) \}$ for $\app$  and solely $\{(\bar{c}_2, \mathit{lookup}, \bar{c}_3)\}$ for $\att$. Let the strategy of $\app$ select either choice with equal probability. 
If $(t_2, \mathit{lookup}, t_3)$ is chosen, $\mathit{lookup}$ is enabled and must be picked by the scheduler $\sigma$. 
If $(t_2, \mathit{logout}, t_1)$ is chosen, no action is enabled and $\delta$ must pick $\bot$, waiting for a delay transition.
\end{example}

\subsection{Probability of plays} Let us fix a \emph{profile of strategies} $\vect{\sigma} = (\sigma_1,\ldots,\sigma_n)$ for individual players, and a scheduler $\delta$. The play starts in the initial state $\vect{s_0} = (\initstate_1, \ldots ,\initstate_n)$  and inductively evolves as follows. Let the current history be $h = \vect{s_0} \vect{c_0} \arrow{a_1,t_1}{} \cdots \arrow{a_{i},t_{i}}{} \vect{s_i}$.
\begin{itemize}
	\item For the next choice $\vect{c_i}$, only players $P_i := \syn(a_i)$ involved in the last transition freely choose (we assume $P_0 := \agt$). Hence, independently for every $j \in P_i$, the choice $\vect{c_i}(j)$ is taken randomly according to $\sigma_j(\pi_j(h))$. All remaining players $j \not\in P_i$ stick to the previous choice $\vect{c_i}(j) = \vect{c_{i-1}}(j)$ as for them, no observable event happened. 
	\item After fixing $\vect{c_i}$, there are two types of transitions:
	\begin{enumerate}
		\item If $\enab(\vect{c_i}) \neq \emptyset$, the next synchronization action $a_{i+1} \in \enab(\vect{c_i})$ is chosen randomly according to $\delta(h,\vect{c_i})$ and taken immediately at time $t_{i+1} := t_i$.
		The next state $\vect{s_{i+1}}$ satisfies for every $j \in \agt$:
		$$
		\vect{s_{i+1}}(j) = \begin{cases}
		\mathrm{target}(\vect{c_i}(j))  & \text{if $j \in \syn(a_{i+1})$,} \\
		\vect{s_{i}}(j) & \text{if $j \not\in \syn(a_{i+1})$.}
		\end{cases}
		$$
		where $\mathrm{target}(\vect{c_i}(j))$ denotes the target of the transition chosen by player $j$.
		In other words, players involved in synchronisation move according to their choice, the remaining players stay in their previous states.
		\item If $\enab(\vect{c_i}) = \emptyset$, a local delay transition is taken after a random delay, chosen as follows. Each delay transition $\vect{s_i}(j) \markosmall{\lambda} \cdot$ outgoing from the current local state of any player $j$ is assigned randomly a real-valued delay according to the exponential distribution with rate $\lambda$. This results in a collection of real numbers. The transition $\vect{s_i}(\ell) \markosmall{\lambda} s$ with the minimum delay $d$ in this collection is taken. 
%
		Hence, $a_{i+1} := \ell$ (denoting that player $\ell$ moves), $t_{i+1} := t_i + d$, and the next state $\vect{s_{i+1}}$ satisfies for every $j \in \agt$:
		$$
		\vect{s_{i+1}}(j) = \begin{cases}
		s  & \text{if $j \in \syn(a_{i+1}) = \{\ell\}$,} \\
		\vect{s_{i}}(j) & \text{if $j \not\in \syn(a_{i+1})$.}
		\end{cases}
		$$
	\end{enumerate}
\end{itemize}

\noindent
All these rules induce a probability measure $\pr^{\vect{\sigma},\delta}$ over the set of all plays by a standard cylinder construction. 

\subsection{Distributed synthesis problem}
We study the following two fundamental reachability problems for distributed IMCs.
	Let $\structure$ be a distributed IMC, $T \subseteq S$ be a target set of states, and $p$ be a rational number in $[0,1]$.
Denoting by $\diamond T$ the set of plays $\rho$ that reach a state in $T$ and stay there for a non-zero amount of time, we focus on:
	\begin{description}
		\item[Existence] Does there exist a strategy profile $\vect{\sigma}$ s.t. for all schedulers $\delta$,		
		$$ \; \pr^{\vect{\sigma},\delta}(\diamond T) \geq p \; ?$$
		\item[Value] Can the value $p$ be arbitrarily approached, i.e. do we have
		$$\;\sup_{\vect{\sigma}} \inf_\delta \pr^{\vect{\sigma},\delta}(\diamond T) \geq p \; ?$$
	\end{description}	
	We refer to the general problem with $p \in [0,1]$ as \emph{quantitative}. When we restrict to $p=1$, we call the problem \emph{qualitative}.

%
%
%

\section{Schedulers are not that powerful}\label{sec:results-schedulers}

The task of a scheduler is to choose among concurrently enabled
transitions, thereby resolving the non-determinism conceptually caused
by interleaving. In this section, we address 
the impact of the decisions of the scheduler.
%
We 
show that despite having the ability to affect the order in which transitions are taken in the \emph{global} play, the scheduler cannot affect what every player observes \emph{locally}.
Thus, the scheduler affects neither the choices of any player nor what synchronisation occurs. 
As a result,
for winning objectives that are closed under 
\emph{local observation equivalence}, the scheduler cannot affect the probability of winning.
\begin{wrapfigure}[6]{r}{0.35\linewidth}
	\vspace{-1.7em}
	\begin{tikzpicture}[yscale=0.35,xscale=0.40]
	\scriptsize
	
	\tikzstyle{every node}=[ellipse, draw=black,
	inner sep=0pt, minimum width=11pt, minimum height=11pt]
	
	\draw (2,9) node [label=below right:$s_0$] (s0)	{$\bullet$};
	\draw (2,7) node [label=below right:$s_1$] (s1)	{};
	\draw (2,5) node [label=below right:$s_2$] (s2)	{};
	\draw (2,3) node [label=below right:$s_3$] (s3)	{};

	\draw (5,9) node [label=below right:$t_0$] (t0)	{$\bullet$};
	\draw (5,7) node [label=below right:$t_1$] (t1)	{};
	\draw (5,5) node [label=below right:$t_2$] (t2)	{};

	\draw (8,9) node [label=below right:$u_0$] (u0)	{$\bullet$};
	\draw (8,7) node [label=below right:$u_1$] (u1)	{};

	\draw (11,9) node [label=below right:$v_0$] (v0)	{$\bullet$};
	\draw (11,7) node [label=below right:$v_1$] (v1)	{};

	\draw (1.5,10) node [draw=none] (c1)	{$C_1$};
	\draw (4.5,10) node [draw=none] (c2)	{$C_2$};
	\draw (7.5,10) node [draw=none] (c2)	{$C_3$};
	\draw (10.5,10) node [draw=none] (c2)	{$C_4$};

	\path[->] (s0) edge node [left, draw=none] {$\lambda$} (s1);
	\path[->] (s1) edge node [left, draw=none] {$a$} (s2);
	\path[->] (s2) edge node [left, draw=none] {$b$} (s3);

	\path[->] (t0) edge node [left, draw=none] {$a$} (t1);
	\path[->] (t1) edge node [left, draw=none] {$c$} (t2);

	\path[->] (u0) edge node [left, draw=none] {$b$} (u1);

	\path[->] (v0) edge node [left, draw=none] {$c$} (v1);
	\end{tikzpicture}

	\end{wrapfigure}

\begin{example}
 Consider the distributed IMC to the right. After the delay transition is taken in $C_1$ and there is synchronisation on action $a$, the scheduler can choose whether there will be synchronisation on $b$ or $c$ first. However, it can only affect the interleaving, not any of the local plays.
\end{example}

For a play $\rho = \vect{s_0} \vect{c_0} \arrow{a_1,t_1}{} \vect{s_1} \vect{c_1} \arrow{a_2,t_2}{} \vect{s_2} \vect{c_2}  \cdots $ we define the local play $\pi_j(\rho)$ of player $j$ analogously to local histories.\mycut{ It is obtained by removing all steps $\arrow{a_{m}, t_{m}}{} \vect{s_m} \vect{c_m}$ where the player is not involved, i.e. $j \not\in\syn(a_m)$, and by replacing in the subsequence all global states and choices by the local states and choices of player $j$.} We define \emph{local observation} equivalence $\sim$ over plays by setting $\rho \sim \rho'$ if $\pi_j(\rho) = \pi_j(\rho')$ for all $j\in \agt$. Let us stress that two local observation equivalent plays have exactly the same action and delay transitions happening at the same moments of time; only the order of action transitions happening at the same time can differ.
Finally, we say that a set $E$ of plays is \emph{closed under local observation equivalence} if for any $\rho \in E$ and any $\rho'$ such that $\rho \sim \rho'$ we have $\rho' \in E$. It is now possible to show the following.

\begin{mytheo}
	\label{theo:sched_power}
	Let $E$ be a measurable set of plays closed under local observation equivalence. For any strategy profile $\vect{\sigma}$ and schedulers $\delta$ and $\delta'$ we have
	$$\pr^{\vect{\sigma}, \delta}(E) = \pr^{\vect{\sigma}, \delta'}(E).$$
\end{mytheo}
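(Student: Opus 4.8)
The plan is to reduce the claim to a statement that makes no reference to the scheduler at all, namely that the joint law of the tuple of local plays is the same under every $\delta$. Write $\Pi(\rho) = (\pi_1(\rho),\dots,\pi_n(\rho))$ for the local-observation map, so that $\rho \sim \rho'$ iff $\Pi(\rho) = \Pi(\rho')$. If $E$ is closed under $\sim$ then $E = \Pi^{-1}(\Pi(E))$, hence $\pr^{\vect{\sigma},\delta}(E) = (\Pi_* \pr^{\vect{\sigma},\delta})(\Pi(E))$, where $\Pi_*$ denotes the pushforward onto the image $\sigma$-algebra (the largest one making $\Pi$ measurable, which does not depend on $\delta$). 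Thus it suffices to show $\Pi_* \pr^{\vect{\sigma},\delta} = \Pi_* \pr^{\vect{\sigma},\delta'}$; by the $\pi$--$\lambda$ theorem this can be checked on the $\pi$-system of cylinders of local plays, i.e.\ events fixing finite prefixes of each $\pi_j(\rho)$ together with rational timing windows for the delay transitions.

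The combinatorial heart is a \emph{confluence} observation about actions that are enabled at one and the same time instant. If $a \neq b$ are both in $\enab(\vect{c})$ for a global choice $\vect{c}$, then $\syn(a) \cap \syn(b) = \emptyset$: a common player $j$ would have to satisfy $\vect{c}(j) = (\cdot,a,\cdot)$ and $\vect{c}(j) = (\cdot,b,\cdot)$ at once, forcing $a=b$. Consequently, firing $a$ (which updates and lets re-choose only the players of $\syn(a)$, while all others keep their commitment) neither disables $b$ nor alters the local history of any player in $\syn(b)$; symmetrically for firing $b$ first. Moreover each $\pi_j$ omits every step with $j \notin \syn(\cdot)$ and all these immediate steps carry the same timestamp, so the relative order in which the simultaneously enabled actions are fired is invisible in every local play. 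Since the acyclicity assumption forbids infinitely many action steps in zero time, the burst of immediate actions between two consecutive delays saturates after finitely many firings, and the induced family $(\pi_j)_j$ of local observations over that instant is the same for every firing order.

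With the confluence observation in hand, I would finish by a coupling together with an induction on the number of delay epochs. Realise all the intrinsic randomness --- the players' (re-)choices, indexed by the local histories that trigger them, and the exponential delay races --- on one probability space built independently of $\delta$. Given such a sample, each scheduler $\delta$ assembles a global play, but by the previous paragraph the re-choice of a player $j \in \syn(a_{i+1})$ is driven by $\pi_j$, which is insensitive to the ordering the scheduler imposes within a burst; hence the saturated burst, and then the delay that ends the epoch, are the same $\delta$-independent functions of the common sample. Inductively, the whole tuple $\Pi(\rho)$ is one fixed measurable function of the common randomness, so its law does not depend on $\delta$. Integrating out the delays uses only the memorylessness of the exponential race, which is itself scheduler-independent, and establishes the required equality of pushforwards, hence of $\pr^{\vect{\sigma},\delta}(E)$.

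The step I expect to be the main obstacle is not the confluence combinatorics, which is elementary, but the measure-theoretic bookkeeping needed to turn ``the order does not matter pathwise'' into ``the laws coincide'' uniformly over \emph{arbitrary} randomised, history-dependent schedulers in continuous time. Concretely: building the common randomness space that indexes the uncountably many possible local histories, verifying that the order-independent saturation map and $\Pi$ are measurable, and justifying that equality on the cylinder $\pi$-system of local plays extends to the full $\sigma$-algebra of $\sim$-closed events. I would isolate the continuous-time aspect by conditioning on the discrete sequence of choices and treating each delay via its exponential density, so that the remaining argument is exactly the purely discrete confluence/coupling described above.
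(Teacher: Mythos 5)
Your proposal is correct and follows the same three-part architecture as the paper's proof: (i) a combinatorial order-independence argument for the zero-duration bursts between delays, (ii) an induction over delay epochs lifted to all $\sim$-closed events via a generating class of cylinders and standard measure theory, and (iii) a Kuhn-style treatment of randomised strategies (your coupling indexed by local histories is exactly the behavioural-to-mixed decomposition the paper invokes). The one place where you genuinely diverge is the combinatorial core. The paper fixes a \emph{pure} profile and shows that any two maximal zero-duration histories induce identical local histories by an infinite-descent argument: assuming two schedulers produce local histories of different lengths for some player, it extracts an infinite strictly decreasing chain in a well-founded order on pairs (action, occurrence count), a contradiction. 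You instead prove a one-step diamond property --- simultaneously enabled actions have disjoint synchronisation sets, so firing one neither disables the other nor touches the local histories of its participants --- and combine it with termination of the burst (from the acyclicity assumption) to get confluence of the saturated burst. Both are sound; your route makes the mechanism (why reordering is locally invisible) more transparent and handles randomised strategies in one pass via the coupling, whereas the paper's descent argument yields the slightly stronger intermediate statement that under pure profiles each interleaving-abstract cylinder has probability $0$ or $1$, which it then feeds into an explicit integral computation. Your own assessment of the remaining work is accurate: the measure-theoretic bookkeeping (conditioning on delays so that only finitely many local histories arise per burst, and the $\pi$--$\lambda$ extension) is where the paper also spends most of its formal effort, and nothing in your plan breaks there.
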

\mycut{
\begin{proof}[Sketch]
The proof of Theorem \ref{theo:sched_power} is a bit technical and therefore relegated to the appendix. It can be divided into three main parts. The overall idea of the three parts are
\begin{enumerate}
	\item Showing that for pure strategy profiles, the schedulers cannot affect anything but the interleaving in a maximal 0-duration sequence of actions;
	
	\item A proof showing that 1. can be lifted to all events $E$ closed under observation equivalence. This is done in two steps. First, it is shown to be the case for all interleaving abstract cylinders by induction using 1. Second, standard results from measure theory imply the result for all measurable sets of plays closed under local observation equivalence;
	
	\item Extending 2. from pure strategies to arbitrary strategies can be done reusing ideas from classical game theory on mixed and behavioural strategies in extensive-form games \cite{Kuhn53}.
\end{enumerate}
\end{proof}
}

As a result, for the rest of the paper we write $\pr^{\vect{\sigma}}(E)$ instead of $\pr^{\vect{\sigma},\delta}(E)$ since the scheduler cannot affect the probability of the events we consider. Indeed, the reachability objectives defined in the previous section are closed under local observation equivalence.\mycut{ Indeed, if a non-empty interval $[t,t')$ of time is spent in a target state $(s_1,\ldots,s_n) \in T$, the local play of every player $j$ has to visit $s_j$ at time $\leq t$ and leave it at time $\geq t'$. Then, every global play with such local plays belongs to $\diamond T$.}

\begin{remark}
The fact that interleaving does not have decisive impact in continuous
time may seem natural and thus possibly unsurprising to
experts. Yet, the result does not hold for many small variations
of the setting we consider, e.g. neither for asymmetric communication
nor when allowing cycles of action transitions.
\end{remark}


\mycut{
\subsection{The previous version}

Schedulers are responsible for choosing between concurrently enabled
transitions, thereby resolving the non-determinism conceptually caused
by interleaving. Since a scheduler cannot affect delay transitions,
she can effectively only resolve non-determinism within subhistories
of duration 0. And no matter how it resolves this non-determinism we
can still predict the behaviour of the system quite well.  We will show
that despite having the ability to affect the order in which
transitions are taken, the schedulers cannot affect the probability
distribution of local plays for any strategy profiles. This means that
if we consider winning objectives that are invariant under the
interleaving in 0-duration subhistories then the scheduler cannot
affect anything. To formalise this intuition we need to introduce a
few definitions. 

\jnote{better intuition: reordering unobservable for the players. clean up: why are $\pi_j(H)$ plays and $[H]$ histories? Maybe no need for $[H]$ at all, can define $\cyl^{ia}$ directly.}
For an interval-timed history $H$ let $\pi_j(H) = \{\pi_j(\rho) \mid \rho \in \cyl(H) \}$ \jnote{did change this, ok?}. Further, let $\sim$ be an equivalence relation on interval-timed histories defined such that $H \sim H'$ for two interval-timed histories $H$ and $H'$ if and only if $\pi_j(H) = \pi_j(H')$ for every $j \in \agt$. We write $[H]$ for the set of global histories $h$ such that there exists $H' \sim H$ such that $h$ conforms to $H'$. 

Now, for an interval-timed history $H = \vect{s_0} \vect{c_0} \arrow{a_1,I_1}{} \vect{s_1} \vect{c_1}  \cdots \arrow{a_k,I_k}{} \vect{s_k}$ such that the last action $a_k \in \agt$ if $k > 0$ we define the interleaving-abstract cylinder as the set of plays whose prefix of length $k$ conforms to interval-timed histories $H'$ assuring $H \sim H'$:
$$ \cylta(H) =\{ \rho \in \play \mid \rho_{\leq k} \in [H]\}$$
Note that the interleaving-abstract cylinders are contained in the $\sigma$-algebra $\sigmafield$ generated by the cylinder sets. We can therefore define a sub-$\sigma$-algebra $\sigmafieldb$ of $\sigmafield$ generated by interleaving-abstract cylinders
\begin{align*}
\sigmafieldb = \sigma(\{\cylta(H) \mid H \textup{ is an interval-timed history s.t. the last action is not in } \act \})
\end{align*}
Since this is a sub-$\sigma$-algebra of $\sigmafield$ it inherits the probability measures defined earlier restricted to $\sigmafieldb$. Note that interleaving-abstract events are also 0-time abstract. That is, they are events which are invariant under reordering of  0-time interactions. Indeed, if no player can distinguish two histories $h$ and $h'$, then the delay transitions must be the same in the two histories. Further, since all players have access to global time it must be the same actions that are performed in $h$ and $h'$ in every 0-duration sub-history Now, the only way in which $h$ and $h'$ can differ is in the interleaving of these 0-duration subhistories. 

The proof that schedulers cannot affect the probability of
interleaving-abstract events is the topic of the remainder of this
section.
\begin{mytheo}
\label{theo:sched_power}
Let $E \in \sigmafieldb$, let $\vect{s_0} \in S$ be a state, let $\vect{\sigma} = (\sigma_1,...,\sigma_n)$ be a strategy profile and $\delta,\delta'$ be two schedulers. Then

$$\pr_{\vect{s_0}}^{\vect{\sigma}, \delta}(E) = \pr_{\vect{s_0}}^{\vect{\sigma}, \delta'}(E)$$
\end{mytheo}
The proof of Theorem \ref{theo:sched_power} is a bit technical and therefore relegated to the appendix. It can be divided into three main parts. The overall idea of the three parts are
\begin{enumerate}
 \item Showing that for pure strategy profiles, the schedulers cannot affect anything but the interleaving in a maximal 0-duration sequence of actions;
 
 \item An induction proof showing that 1. can be lifted to all events $E \in \sigmafieldb$;
 
 \item Extending 2. from pure strategies to arbitrary strategies.
 \end{enumerate}
 
Note that the reachability objectives defined in the previous section are events in $\sigmafieldb$. This is because while the scheduler can affect the interleaving that happens in 0-duration fragments it cannot affect the states in which the play stays for a non-zero amount of time. For the rest of the paper we will use the notation $\pr_{\vect{s_0}}^{\vect{\sigma}}(E)$ for the probability of events $E$ under strategy profile $\vect{\sigma}$ from initial state $\vect{s_0}$ since all events considered from here on are in $\sigmafieldb$ and the scheduler cannot affect the probability of these events.}
\section{Undecidability Results}\label{sec:results-reductions}

In this section, we put distributed IMCs into context of other partial-observation models. 
As a result, we show that reachability quickly gets undecidable here.

\begin{mytheo}\label{thm:undecidable}
	For distributed IMCs we have that
	\begin{enumerate}
		\item the qualitative value, quantitative value, and quantitative existence problems are undecidable with $n \geq 2$ players; and
		\item the qualitative existence problem is $\exptime$-hard with $n = 2$ players and undecidable with $n \geq 3$ players.
	\end{enumerate}
\end{mytheo}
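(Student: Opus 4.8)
\noindent The plan is to obtain every clause by a reduction from a classical partial-observation model, transferring its known lower bound to distributed IMCs. The common engine is a simulation in which one dedicated \emph{environment} player holds the hidden state of the source model and realises each probabilistic transition by a \emph{single} race of exponential delays: from an auxiliary state I would install one delay edge to each possible successor with rate proportional to the desired (rational) transition probability. Since a race of exponentials with rates $\lambda_1,\dots,\lambda_k$ fires at a time that is exponentially distributed with rate $\sum_i\lambda_i$ \emph{independently} of which edge wins, the elapsed time carries no information about the drawn successor. These races are private to the environment player, so by the local-history semantics the controllers never see them; and because every controller state is built \emph{urgent} (no delay self-loop), no controller can linger or re-decide during such a race, and thus cannot encode anything in the passage of the global clock. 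This is exactly the point where continuous time must be shown to grant \emph{no} extra power, in sharp contrast to the non-urgent regime: the instances are engineered so that timing is useless, whence each controller strategy is, up to irrelevant time-dependence, an observation-based strategy of the source model, $\pr^{\vect{\sigma}}(\diamond T)$ equals the source reachability probability, and by Theorem~\ref{theo:sched_power} the scheduler is irrelevant throughout.

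For item~(1) I would reduce from probabilistic finite automata, i.e.\ a single \emph{blind} controller. Two players suffice: player~$1$ emits input letters by committing to the matching synchronisation actions, while the environment player~$2$ updates the hidden automaton state through the race gadget above; every letter is offered in every automaton state, so player~$1$ gets no feedback. Then $\sup_{\vect{\sigma}}\pr^{\vect{\sigma}}(\diamond T)=\sup_w \Pr_A(w)$, as mixing letters cannot beat the best word. Plugging in the three classical undecidable PFA questions yields the three claims already for $n=2$: the value-$1$ problem gives undecidability of \emph{qualitative value}, the threshold emptiness problem gives undecidability of \emph{quantitative existence}, and the value problem gives undecidability of \emph{quantitative value}.

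For item~(2) the two bounds come from richer sources. For \exptime-hardness at $n=2$ I would reduce from almost-sure reachability in POMDPs, which is \exptime-complete; here the environment player must additionally \emph{reveal observations}, which I implement by letting it offer, in each state, only the synchronisation label of that state's observation, so that player~$1$'s local history is precisely an observation–action sequence, with states sharing an observation sharing the label (genuine partial information). A witnessing observation-based strategy reaching the target almost surely corresponds to a profile solving the qualitative existence instance, and conversely. For undecidability at $n\ge 3$ I would use the multi-agent version: first establish that \emph{qualitative existence} is undecidable for $2$-agent DEC-POMDPs — the novel ingredient, which I would prove by encoding an undecidable computation (e.g.\ halting of a two-counter machine) so that the two agents, under disjoint partial observation, must cooperate to simulate it — and then simulate a $2$-agent DEC-POMDP by $2$ controller players plus $1$ environment player, i.e.\ $n=3$, reusing the same gadget with the environment synchronising observation labels separately with each agent.

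The main obstacle is twofold. Conceptually, the crux is proving that each constructed continuous-time game faithfully mirrors its discrete-time source, i.e.\ that no profile can exploit the shared clock to leak hidden information; this is precisely what urgency of controller states together with the outcome-independence of exponential races buys, and it is what fails the moment delay self-loops (non-urgency) are admitted, as the non-urgent discussion shows. Technically, the genuinely new and hardest step is the $2$-agent DEC-POMDP qualitative-existence undecidability: unlike the single-agent POMDP case, where almost-sure reachability is decidable~\cite{BGB12}, here two independently observing agents must be forced to simulate unbounded computation, giving the strongest such lower bound we are aware of. Everything else — the letter and observation gadgets, the probability-preserving correspondence between strategies, and measurability — I expect to be routine given these simulation invariants and Theorem~\ref{theo:sched_power}.
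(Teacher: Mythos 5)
Your high-level architecture matches the paper's: both route item~(1) through classical blind/one-agent undecidability results (Paz, the PFA value problem, the value-$1$ problem) realised by adding one ``environment''/main player, and both route item~(2) through \exptime-completeness of almost-sure reachability for POMDPs and through a new undecidability result for qualitative existence in $2$-agent DEC-POMDPs, lifted to $n\ge 3$ by adding the environment player. However, there is a genuine gap at exactly the point you flag as ``the genuinely new and hardest step'': the undecidability of qualitative existence for $2$-agent DEC-POMDPs is the load-bearing novel contribution behind item~(2), and your proposal only restates the goal (``encode an undecidable computation so that the two agents must cooperate to simulate it'') without a construction. This cannot be treated as an expected-to-work black box, because the single-agent analogue is \emph{decidable}~\cite{BGB12}, so the argument must isolate precisely what two mutually unobserving agents add. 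The paper's proof reduces from non-halting of a deterministic TM and hinges on a randomised dispatch ($s_0$ branching with probability $1/3$ into three sub-modules) that forces, by induction on the step at which each player receives a distinguished observation, that any almost-surely winning profile plays the $j$th TM configuration at step $j$: one sub-module checks the initial configuration, one checks that the two players emit \emph{identical} sequences (the paper notes that randomness is essential here), and one checks the successor-configuration relation with a one-step offset between the players. Without this (or an equivalent) mechanism for forcing consistency between two agents who never see each other, your reduction does not go through.

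A secondary, smaller gap: your simulation gadget does not explain how the environment module \emph{receives} a letter under symmetric handshake semantics, where each player must commit to a single action transition at a time. If the environment state simply offers all letters, the environment player (who is also existentially quantified in the strategy profile) must commit to one of them without knowing which one the controller will propose, yielding deadlocks with positive probability, and moreover giving that player nontrivial choices that could corrupt the simulation. The paper resolves this with the round-robin delay cycle through $r$ states, each enabling exactly one letter, so that the inputting player has no choice and the elapsed time reveals nothing; some such gadget is needed and is not ``routine'' in the sense of being omittable. Your observation that outcome-independence of exponential races plus urgency prevents timing leaks is correct and is indeed the reason the reduction is faithful, but it must be combined with the single-enabled-action property of the main module to conclude that the added player and the scheduler are both powerless.
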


Theorem~\ref{thm:undecidable} is obtained by using two fundamental results. First, we provide a novel (and somewhat surprising) result for \emph{decentralized POMDPs} (DEC-POMDPs) \cite{BGIZ02}, an established multi-player generalization of POMDPs. We show that the qualitative existence problem for DEC-POMDPs is undecidable already for $2$ players. This is, to the knowledge of the authors, currently the strongest known undecidability result for DEC-POMDPs. Second, we show that distributed IMCs are not only more expressive (w.r.t. reachability) than POMDPs but also more expressive than DEC-POMDPs. We show it by reducing reachability in
DEC-POMDPs with $n$ players to reachability in
distributed IMCs with $n+1$ players. Theorem~\ref{thm:undecidable} follows from these two results and from known results about POMDPs \cite{Paz71,BMT77,GO10}. For an overview, see Table \ref{table:undec}.

\begin{table}[!t]
\renewcommand{\arraystretch}{1.2}

\centering
\begin{tabular}{@{\;}l@{\;\;}|@{\;\;}c@{\;\;}|@{\;\;}c@{\;\;}|@{\;\;}c@{\;}}
\hline
 & POMDPs & DEC-POMDPs & Distributed IMCs\\
\hline\hline
Qual. Existence & Dec. \cite{BGB12} & Undec. for $\geq 2$ players & Undec. for $\geq 3$ players \\
\phantom{Qual.} Value & Undec. \cite{GO10} & Undec. for $\geq 1$ player \cite{GO10} & Undec. for $\geq 2$ players \\
\hline
Quant. Existence & Undec. \cite{Paz71} & Undec. for $\geq 1$ player \cite{Paz71} & Undec. for $\geq 2$ players \\
\phantom{Quant.} Value & Undec. \cite{BMT77} & Undec. for $\geq 1$ player \cite{BMT77} & Undec. for $\geq 2$ players
\end{tabular}
\vspace*{2mm}
\caption{Undecidability results for reachability. Unreferenced results are shown here.}
\label{table:undec}
\end{table}

\subsection{Decentralized POMDP (DEC-POMDP)}

We start with a definition of the related formalism of decentralized POMDP \cite{BGIZ02}.

\begin{mydef} A DEC-POMDP is a tuple $(S, \agt, (\act_i, \obs_i)_{1 \leq i \leq n}, \transfunc, \obsfunc, \initstate)$ where

\begin{itemize}
 \item $S$ is a finite set of global states with initial state $\initstate \in S$,
 
 \item $\agt = \{1,...,n\}$ is a finite set of players,
 
 \item $\act_i$ is a finite set of local actions of player $i$ with $\act_i \cap \act_j = \emptyset$ if $j \neq i$, (by $\act = \act_1 \times \cdots \times \act_n$ we denote the set of global actions),
 
  \item $\obs_i$ is a finite set of local \emph{observations} for player $i$, \\
  (by $\obs = \obs_1 \times \cdots \times \obs_n$ we denote the set of global observations),
 
 \item $\transfunc: S \times \act \rightarrow \Delta(S)$ is the transition function which assigns to a state and a global action a probability distribution over successor states, and
 
 \item $\obsfunc: S \times \act \times S \rightarrow \Delta(\obs)$ is the \emph{observation function} which assigns to every transition a probability distribution over global observations.

\end{itemize}
 
\end{mydef}


In contrast to distributed IMCs that capture flow of time explicitly, DEC-POMDP is a discrete-time formalism.
A DEC-POMDP starts in the initial state $\initstate$. Assuming that the current state is $s$, one discrete step of the process works as follows.
%
%
First, each player $j$ chooses an action $a_j$. Then the next state $s'$ is chosen according to the probability distribution $\transfunc(s, \vect{a})$ where $\vect{a} = (a_1,\ldots,a_n)$. Then, each player $j$ receives an observation $o_j \in \obs_j$ such that the observations $\vect{o} = (o_1,...,o_n)$ are chosen with probability $\obsfunc(s, \vect{a},s')(\vect{o})$. Repeating this forever, we obtain a \emph{play} which is an infinite sequence $\rho = s_0 \vect{a_0} \vect{o_0} s_1 \vect{a_1} \vect{o_1} \cdots$ where $s_0 = \initstate$ and for all $i \geq 0$ it holds that $s_i \in S$, $\vect{a_i}\in \act$, and $\vect{o_i} \in \obs$.
%
%
Note that the players can only base their decisions on the sequences of observations they receive rather than the actual sequence of states which is not available to them. 
%
For a more complete coverage of DEC-POMDPs, see~\cite{BGIZ02}.

\subsection{Reduction from DEC-POMDP}

First we present the reduction from a DEC-POMDP $\pomdp$ to a distributed IMC $\img$. In this subsection, we write $\pr^{\vect{\sigma}}_\pomdp$ or $\pr^{\vect{\sigma}}_\img$ instead of $\pr^{\vect{\sigma}}$ to distinguish between the probability measure in the DEC-POMDP from the probability measure in the distributed IMC. 

\begin{myprop}
\label{prop:reduc_decpomdp}
 For a DEC-POMDP $\pomdp$ with $n$ players and a target set $T$ of states of $\pomdp$ we can construct in polynomial time a distributed IMC $\img$ with $n+1$ players and a target set $T'$ of global states in $\img$ where:
 $$\exists \vect{\sigma}:  \pr^{\vect{\sigma}}_\img(\diamond T) = p 
 \;\; \iff \;\; 
 \exists \vect{\sigma}': \pr^{\vect{\sigma}'}_\pomdp(\diamond T') = p.
 $$
%

 
\end{myprop}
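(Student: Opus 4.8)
The plan is to build $\img$ from $\pomdp$ by adding a single \emph{central} module, played by the new player $n+1$, that stores the current DEC-POMDP global state and drives all randomness, while the $n$ original players each get a small module. One DEC-POMDP step is simulated in three phases: (i) the central module learns every player's chosen action, (ii) a \emph{single} exponential race resolves the joint transition-and-observation outcome, and (iii) each player learns its own observation. The randomness of nature is realised in phase (ii) by giving the central state encoding $(s,\vect{a})$ one delay transition per outcome $(s',\vect{o})$ with rate $\transfunc(s,\vect{a})(s')\cdot\obsfunc(s,\vect{a},s')(\vect{o})$; I only include positive-probability outcomes, so the rates are positive rationals and, crucially, sum to $1$ \emph{independently of $(s,\vect{a})$}. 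Hence the race duration is always $\mathrm{Exp}(1)$ and carries no information about which outcome was drawn. The target $T'$ is the set of global IMC states whose central component encodes some $s\in T$; since such a state is left only by the next $\mathrm{Exp}(1)$ race, it is occupied for a non-zero duration almost surely, so $\diamond T'$ in $\img$ mirrors reaching $T$ in $\pomdp$.

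The central difficulty, and the first thing I would pin down, is that a module must commit to one synchronisation label \emph{before} the handshake, so a receiver cannot simply ``accept'' an unknown value; this affects both the central learning each $a_j$ and each player learning $o_j$. I resolve it by \emph{probing}: the receiver waits in a state that carries a delay self-loop together with one synchronisation transition per candidate value, and commits to a guess. If the guess matches the sender's committed label the handshake fires in zero time and the matched label, recorded in the receiver's local history, reveals the value; if it mismatches, no action is enabled, so the receiver's own delay self-loop fires, which it observes and reads as ``guess refuted'', and it then tries the next candidate. After at most $|\act_j|-1$ (resp.\ $|\obs_j|-1$) refutations the value is fixed with probability one. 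I would sequence the phases so that exactly one receiver probes at a time while every other module sits in an urgent (delay-free) waiting state, guaranteeing that no spurious delay ever fires and that at most one synchronisation is ever enabled. The construction is clearly polynomial in the explicitly tabulated $\pomdp$.

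For the correspondence I would map a player's IMC local history to a DEC-POMDP observation history by discarding timestamps and the self-loops used while probing and keeping only the recovered observation labels; conversely a DEC-POMDP behavioural strategy lifts to an IMC strategy that plays the matching action-labels and probes to read its observations. The central module is designed to have no genuine choice---each of its states offers a single action, or only delays (forcing $\bot$)---so player $n+1$ is a fixed implementation of nature. The simulated state sequence $s_0 s_1\cdots$ then has exactly the DEC-POMDP law, yielding $\pr^{\vect{\sigma}}_\img(\diamond T')=\pr^{\vect{\sigma}'}_\pomdp(\diamond T)$ for corresponding profiles $\vect{\sigma}\leftrightarrow\vect{\sigma}'$, and hence the stated equivalence for every $p$.

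I expect the hard part to be the $(\Rightarrow)$ direction, i.e.\ showing that the extra expressiveness of $\img$ grants no advantage. First I would use Theorem~\ref{theo:sched_power} to erase the scheduler, since $\diamond T'$ is closed under local observation equivalence. Then I must argue that conditioning an IMC strategy on the timestamps, or on \emph{how} it chooses to probe, is useless: because each competing rate collection sums to the constant $1$, every race duration is independent of its discrete outcome, so the only informative part of a player's local history is exactly the recovered observation sequence. This lets me project an arbitrary IMC profile to a DEC-POMDP profile (marginalising over timing and probing order) that achieves the same reachability probability. Verifying this independence cleanly, together with the almost-sure termination of probing and the absence of any unintended enabled synchronisation during the interleaving of phases, is where the real work of the proof lies.
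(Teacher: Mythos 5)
Your overall architecture coincides with the paper's: a main module for player $n+1$ that stores the DEC-POMDP state, collects the players' actions one by one, resolves the joint transition-and-observation outcome by an exponential race with rates $\transfunc(s,\vect{a})(s')\cdot\obsfunc(s,\vect{a},s')(\vect{o})$, and then distributes the observations one by one; your explicit remark that these rates sum to $1$, so the race duration is an uninformative $\mathrm{Exp}(1)$, is a point the paper leaves implicit. The genuine divergence --- and the gap --- is the input gadget. The paper's guiding principle is that \emph{only the outputting module ever has a choice}: the receiver is driven by delay transitions through a round-robin cycle of states, each offering exactly one candidate action, so it ``cannot influence anything and must input the action that comes''. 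Your probing gadget instead puts the receiver in a state with a delay self-loop and \emph{all} candidate synchronisation transitions, letting its strategy choose which candidate to commit to and re-decide after each refuting self-loop.

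That strategic freedom breaks the $\Rightarrow$ direction. The duration of a probing phase is (up to the final zero-time handshake) Erlang-distributed with shape equal to the number of refuted guesses, i.e.\ the position of the true value in whatever probe order the receiver uses; since that order may depend on the receiver's private local history, and since every other player later sees absolute timestamps that include this duration, probing is a covert timing channel from one player's private history to the others --- a channel that does not exist in the DEC-POMDP. A profile in $\img$ exploiting it can achieve reachability probabilities that no profile in $\pomdp$ achieves, so the stated equivalence fails for some $p$. Your independence argument (``every race duration is independent of its discrete outcome'') covers only the central $\mathrm{Exp}(1)$ race, not the probing phases, and the claim that ``the only informative part of a player's local history is exactly the recovered observation sequence'' is precisely what the covert channel refutes. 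The repair is the paper's gadget: hard-wire the probe order into the state space (cycle state $i$ offers only candidate $a_i$ and its delay transition leads to state $i+1$), so the receiver has no choice and no history-dependent signal can be encoded in the phase duration.
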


\begin{proof}[Proof sketch]
Let us fix $n$ and $\pomdp = (S, \agt, (\act_i)_{1 \leq i \leq n}, \delta, (\obs_i)_{1 \leq i \leq n}, \obsfunc)$ where $\agt = \{1,...,n\}$. 
Further, let $\act_i = \{a_{i1},...,a_{im_i} \}$ and $\obs_i = \{o_{i1},...,o_{i\ell_i} \}$ for player $i \in \agt$. 
The distributed IMC $\img$ has $n+1$ modules, one module for each player in $\pomdp$ and the \emph{main} module responsible for their synchronisation. Intuitively,
\begin{itemize}
	\item the module of every player $i$ stores the last local observation in its state space. Every step of $\pomdp$ is modelled as follows: The player \emph{outputs} to the main module the action it chooses and then \emph{inputs} from the main module the next observation.
	\item The main module stores the global state in its state space. Every step of $\pomdp$ corresponds to the following: The main module \emph{inputs} the actions of all players one by one, then it randomly picks the new state and new observations according to the rules of $\pomdp$ based on the actions collected. The observations are lastly \emph{output} to all players, again one by one.
\end{itemize}

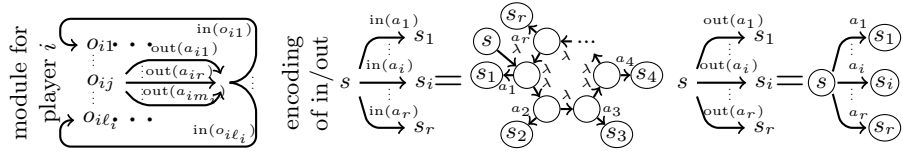
\begin{figure}
	\begin{center}
		\begin{tikzpicture}[xscale=0.7,yscale=0.5]

		\tikzstyle{state}=[ellipse, draw=black,
		inner sep=0pt, outer sep=1pt, minimum width=10pt, minimum height=10pt]
		\tikzstyle{coord}=[inner sep = 0, outer sep=0]
		\tikzstyle{dots}=[yscale=0.5, xscale=0.5]
		\tikzstyle{bigdots}=[yscale=1.6, xscale=1.6]
		\tikzstyle{action}=[font=\tiny,->,thick, rounded corners]
		\tikzstyle{rate}=[font=\tiny,text width=2cm,text centered]
		\tikzstyle{brace}=[decoration={brace, mirror},
		decorate,font=\footnotesize]

		\begin{scope}[xshift=0cm,yshift=-6cm,xscale=0.6]
		
		\node[rotate=90,font=\footnotesize,text width=2cm,text centered] at (-2.6,0) {module for player $i$};
		
		\begin{scope}[xshift=-0.5cm]
		\node[] (s1) at (0,1) {$o_{i1}$};
		\node[bigdots] at (1.1,1) {$\cdots$};
		\node[dots] at (-0.25,0.7) {$\vdots$};
		\node[] (si) at (0,0) {$o_{ij}$};
		\node[dots] at (-0.25,-.3) {$\vdots$};
		\node[] (sn) at (0,-1) {$o_{i \ell_i}$};
		\node[bigdots] at (1.1,-1) {$\cdots$};
		\end{scope}

		\begin{scope}[xshift=0cm]
		\node[] (si) at (0,0) {};
		\node[] (i1) at (3.4,0) {};
		
		\path[action] (si) edge node[above=-2,pos=0.6] {out($a_{i r}$)} (i1);
		
		\draw[action] (si.north east) -- (0.7,0.6) -- node[above=-2,pos=0.6] {
			out($a_{i 1}$)
		} (3.1,0.6) -- (i1) ;
		
		\draw[action] (si.south east) -- (0.7,-0.6) -- node[above=-3,pos=0.6] {
			out($a_{i m_i}$)
		} (3.1,-0.6) -- (i1) ;
		
		\node[dots] at (0.8,-.2) {$\vdots$};
		\node[dots] at (0.8,.4) {$\vdots$};
		\end{scope}
		
		\begin{scope}[xshift=3.4cm]
		\node[] (si) at (0,0) {};
		
		\draw[action] (si.east) -- (0.5,0) -- (1,0.5) |- (0,1.7) -| node[below=-2,pos=0] {
			in($o_{i 1}$)
		} ($(s1)+(-1.3,0)$) -- (s1) ;
		
		\draw[action] (si.east) -- (0.5,0) -- (1,-0.5) |- (0,-1.7) -| node[above=-2,pos=0] {
			in($o_{i \ell_i}$)
		} ($(sn)+(-1.3,0)$) -- (sn) ;
		
		\node[dots] at (0.9,0.1) {$\vdots$};
		\end{scope}
		
		\end{scope}

		\begin{scope}[xshift=4.5cm,yshift=-6cm,xscale=0.8]
		
		\node[rotate=90,font=\footnotesize,text width=2cm,text centered] at (-1.1,0) {encoding of in/out};
		
		\begin{scope}[xscale=0.7,xshift=-0.3cm]
		
		\node (s) at (0,0) {$s$};
		
		\node[] (si) at (2.7,0) {$s_i$};
		\node[] (sr) at (2.7,-1.2) {$s_r$};
		\node[] (s1) at (2.7,1.2) {$s_1$};
		
		\path[action] (s) edge node[above,pos=0.65] {in($a_i$)} (si);
		\draw[action] (s.north east) -- (0.8,1.2) -- node[above,pos=0.7] {in($a_1$)} (s1);
		\node[dots] at (1.5,1) {$\vdots$};
		\draw[action] (s.south east) -- (0.8,-1.2) -- node[above,pos=0.7] {in($a_r$)} (sr);
		\node[dots] at (1.5,-.2) {$\vdots$};
		
		\node[scale=1.6] at (3.5,0) {$=$};
		\end{scope}
		
		\begin{scope}[xshift=2.2cm,yshift=-5.2cm,xscale=0.95,yscale=0.9]
		\tikzstyle{every node}=[ellipse, draw=black,
		inner sep=0pt, minimum width=10pt, minimum height=10pt]
		
		\draw (2,6)     node  (t1)	{};
		\draw (2.5,5)     node  (t2)	{};
		\draw (3.5,5)     node  (t3)	{};
		\draw (4,6)     node (t4)	{};
		\draw (3.5,7)     node [draw=none] (t5)	{...};
		\draw (2.5,7)     node (t6)	{};
		
		\draw (1,7)     node (s)	{$s$};
		\draw (1,6)     node (s1)	{$s_1$};
		\draw (1.75,4.25)     node (s2)	{$s_2$};
		\draw (4.25,4.25)     node  (s3)	{$s_3$};
		\draw (5,6)     node  (s4)	{$s_4$};
		\draw (1.75,7.75)     node (s6)	{$s_r$};
		
		\path[->,action] (t1) edge node [above right, draw=none] {$\lambda$} (t2);
		\path[->,action] (t2) edge node [above, draw=none] {$\lambda$} (t3);
		\path[->,action] (t3) edge node [above left, draw=none] {$\lambda$} (t4);
		\path[->,action] (t4) edge node [below left, draw=none] {$\lambda$} (t5);
		\path[->,action] (t5) edge node [below, draw=none] {$\lambda$} (t6);
		\path[->,action] (t6) edge node [below right, draw=none] {$\lambda$} (t1);
		\path[->,action] (s) edge node [above right, draw=none] {$\lambda$} (t1);
		
		\path[->,action] (t1) edge node [below, draw=none] {$a_1$} (s1);
		\path[->,action] (t2) edge node [above left, draw=none] {$a_2$} (s2);
		\path[->,action] (t3) edge node [above right, draw=none] {$a_3$} (s3);
		\path[->,action] (t4) edge node [above, draw=none] {$a_4$} (s4);
		\path[->,action] (t6) edge node [below left, draw=none] {$a_r$} (s6);
		
		\end{scope}

		
		\begin{scope}[xshift=7.8cm,xscale=0.75]
		
		\node (s) at (0,0) {$s$};
		
		\node[] (si) at (2.6,0) {$s_i$};
		\node[] (sr) at (2.6,-1.2) {$s_r$};
		\node[] (s1) at (2.6,1.2) {$s_1$};
		
		\path[action] (s) edge node[above,pos=0.65] {out($a_i$)} (si);
		\draw[action] (s.north east) -- (0.8,1.2) -- node[above,pos=0.7] {out($a_1$)} (s1);
		\node[dots] at (1.5,1) {$\vdots$};
		\draw[action] (s.south east) -- (0.8,-1.2) -- node[above,pos=0.7] {out($a_r$)} (sr);
		\node[dots] at (1.5,-.2) {$\vdots$};
		
		\node[scale=1.6] at (3.4,0) {$=$};
		
		\begin{scope}[xshift=4.4cm,xscale=0.8]
		
		\node[state] (s) at (0,0) {$s$};
		
		\node[state] (si) at (2.5,0) {$s_i$};
		\node[state] (sr) at (2.5,-1.2) {$s_r$};
		\node[state] (s1) at (2.5,1.2) {$s_1$};
		
		\path[action] (s) edge node[above,pos=0.65] {$a_i$} (si);
		\draw[action] (s.north east) -- (0.8,1.2) -- node[above,pos=0.7] {$a_1$} (s1);
		\node[dots] at (1.2,1) {$\vdots$};
		\draw[action] (s.south east) -- (0.8,-1.2) -- node[above,pos=0.7] {$a_r$} (sr);
		\node[dots] at (1.2,-.2) {$\vdots$};
		
		\end{scope}
		
		\end{scope}
		
		\end{scope}
		
		\end{tikzpicture}
	\end{center}
	\caption{Module for player $i$ on the left. Input and output encoding to the right.}
	\label{fig:syncmodule}
	
\end{figure}

We construct the distributed IMC so that only the outputting player chooses what action to output whereas the inputting player accepts whatever comes.
\mycut{
\begin{itemize}
	\item Outputting an action $a\in \{a_1,\ldots,a_r\}$ in a state $s$ is simple -- this is modelled in $s$ by standard action transitions for all these actions. 
	\item Inputting in a state $s$ one action from the set $\{a_1,\ldots,a_r\}$ is more elaborate. Instead of waiting in $s$, the player travels by delay transitions in a round-robin fashion through a cycle of $r$ states, where in the $i$-th state, only the action $a_i$ is available. This way, the player cannot influence anything and must input the action that comes.
\end{itemize}
}
The construction of modules for player $i$ is illustrated in Figure~\ref{fig:syncmodule} along with constructions for input and output. 
The interesting part is 
how an action from the set $\{a_1,\ldots,a_r\}$ is input in a state $s$.
Instead of waiting in $s$, the player travels by delay transitions in a round-robin fashion through a cycle of $r$ states, where in the $i$-th state, only the action $a_i$ is available. Thus, the player has no influence and must input the action that comes. 
By this construction, the main module has at most one action transition in every state such that the player cannot influence anything; other modules  get no insight by observing time and thus the players have the same power as in the DEC-POMDP.
%
%
\qed
\end{proof}

\mycut{

\begin{figure}
\begin{center}
\begin{tikzpicture}[xscale=0.7,yscale=0.7]

\tikzstyle{state}=[ellipse, draw=black,
inner sep=0pt, outer sep=1pt, minimum width=10pt, minimum height=10pt]
\tikzstyle{coord}=[inner sep = 0, outer sep=0]
\tikzstyle{dots}=[yscale=0.6, xscale=0.6]
\tikzstyle{bigdots}=[yscale=1.6, xscale=1.6]
\tikzstyle{action}=[font=\tiny,->,thick, rounded corners]
\tikzstyle{rate}=[font=\tiny,text width=2cm,text centered]
\tikzstyle{brace}=[decoration={brace, mirror},
decorate,font=\footnotesize]

\begin{scope}[xshift=0cm,yshift=0cm,xscale=0.9]

\node[rotate=90] at (-2.5,0) {the main module};

\begin{scope}[]
	\node[] (s1) at (0,2) {$(s_1,-)$};
	\node[bigdots] at (1.5,2) {$\cdots$};
	\node[dots] at (0,1) {$\vdots$};
	\node[] (si) at (0,0) {$(s_i,-)$};
	\node[dots] at (0,-0.5) {$\vdots$};
	\node[] (sj) at (-0.15,-1) {$(s_j,-)$};
	\node[dots] at (0,-1.5) {$\vdots$};
	\node[] (sn) at (0,-2) {$(s_n,-)$};
	\node[bigdots] at (1.5,-2) {$\cdots$};
\end{scope}

\begin{scope}
	\node[coord] (i1) at (3,0) {};
	
	\path[action] (si) edge node[above] {in($\vect{a_k}(1)$)} (i1);
	\draw[action] (si.north east) -- (1,1.2) -- node[above,pos=0.5] {
} ($(i1)+(-0.3,1.2)$);
	\node[dots] at ($(i1)+(0,1.2)$) {$\cdots$};
	\node[dots] at (1.5,1) {$\vdots$};
	\draw[action] (si.south east) -- (1,-1.2) -- node[above,pos=0.5] {
} ($(i1)+(-0.3,-1.2)$);
	\node[dots] at ($(i1)+(0,-1.2)$) {$\cdots$};
	\node[dots] at (1.5,-.2) {$\vdots$};
	
	\node[bigdots] at (3.6,0) {$\cdots$};
	
\end{scope}

\begin{scope}[xshift=4cm]
	\node[] (si) at (0,0) {};
	\node[coord] (i1) at (2.2,0) {};
	
	\path[action] (si) edge node[above,pos=0.55] {in($\vect{a_k}(n)$)} (i1);
	\draw[action] (si.north east) -- (0.3,1.2) -- node[above,pos=0.6] {
} ($(i1)+(-0.5,1.2)$);
	\node[dots] at ($(i1)+(0,1.2)$) {$\cdots$};
	\node[dots] at (1,1) {$\vdots$};
	\draw[action] (si.south east) -- (0.3,-1.2) -- node[above,pos=0.6] {
} ($(i1)+(-0.5,-1.2)$);
	\node[dots] at ($(i1)+(0,-1.2)$) {$\cdots$};
	\node[dots] at (1,-.2) {$\vdots$};
\end{scope}

\begin{scope}[xshift=7.2cm]
	\node[] (s1) at (0,2) {$(s_1,\vect{a_1})$};
	\node[bigdots] at (1.6,2) {$\cdots$};
	\node[dots] at (0,1) {$\vdots$};
	\node[] (si) at (0,0) {$(s_i,\vect{a_k})$};
	\node[dots] at (0,-1.5) {$\vdots$};
	\node[] (sn) at (0,-2) {$(s_n,\vect{a_p})$};
	\node[bigdots] at (1.6,-2) {$\cdots$};
\end{scope}

\begin{scope}[xshift=8.5cm]
	\node[] (si) at (-0.3,0) {};
	\node[coord] (i1) at (3,0) {};
	
	\draw[action] (si.345) -- (0.3,-1) -- node[above,rate] {
		$\lambda_{j,\ell} := \delta(s_i,\vect{a_k})(s_j) \; \cdot \;$
		$ O(s_i,\vect{a_k},s_j)(\vect{o_\ell})$} ($(i1)+(0,-1)$);
	\draw[action] (si.north east) -- (0.3,1.2) -- node[above,pos=0.6,rate] {
} ($(i1)+(-0.5,1.2)$);
	\node[dots] at ($(i1)+(0,1.2)$) {$\cdots$};
	\node[dots] at (0.5,.5) {$\vdots$};
	\draw[action] (si.south east) -- (0.3,-1.7) -- node[above,pos=0.6,rate] {
} ($(i1)+(-0.5,-1.7)$);
	\node[dots] at ($(i1)+(0,-1.7)$) {$\cdots$};
	\node[dots] at (0.5,-1.2) {$\vdots$};
\end{scope}

\begin{scope}[xshift=12.8cm]
\node[] (s1) at (0,2) {$(s_1,\vect{o_1})$};
\node[bigdots] at (1.6,2) {$\cdots$};
\node[dots] at (0,0.5) {$\vdots$};
\node[] (si) at (0,-1) {$(s_j,\vect{o_\ell})$};
\node[dots] at (0,-1.5) {$\vdots$};
\node[] (sn) at (0,-2) {$(s_n,\vect{o_q})$};
\node[bigdots] at (1.6,-2) {$\cdots$};
\end{scope}

\begin{scope}[xshift=13.8cm]
	\node[] (si) at (0,-1) {};
	\node[coord] (i1) at (2,-1) {};
	
	\path[action] (si) edge node[above] {out($\vect{o_\ell}(1)$)} (i1);
	
	\node[bigdots] at (2.6,-1) {$\cdots$};
\end{scope}

\begin{scope}[xshift=16.8cm]
\node[] (si) at (0,-1) {};
\node[coord] (i1) at (1.5,-1) {};

\draw[action] (si) -- node[above,pos=0.7] {out($\vect{o_\ell}(n)$)} (1.5,-1) -| ($(i1)+(0.5,-2)$) --
($(sj)+(-1.4,-2)$) |- (sj);

\end{scope}

\begin{scope}[yshift=2.5cm]

	\draw [brace] (6,0) -- node[above=4,text width=3cm,text centered] {get actions from all players} (1,0);
	
	\draw [brace] (11.7,0) -- node[above=4,text width=3cm,text centered] {randomly pick next state and observation} (8.3,0);
	
	\draw [brace] (18.5,0) -- node[above=4,text width=3cm,text centered] {send observations to all players} (13.8,0);
\end{scope}

\end{scope}

\begin{scope}[xshift=0cm,yshift=-6cm,xscale=0.8]

\node[rotate=90] at (-2.5,0) {module for player $i$};

\begin{scope}[xshift=-0.7cm]
\node[] (s1) at (0,1) {$o_{i,1}$};
\node[bigdots] at (1.1,1) {$\cdots$};
\node[dots] at (0,0.5) {$\vdots$};
\node[] (si) at (0,0) {$o_{i,j}$};
\node[dots] at (0,-.5) {$\vdots$};
\node[] (sn) at (0,-1) {$o_{i,\ell_i}$};
\node[bigdots] at (1.1,-1) {$\cdots$};
\end{scope}

\begin{scope}[xshift=0cm]
\node[] (si) at (0,0) {};
\node[] (i1) at (3.4,0) {};

	\path[action] (si) edge node[above=-2,pos=0.6] {out($a_{i,r}$)} (i1);

	\draw[action] (si.north east) -- (0.7,0.6) -- node[above=-2,pos=0.6] {
				out($a_{i,1}$)
	} (3.1,0.6) -- (i1) ;
	
	\draw[action] (si.south east) -- (0.7,-0.6) -- node[above=-3,pos=0.6] {
				out($a_{i,\ell_i}$)
	} (3.1,-0.6) -- (i1) ;
	
\node[dots] at (0.8,-.2) {$\vdots$};
\node[dots] at (0.8,.4) {$\vdots$};
\end{scope}

\begin{scope}[xshift=3.4cm]
\node[] (si) at (0,0) {};

\draw[action] (si.east) -- (0.5,0) -- (1,0.5) |- (0,1.7) -| node[above,pos=0.1] {
	in($o_{i,1}$)
} ($(s1)+(-1,0)$) -- (s1) ;

\draw[action] (si.east) -- (0.5,0) -- (1,-0.5) |- (0,-1.7) -| node[below,pos=0.1] {
	in($o_{i,1}$)
} ($(sn)+(-1,0)$) -- (sn) ;

\node[dots] at (0.9,0.1) {$\vdots$};
\end{scope}

\end{scope}

\begin{scope}[xshift=5.4cm,yshift=-6cm,xscale=0.9]

\node[rotate=90] at (-1,0) {encoding of in/out};

\begin{scope}[xscale=0.7,xshift=-0.3cm]

\node (s) at (0,0) {$s$};

\node[] (si) at (2.5,0) {$s_i$};
\node[] (sr) at (2.5,-1.2) {$s_r$};
\node[] (s1) at (2.5,1.2) {$s_1$};

\path[action] (s) edge node[above,pos=0.65] {in($a_i$)} (si);
\draw[action] (s.north east) -- (0.8,1.2) -- node[above,pos=0.7] {in($a_1$)} (s1);
\node[dots] at (1.5,1) {$\vdots$};
\draw[action] (s.south east) -- (0.8,-1.2) -- node[above,pos=0.7] {in($a_r$)} (sr);
\node[dots] at (1.5,-.2) {$\vdots$};

\node[scale=1.6] at (3.5,0) {$=$};
\end{scope}

\begin{scope}[xshift=2.2cm,yshift=-5.2cm,xscale=0.9,yscale=0.9]
\tikzstyle{every node}=[ellipse, draw=black,
                        inner sep=0pt, minimum width=10pt, minimum height=10pt]

\draw (2,6)     node  (t1)	{};
\draw (2.5,5)     node  (t2)	{};
\draw (3.5,5)     node  (t3)	{};
\draw (4,6)     node (t4)	{};
\draw (3.5,7)     node [draw=none] (t5)	{...};
\draw (2.5,7)     node (t6)	{};

\draw (1,7)     node (s)	{$s$};
\draw (1,6)     node (s1)	{$s_1$};
\draw (1.75,4.25)     node (s2)	{$s_2$};
\draw (4.25,4.25)     node  (s3)	{$s_3$};
\draw (5,6)     node  (s4)	{$s_4$};
\draw (1.75,7.75)     node (s6)	{$s_r$};

\path[->,action] (t1) edge node [above right, draw=none] {$\lambda$} (t2);
\path[->,action] (t2) edge node [above, draw=none] {$\lambda$} (t3);
\path[->,action] (t3) edge node [above left, draw=none] {$\lambda$} (t4);
\path[->,action] (t4) edge node [below left, draw=none] {$\lambda$} (t5);
\path[->,action] (t5) edge node [below, draw=none] {$\lambda$} (t6);
\path[->,action] (t6) edge node [below right, draw=none] {$\lambda$} (t1);
\path[->,action] (s) edge node [above right, draw=none] {$\lambda$} (t1);

\path[->,action] (t1) edge node [below, draw=none] {$a_1$} (s1);
\path[->,action] (t2) edge node [above left, draw=none] {$a_2$} (s2);
\path[->,action] (t3) edge node [above right, draw=none] {$a_3$} (s3);
\path[->,action] (t4) edge node [above, draw=none] {$a_4$} (s4);
\path[->,action] (t6) edge node [below left, draw=none] {$a_r$} (s6);

\end{scope}


\begin{scope}[xshift=7.8cm,xscale=0.75]

	\node (s) at (0,0) {$s$};
	
	\node[] (si) at (2.5,0) {$s_i$};
	\node[] (sr) at (2.5,-1.2) {$s_r$};
	\node[] (s1) at (2.5,1.2) {$s_1$};
	
	\path[action] (s) edge node[above,pos=0.65] {out($a_i$)} (si);
	\draw[action] (s.north east) -- (0.8,1.2) -- node[above,pos=0.7] {out($a_1$)} (s1);
	\node[dots] at (1.5,1) {$\vdots$};
	\draw[action] (s.south east) -- (0.8,-1.2) -- node[above,pos=0.7] {out($a_r$)} (sr);
	\node[dots] at (1.5,-.2) {$\vdots$};
	
	\node[scale=1.6] at (3.4,0) {$=$};
	
	\begin{scope}[xshift=4.4cm,xscale=0.8]
	
	\node[state] (s) at (0,0) {$s$};
	
	\node[state] (si) at (2.5,0) {$s_i$};
	\node[state] (sr) at (2.5,-1.2) {$s_r$};
	\node[state] (s1) at (2.5,1.2) {$s_1$};
	
	\path[action] (s) edge node[above,pos=0.65] {$a_i$} (si);
	\draw[action] (s.north east) -- (0.8,1.2) -- node[above,pos=0.7] {$a_1$} (s1);
	\node[dots] at (1.2,1) {$\vdots$};
	\draw[action] (s.south east) -- (0.8,-1.2) -- node[above,pos=0.7] {$a_r$} (sr);
	\node[dots] at (1.2,-.2) {$\vdots$};
	
	\end{scope}

\end{scope}

\end{scope}

\end{tikzpicture}
\end{center}
\caption{Synchronization module $\syncmod((a_1,s_1),...,(a_r,s_r),\lambda)$ on the left and the local module for player $i \in \{1,...,n\}$ on the right. \snote{Rephrase, possibly remove}}
\label{fig:syncmodule}

\end{figure}

}


\subsection{Undecidability of qualitative existence in DEC-POMDP}

Next, we show that the qualitative existence problem for DEC-POMDPs even with $n \geq 2$ players is undecidable. The proof has similarities with ideas from \cite{BK10} where it is shown that deciding existence of sure-winning strategies in safety games with 3 players and partial observation is undecidable. Using the randomness of DEC-POMDPs we show undecidability of the qualitative existence problem for reachability in $2$-player DEC-POMDPs. 

\usetikzlibrary{shapes,backgrounds,calc}

\makeatletter
\tikzset{circle split part fill/.style  args={#1,#2}{%
		alias=tmp@name, 
		postaction={%
			insert path={
				\pgfextra{%
					\pgfpointdiff{\pgfpointanchor{\pgf@node@name}{center}}%
					{\pgfpointanchor{\pgf@node@name}{east}}%
					\pgfmathsetmacro\insiderad{\pgf@x}
					\fill[#1] (\pgf@node@name.base) ([xshift=-\pgflinewidth]\pgf@node@name.east) arc
					(0:180:\insiderad-\pgflinewidth)--cycle;
					\fill[#2] (\pgf@node@name.base) ([xshift=\pgflinewidth]\pgf@node@name.west)  arc
					(180:360:\insiderad-\pgflinewidth)--cycle;            
				}}}}}  
				\makeatother  
				
				\begin{figure}
					\begin{center}
						\vspace{-1em}
						\begin{tikzpicture}[xscale=0.9,yscale=0.7]
						
						\tikzstyle{every node}=[ellipse, draw=black,
						inner sep=0pt, minimum width=8pt, minimum height=8pt, outer sep=1pt]

						\draw (3,7)     node [label=below right:$s_0$] (s0)	{};
						\draw (3,6)     node [draw=none] (inv)	{};
						\draw (1,6)     node[shape=circle split,
						draw=black,
						line width=0mm,text=white,font=\bfseries,
						circle split part fill={black,black}, label=above left:$s_1$] (s1)	{};
						\draw (5,6)     node [label=below left:$s_2$] (s2)	{};
						\draw (8,7)     node [shape=circle split,
						draw=black,
						line width=0mm,text=white,font=\bfseries,
						circle split part fill={black,white},label=below right:$s_3$] (s3)	{};
						\draw (8,6)     node [shape=circle split,
						draw=black,
						line width=0mm,text=white,font=\bfseries,
						circle split part fill={black,black},label=below right:$s_4$] (s4)	{};

						\draw (-1,6)     node [draw=none] (m1)	{$\pomdp'$};
						\draw (10,6)     node [draw=none] (m2)	{$\pomdp''$};
						\draw (10,7)     node [draw=none] (m3)	{$\pomdp'''$};

						\path[-] (s0) edge node [left, draw=none,font=\footnotesize] {$(n,n)$} (3,6);
						\path[->] (3,6) edge node [below, draw=none] {$\frac{1}{3}$} (s1);
						\path[->] (3,6) edge node [below, draw=none] {$\frac{1}{3}$} (s2);
						\path[->] (3,6) edge [bend left = 15] node [above left, draw=none] {$\frac{1}{3}$} (s3);
						
						\path[-] (s2) edge [bend right] node [below, draw=none,font=\footnotesize] {$(n,n)$} (6,6);
						\path[->] (6,6) edge [bend right] node [above, draw=none] {$\frac{1}{3}$} (s2);
						\path[->] (6,6) edge node [above left, draw=none] {$\frac{1}{3}$} (s3);
						\path[->] (6,6) edge node [below left,pos=0.55, draw=none] {$\frac{1}{3}$} (s4);
						
						\path[->] (s3) edge (m3);
						\path[->] (s1) edge (m1);
						\path[->] (s4) edge (m2);

						\end{tikzpicture}
					\end{center}
					\caption{Overall structure of $\pomdp$ without details of $\pomdp', \pomdp''$ and $\pomdp'''$.}
					\label{fig:undec_overall}
				\end{figure}

\begin{mytheo} 
\label{theo:decpomdp_undec}
It is undecidable whether for a DEC-POMDP $\pomdp$ with $n \geq 2$ players and a set $T$ of target states in $\pomdp$ if there exists a strategy profile $\vect{\sigma}$ such that $\pr^{\vect{\sigma}}_\pomdp(\diamond T) = 1$.
\end{mytheo}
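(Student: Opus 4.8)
The plan is to reduce from an undecidable problem about Turing machines --- say, the halting problem of a deterministic two-counter (Minsky) machine $M$ on empty input --- and to build a $2$-player DEC-POMDP $\pomdp$ together with a target set $T$ such that a strategy profile $\vect{\sigma}$ with $\pr^{\vect{\sigma}}_\pomdp(\diamond T)=1$ exists if and only if $M$ halts. The argument follows the spirit of \cite{BK10}, where undecidability for $3$-player partial-observation safety games is obtained by letting two players jointly \emph{produce} a computation of $M$ while a third, adversarial, player \emph{verifies} local consistency. The key idea here is that the verifying player can be replaced by the randomness available in a DEC-POMDP: the probabilistic branching sketched in Figure~\ref{fig:undec_overall} plays exactly the role of the adversarial verifier, which is what lets us drop from three players to two.

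Concretely, the two players are given almost no observations, so each of them effectively commits in advance to an infinite stream of actions; these streams are read as a description of a run of $M$, i.e.\ a sequence of configurations indexed by time. The three sub-modules reached from $s_1$, $s_4$, $s_3$ implement three independent checks: $\pomdp'$ verifies that the first configuration is the correct initial one, $\pomdp''$ verifies that a halting configuration is eventually announced, and $\pomdp'''$ verifies that two consecutive configurations respect the transition rule of $M$ at one position. The self-loop at $s_2$ together with its $\tfrac13$-branching is the device that selects, with positive probability, an arbitrary time at which this local consistency is examined: looping $k$ times before leaving $s_2$ lines up configuration $t$ with configuration $t+1$. Because the players can observe neither which of the three modes was entered nor how long the play stayed in $s_2$, they cannot tailor their output to the hidden check and must therefore emit a single globally consistent computation.

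For correctness I would argue both directions. If $M$ halts, the players commit to the (unique) halting run; then every random outcome leads into a check that passes and hence into $T$, so $\pr^{\vect{\sigma}}_\pomdp(\diamond T)=1$ --- note that the only play avoiding $T$ is the one staying in $s_2$ forever, which has probability $0$ and is thus harmless for a probability-$1$ statement. Conversely, if the committed streams do not encode a halting run of $M$ --- in particular whenever $M$ does not halt --- then there is an error at some finite position, or no halting configuration is ever produced; in either case the corresponding check is reached with strictly positive probability and fails, giving $\pr^{\vect{\sigma}}_\pomdp(\diamond T)<1$. Since this holds for every profile, existence of an almost-surely winning profile is equivalent to halting of $M$, and undecidability follows.

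The step I expect to be the main obstacle is engineering the probabilities so that the equivalence is with probability \emph{exactly} $1$ rather than merely \emph{value} $1$. This requires two things simultaneously: every single local error must be caught with positive probability (so that any flaw forces the winning probability strictly below $1$), while a fully correct computation must be accepted on every branch of positive probability (so that the only failing play is the measure-zero $s_2$-loop). Making the geometric offset selection at $s_2$ hit every position with positive probability, keeping all gadgets finite-state, and ensuring that the two \emph{decentralised} players genuinely cannot coordinate --- which is precisely what pushes the problem past the decidability known for single-player POMDPs \cite{BGB12} --- are the delicate points of the construction.
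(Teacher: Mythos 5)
There is a genuine gap, and it is exactly at the point you flag as the ``main obstacle'': your architecture cannot deliver the \emph{exact} probability-$1$ equivalence. You let both players receive (almost) no observations, so each commits to a single fixed infinite stream encoding the whole computation $C_1 C_2 C_3\cdots$, and you ask a finite gadget $\pomdp'''$, entered after a geometrically distributed number of loops at $s_2$, to check that configuration $t{+}1$ is the successor of configuration $t$ ``at one position''. But successive configurations sit at unboundedly growing and input-dependent offsets inside a committed stream, so a finite-state gadget cannot align cell $i$ of $C_t$ with cell $i$ of $C_{t+1}$; the known ways around this (Condon--Lipton-style probabilistic position/equality checks) only catch errors with some probability bounded away from $1$, which yields undecidability of the \emph{value}-$1$ or quantitative problems but not of qualitative \emph{existence}. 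The paper's construction avoids this precisely through the observation mechanism you discarded: each player receives a single randomly timed ``black'' observation, and the configuration it must emit is \emph{indexed by the arrival time of that signal}, so each player outputs exactly one configuration, not the whole run. In the branch through $s_3$ player~$2$'s signal arrives one step after player~$1$'s, so player~$1$ emits $C_j$ while player~$2$ emits $C_{j+1}$, and a \emph{deterministic} finite automaton can check the successor relation by reading the two emissions in parallel (consecutive configurations differ only near the head). A second gadget ($\pomdp''$ in the paper, checking that the two players emit \emph{identical} sequences when their signals coincide) drives the induction that forces player $i$ to emit the correct $j$th configuration whenever its signal arrives at step $j$; your proposal has no counterpart of this equality check because your players have nothing to synchronise on.

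A secondary difference is the direction of the reduction: you reduce from halting (a halting configuration must \emph{eventually} be announced, via your $\pomdp''$), whereas the paper reduces from \emph{non}-halting (announcing a halting configuration sends the play to a non-target sink, so the players win with probability $1$ iff the machine runs forever). Either direction would establish undecidability in principle, so this is not by itself an error; but your direction additionally forces you to handle what the streams contain after the halting configuration, and more importantly it does nothing to repair the consistency gadget above. To fix the proof you need to reinstate randomly timed observations that tell each player \emph{which single configuration to play}, with the one-step offset between the two players' signals implementing the successor comparison, and add the equality-check module that makes the inductive argument go through.
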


\begin{proof}[Proof sketch] We do a reduction from the non-halting problem of a deterministic Turing machine $M$ that starts with a blank input tape. From $M$ we construct a DEC-POMDP $\pomdp$ with two players $\agt = \{1, 2\}$ such that $M$ does not halt if and only if players $1$ and $2$ have strategies $\vect{\sigma} = (\sigma_1,\sigma_2)$ which ensure that the probability of reaching a target set $T$ is 1. Figure \ref{fig:undec_overall} shows the overall structure of $\pomdp$ without details of sub-modules $\pomdp', \pomdp''$ and $\pomdp'''$.

Both players have two possible observations, black and white. 
We depict the observation of player 1 in the top-half and of player 2 in the bottom-half of every state.
%
%
%
The play starts in $s_0$ and with probability 1, every player receives the black observation exactly once during the play. If the play goes to $s_1$ or $s_4$ the players will receive the observation at the same time and if the play goes to $s_3$ then player 2 will receive the observation in the step after player 1 does. The modules $\pomdp', \pomdp''$ and $\pomdp'''$ are designed so that:

\begin{itemize}
 \item In $\pomdp'$, a target state is reached if and only if the sequence of actions played by both players encodes the initial configuration of $M$.
 
 \item In $\pomdp''$, a target state is reached with probability 1 if and only if both players play the same infinite sequence of actions. Note that randomness is essential to build such a module.
 
 \item In $\pomdp'''$, the target set is reached if and only if the sequences of actions played by player 1 and 2 encode two configurations $C_1$ and $C_2$ of $M$, respectively, such that $C_1$ is not an accepting configuration and $C_2$ is a successor configuration of $C_1$. This can be done since a finite automaton can be constructed that recognizes if one configuration is a successor of the other when reading both configurations at the same time. Note that it is possible because such configurations can only differ by a constant amount (the control state, the tape head position and in symbols in cells near the tape head).
\end{itemize}

It can be shown by induction that if there are strategies $\sigma_1, \sigma_2$ that ensure reaching $T$ with probability 1 then every
$\sigma_i$ has to play the encoding of the $j$th configuration of $M$ when it
receives the black observation in the $j$th step. Further, it can be shown that these strategies do ensure reaching $T$ with probability 1 if $M$ does not halt on the empty input tape and do not ensure reaching $T$ with probability 1 if $M$ halts. 
\qed
\end{proof}


\section{Decidability for non-urgent models}\label{sec:results-reachability}

%
%

In this section, we turn our attention to a subclass of distributed IMCs, called \emph{non-urgent}, that implies decidability for both the qualitative and quantitative value problems for 2 players.

\begin{mydef}\label{def:redecide}
We call $\structure = ((S_i, \act_i, \inter{}_i, \marko{}_i, s_{0i}))_{1 \leq i \leq n}$ \emph{non-urgent} if for every $1 \leq j \leq n$:
\begin{enumerate}
	\item Every $s\in S_j$ is of one of the following forms:
	\begin{enumerate}
		\item \emph{Synchronisation} state with at least one outgoing synchronisation action transition and exactly one outgoing delay transition which is a self-loop.
		\item \emph{Private} state with arbitrary outgoing delay transitions and private action transitions.
	\end{enumerate}
	\item Player $j$ has an action $\donothing_j \in \act_j$ enabled in every synchronisation state from $S_j$ that allows to ``do nothing'' and thus postpone the synchronisation. To this end, $\donothing_j$ is also in $\act_k$ for every other player $k \neq j$ but $\donothing_j$ does not appear in $\inter{}_k$. As a result, $j$ does not take part in any synchronisation while  choosing $\donothing_j$.
\end{enumerate}
\end{mydef}

In a non-urgent distributed IMC, $\vect{s} \in S$ is called a (global) \emph{synchronisation} state if it is the initial state or all $\vect{s}(j)$ are synchronisation states. We denote global synchronisation states by $S'$. All other global states $S \setminus S'$ are called \emph{private}.

\begin{example}
  Consider the non-urgent variant of Example~\ref{ex1} on the right. The ``do nothing'' actions are a natural concept; the only real modelling restriction is that one cannot model a communication time-out any more, the delay transitions from synchronisation states need to be self-loops.
\end{example}
\vspace*{-2ex}

\begin{wrapfigure}[11]{r}{0.42\linewidth}
	\vspace{-1.8em}
	\begin{tikzpicture}[yscale=0.45,xscale=0.45]
	\scriptsize
	
	\tikzstyle{every node}=[ellipse, draw=black,
	inner sep=0pt, minimum width=13pt, minimum height=13pt]
	
	\draw (0,8.3) node [draw=none] (app)	{\textbf{App:}};
	\draw (0,6) node (s0)	{$c_0$};
	\draw (2,7.3) node (l1)	{$t_1$};
	\draw (5,7.3) node (l2)	{$t_2$};
	\draw (8,7.3) node (l3)	{$t_3$};
	\draw (10,7.3) node (l4)	{$t_4$};
	\draw (2,4.7) node (r1)	{$b_1$};
	\draw (5,4.7) node (r2)	{$b_2$};
	\draw (8,4.7) node (r3)	{$b_3$};
	\draw (10,4.7) node (r4)	{$b_4$};
	
	\path[->] ($(s0)+(-1,0)$) edge (s0);
	
	\path[->] (s0) edge node [above left, draw=none] {$\lambda$} (l1);
	\path[->] (s0) edge node [below left, draw=none] {$\lambda$} (r1);
	
	\path[->] (l1) edge [loop above] node [right, draw=none] {$\lambda$} (l1);
	\path[->] (l2) edge [loop above] node [right, draw=none] {$\lambda$} (l2);
	\path[->] (l4) edge [loop above] node [left, draw=none] {$\lambda$} (l4);
	\path[->] (l1) edge [loop below] node [left, draw=none] {$\donothing_1$} (l1);
	\path[->] (l2) edge [loop below] node [left, draw=none] {$\donothing_1$} (l2);
	\path[->] (l1) edge node [above=-3, draw=none] {\login} (l2);	
	\path[->] (l2) edge node [above=-3, draw=none] {\lookup} (l3);
	\path[->] (l3) edge (l4);
	
	\path[->] (r1) edge [loop above] node [right, draw=none] {$\lambda$} (r1);
	\path[->] (r2) edge [loop above] node [right, draw=none] {$\lambda$} (r2);
	\path[->] (r4) edge [loop above] node [left, draw=none] {$\lambda$} (r4);
	\path[->] (r1) edge [loop below] node [left, draw=none] {$\donothing_1$} (r1);
	\path[->] (r2) edge [loop below] node [left, draw=none] {$\donothing_1$} (r2);
	\path[->] (r1) edge node [above=-3, draw=none] {\login} (r2);	
	\path[->] (r2) edge node [above=-3, draw=none] {\lookup} (r3);
	\path[->] (r3) edge (r4);
	
	\begin{scope}[xshift=-10cm,yshift=-4.8cm]
	\draw (10,7) node [draw=none] (app)	{\textbf{Att:}};
	
	\draw (12,6) node (t0)	{$\bar{c}_1$};
	\draw (15,6) node (t1)	{$\bar{c}_2$};
	\draw (18,6) node (t2)	{$\bar{c}_3$};
	\draw (20,7) node (t3)	{$\bar{t}_4$};
	\draw (20,5) node (t4)	{$\bar{b}_4$};

	\path[->] ($(t0)+(-1,0)$) edge (t0);
	
	\path[->] (t0) edge [loop above] node [right, draw=none] {$\lambda$} (t0);
	\path[->] (t1) edge [loop above] node [right, draw=none] {$\lambda$} (t1);
	\path[->] (t0) edge [loop below] node [left, draw=none] {$\donothing_2$} (t0);
	\path[->] (t1) edge [loop below] node [left, draw=none] {$\donothing_2$} (t1);
	
	\path[->] (t0) edge node [above=-3, draw=none] {\login} (t1);
	\path[->] (t1) edge node [above=-3, draw=none] {\lookup} (t2);
	\path[->] (t2) edge (t3);
	\path[->] (t2) edge (t4);
	\path[->] (t3) edge [loop above] node [left, draw=none] {$\lambda$} (t3);
	\path[->] (t4) edge [loop above] node [left, draw=none] {$\lambda$} (t4);
	\end{scope}
	\end{tikzpicture}
\end{wrapfigure}

 Surprisingly, in this model, the secret can be leaked with probability $1$ as follows. As before, the players reach the states
 $(t_2,\bar{c}_2)$ or $(b_2,\bar{c}_2)$ with equal probability. Now, the $\app$ player can arbitrarily postpone the $\lookup$ by committing to action $\donothing_1$. Whenever the delay self-loop is taken, the player can re-decide to perform $\lookup$. Since the self-loop is taken repetitively, the $\app$ player is flexible in choosing the timing of $\lookup$. Thus, leaking the secret is simple, e.g. by performing $\lookup$ in an odd second when in $t_2$ and in an even second when in $b_2$. 
 \vspace*{1ex}

For two players, we construct a general synchronisation scheme that (highly probably) shows the players the current global state after each communication.

\begin{mytheo}\label{thm:alg}
The quantitative value problem for 2-player non-urgent distributed IMCs where the target set consists only of synchronisation states is in {\exptime}.
\end{mytheo}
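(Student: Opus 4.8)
The plan is to show that the value of the game equals the value $V^*$ of a finite \emph{full-information} reachability problem over the global synchronisation states, and that $V^*$ is computable in $\exptime$. By Theorem~\ref{theo:sched_power} we may drop the scheduler and work with $\sup_{\vect{\sigma}}\pr^{\vect{\sigma}}(\diamond T)$. The argument rests on two matching bounds: an information-theoretic upper bound $\sup_{\vect{\sigma}}\pr^{\vect{\sigma}}(\diamond T)\le V^*$, and a lower bound $\sup_{\vect{\sigma}}\pr^{\vect{\sigma}}(\diamond T)\ge V^*$ witnessed by a communication protocol built from non-urgency. Since $T$ consists only of synchronisation states, a target can be entered for a non-zero time only at a global synchronisation state, so the play splits naturally into \emph{rounds} delimited by the global synchronisation states $S'$.

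First I would define the macro-problem. Its states are the global synchronisation states $S'$ (with the initial state). A macro-step from $\vect{s}=(s_1,s_2)\in S'$ consists of committing to a common synchronisation action $a$, after which each player $i$ is in $\mathrm{target}_i(s_i,a)$ and then evolves \emph{privately} until it re-enters a synchronisation state. Because the two private evolutions are independent processes that never interact, the induced law over the next global synchronisation state is a \emph{product} $d_1\otimes d_2$, where each $d_i$ ranges over the polytope of distributions over $S_i'$ that player $i$ can realise by controlling its own private part (a reachability subproblem in which timing is irrelevant, only which synchronisation state is reached matters, and a player may always wait in a synchronisation state until the other arrives). The Bellman operator for $\diamond T$ thus maximises at each $\vect{s}$, over $a$ and over $(d_1,d_2)$, the quantity $\sum_{x_1,x_2} d_1(x_1)\,d_2(x_2)\,V^*(x_1,x_2)$. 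This is linear in the joint distribution, so letting the players share randomness (use the convex hull of products) does not raise the optimum; the maximum is attained at a product, i.e.\ at a pair of \emph{extreme} private behaviours. The upper bound then follows by a round-by-round coupling of $\pr^{\vect{\sigma}}$ with this recursion: during a private phase the players are genuinely decentralised, so the best one-step transition law is exactly such a product, and at a synchronisation point a profile can exploit at most the identity of the current global synchronisation state, which is precisely the information available in the macro-problem.

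The heart of the proof, and the main obstacle, is the matching lower bound: the \emph{synchronisation scheme} realising $V^*$ up to any $\varepsilon>0$. At the start of a round both players know only their own local synchronisation state, and they must (i) learn the global state and (ii) agree on the action prescribed by an optimal macro-strategy. Non-urgency is what makes this possible: in a synchronisation state the unique delay transition is a self-loop and $\donothing_j$ lets a player stay without blocking, so a player's own self-loop fires infinitely often almost surely and it may re-decide at each firing. A player therefore encodes its local synchronisation state in the \emph{time} at which it finally commits to the productive synchronisation, and the other player reads this off from the jointly observed synchronisation time. The delicate points are: making the signalling \emph{bidirectional} through what is essentially a single rendezvous whose time is the only shared observation (I would interleave the two players' admissible commitment windows so that the rendezvous time determines both local states, and each player, knowing its own, decodes the other's); driving the per-round error to $0$ by spreading the code over more self-loop ticks and wider time windows, using that an optimal macro-strategy needs only finitely many rounds to come $\varepsilon$-close to $V^*$; and respecting that synchronising changes the state, so the protocol must piggyback the signalling onto the synchronisation the macro-strategy wants to perform anyway.

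Finally, to obtain the $\exptime$ bound I would make the macro-problem finite and effectively solvable. The polytopes of realisable private distributions have finitely, but in general exponentially, many extreme points (corresponding to deterministic private behaviours); enumerating them turns the macro Bellman operator into an ordinary full-information CT-MDP over $S'$ with exponentially many actions per state. Cooperative reachability on a finite CT-MDP reduces to reachability on its embedded MDP and is solvable in time polynomial in the (here exponential) size, e.g.\ by linear programming, yielding $V^*$ and hence an $\exptime$ decision procedure for $\sup_{\vect{\sigma}}\pr^{\vect{\sigma}}(\diamond T)\ge p$. I expect the bidirectional, error-controlled communication protocol to be by far the most technical part; the upper bound and the MDP computation are comparatively routine once the round decomposition and the product structure of the private evolutions are in place.
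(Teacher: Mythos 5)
Your proposal matches the paper's proof in all essentials: both reduce the problem to an exponential-size full-information MDP over the global synchronisation states whose actions pair a joint synchronising choice with a pure memoryless private strategy for each player, obtain the upper bound from full information plus the product structure of the independent private evolutions, and obtain the lower bound via a time-slotted rendezvous protocol in which interleaved commitment windows encode both local states in the observed synchronisation time, with slot lengths growing so that the probability of a commitment straddling a wrong synchronisation slot vanishes. The only cosmetic difference is that you introduce the action space via extreme points of the achievable-distribution polytopes, a view the paper itself invokes in its correctness argument, so the two proofs are essentially identical.
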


Being a special case, also the qualitative value problem is decidable. 
%
In essence, the problem becomes decidable because in the synchronisation states, the players can effectively exchange arbitrary information. This resembles the setting of~\cite{DBLP:conf/icalp/GenestGMW13}.
The insight that observing global time provides an additional
synchronization mechanism is not novel in itself, but it is obviously
burdensome to formally capture in time-abstract models of asynchronous
communication, and thus usually not considered.  For distributed IMC, it
still is non-trivial to develop; here it hinges on the non-urgency
assumption.
The results of~\cite{DBLP:conf/icalp/GenestGMW13} also indicate that for three or more players, additional constraints on the topology may be needed to obtain decidability.


In the rest of the section we prove Theorem~\ref{thm:alg},
fixing a 2-player non-urgent distributed IMC $\img = ((S_i, \act_i, \inter{}_i, \marko{}_i, s_{0i}))_{1 \leq i \leq 2}$, $p \in [0,1]$, and $T \subseteq S'$.
We present the algorithm based on a reduction to a discrete-time Markov decision process and then discuss its correctness.

\paragraph{Markov decision process (MDP)} An MDP is a tuple $\mdp = (\states, \actions, \tra, s_0)$ where $\states$ is a finite set of states,
$\actions$ is a finite set of actions, $\tra:\states\times\actions\rightarrow \dist(S)$
is a partial probabilistic transition function, and $s_0$ is an initial state. An MDP is the special case of a DEC-POMDP with 1 player that has a unique observation for each state.
A {\em play} in $\mdp$ is a sequence $\omega = s_0 s_1 \ldots$ of states such that $\tra(s_i,a_i)(s_{i+1}) > 0$ for some action $a_i$ for every $i \geq 0$. A history is a prefix of a play. 
A {\em strategy} is a function $\pi$ that to every history $h\cdot s$ assigns a probability distribution over actions such that
if an action $a$ is assigned a non-zero probability, then $\tra(s,a)$ is defined. A strategy $\pi$ is {\em pure memoryless} if it assigns
Dirac distributions to any history and its choice depends only on the last state of the history.
When we fix a strategy $\pi$, we obtain a 
probability measure $\pr^{\pi}$ over the set of plays. For further details, see~\cite{puterman2009markov}.

\paragraph{The algorithm} It works by reduction to an MDP $\mdp_\structure = (S',A,P,\vect{s_0})$ where 
\begin{itemize}
	\item $S' \subseteq S$ is the set of global synchronisation states;
	\item $A = \choice \times \Sigma_1 \times \Sigma_2 \cup \{\bot\}$ where $\Sigma_j$ is the set of pure memoryless strategies of player $j$ in $\structure$ that choose $\donothing_j$ in every synchronisation state;
	\item For an arbitrary state $(s_1,s_2)$, we define the transition function as follows:
	\begin{itemize}
		\item
For any $(\vect{c},\sigma_1,\sigma_2) \in A$, the transition $P((s_1,s_2), (\vect{c},\sigma_1,\sigma_2))$ is defined if $\vect{c}$ is available in $(s_1,s_2)$ and the players agree in $\vect{c}$ on some action $a$, i.e. $\enab(\vect{c}) = \{a\}$. If defined, the distribution $P((s_1,s_2), (\vect{c},\sigma_1,\sigma_2))$ assigns to any successor state $(s'_1,s'_2)$ the probability that in $\img$ the state $(s'_1,s'_2)$ is reached from $(s_1,s_2)$ via states in $S \setminus S'$ by choosing $\vect{c}$ and then using the pure memoryless strategy profile $(\sigma_1,\sigma_2)$.
		\item To avoid deadlocks, the transition $P((s_1,s_2), \bot)$ is defined iff no other transition is defined in $(s_1,s_2)$ and it is a self-loop, i.e. it assigns probability $1$ to $(s_1,s_2)$.
	\end{itemize}
\end{itemize}

The MDP $\mdp_\structure$ has size exponential in $|\structure|$. 
Note that all target states $T$ are included in $S'$. Slightly abusing notation, let $\diamond T$ denote the set of plays in $\mdp_\structure$ that reach the set $T$.
From standard results on MDPs~\cite{puterman2009markov}, there exists an optimal pure memoryless strategy $\pi^\ast$, i.e. a strategy satisfying
$\pr^{\pi^\ast} (\diamond T) = \sup_\pi \pr^{\pi}
(\diamond T)$. 
Furthermore, such a strategy $\pi^\ast$ and the value $v := \pr^{\pi^\ast} (\diamond T)$ can be computed in time polynomial in $|\mdp_\structure|$. Finally, the algorithm returns TRUE if $v \geq p$ and FALSE, otherwise.

\paragraph{Correctness of the algorithm} Let us explain why the approach is correct.

\begin{myprop}\label{prop:correct}
	The value of $\structure$ is equal to the value of $\mdp_\structure$, i.e.
	$$
	\sup_{\vect{\sigma}} \pr^{\vect{\sigma}}(\diamond T) 
	\;\; = \;\; 
	\sup_\pi \pr^\pi(\diamond T).
	$$
\end{myprop}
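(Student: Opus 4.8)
The plan is to prove the two inequalities separately. Write $v(s) := \sup_\pi \pr^\pi(\diamond T)$ for the reachability value of $\mdp_\structure$ from state $s \in S'$; by standard MDP theory this is the least fixed point of the Bellman operator and is attained by the optimal pure memoryless $\pi^\ast$. Since $T \subseteq S'$, any play that spends a non-zero amount of time in $T$ must do so in a global synchronisation state, so $\diamond T$ is contained in the event that the embedded sequence $Y_0, Y_1, \ldots$ of global synchronisation states visited by a play hits $T$. This lets me compare both models through this embedded sequence: one step of $\mdp_\structure$ corresponds exactly to one segment of an $\structure$-play between two consecutive global synchronisation states, passing only through private states in $S \setminus S'$.

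For the upper bound $\sup_{\vect{\sigma}}\pr^{\vect{\sigma}}(\diamond T) \le v(\vect{s_0})$, I would show that $v(Y_m)$ is a supermartingale up to the hitting time of $T$ under any profile $\vect{\sigma}$. The key point is that between two synchronisation states the two modules evolve through private states completely independently, since no synchronisation action is enabled there; hence the distribution of the next synchronisation state $Y_{m+1} = (X_1, X_2)$ factorises as $\Pr[X_1 = \cdot]\cdot\Pr[X_2 = \cdot]$, where $X_j$ is the next local synchronisation state reached by player $j$. Thus $E[v(Y_{m+1}) \mid Y_m = s]$ is bilinear in the two per-player hitting distributions. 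Each such distribution lies in the polytope realisable in a single-player reachability problem, whose vertices are realised by pure memoryless strategies, and a bilinear function over a product of polytopes attains its maximum at a vertex pair, i.e. at a pure memoryless profile $(\sigma_1,\sigma_2)$. Together with the (possibly randomised) choice of $\vect{c}$ in $s$, the attainable distributions form convex combinations of the distributions of the MDP actions $(\vect{c}, \sigma_1, \sigma_2) \in A$, so $E[v(Y_{m+1}) \mid Y_m = s] \le \max_{a} \sum_{s'} P(s,a)(s')\,v(s') = v(s)$ by Bellman optimality. Optional stopping then gives $\pr^{\vect{\sigma}}(\exists m : Y_m \in T) \le v(\vect{s_0})$, and the containment above finishes this direction.

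For the lower bound $\sup_{\vect{\sigma}}\pr^{\vect{\sigma}}(\diamond T) \ge v(\vect{s_0})$, I would fix $\pi^\ast$ and, for every $\epsilon > 0$, build a profile $\vect{\sigma}_\epsilon$ implementing $\pi^\ast$ despite partial observation. The difficulty is that to play $\pi^\ast(s_1,s_2) = (\vect{c}, \sigma_1, \sigma_2)$ both players must agree on the global synchronisation state $(s_1,s_2)$, whereas player $i$ knows only $s_i$. Here non-urgency is used exactly as in the leaking example: in a synchronisation state a player may keep choosing $\donothing_j$ and re-decide at each firing of its delay self-loop, so it can place the moment at which it commits to an actual synchronisation action into any prescribed time window with probability arbitrarily close to $1$. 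Encoding the local state into the timing of synchronisation events, the players exchange their local states so that after the $m$-th communication each reconstructs the true global state with error probability at most $\epsilon_m := \epsilon\,2^{-m}$; they agree on the index $m$ because with two players every synchronisation action involves both and is therefore visible to both. Since the signalling uses only $\donothing_j$ and self-loops it does not alter which global states are reached, so I can couple the $\structure$-play with a run of the Markov chain induced by $\pi^\ast$ in $\mdp_\structure$ such that, on the event that decoding succeeds at every step, their embedded synchronisation-state sequences coincide. Bounding the total decoding-failure probability by $\sum_m \epsilon_m \le \epsilon$ yields $\pr^{\vect{\sigma}_\epsilon}(\diamond T) \ge v(\vect{s_0}) - \epsilon$, and letting $\epsilon \to 0$ gives the claim.

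I expect the main obstacle to be the lower bound, specifically making the timing-based communication precise and uniformly reliable: one must show that a player can hit an arbitrary target time window with probability at least $1-\epsilon_m$ using only self-loop firings, that decoding is correct whenever all windows are hit, and that the per-step error can be driven to $0$ fast enough to keep the accumulated error over the unbounded number of synchronisations below $\epsilon$. The measure-theoretic coupling identifying, on the no-failure event, the embedded synchronisation-state process of $\structure$ with the chain induced by $\pi^\ast$ in $\mdp_\structure$ is the second delicate ingredient.
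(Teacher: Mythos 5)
Your proposal is correct and follows essentially the same route as the paper: the lower bound is exactly the paper's timing-based emulation (non-urgency lets each player place its commitment into a prescribed slot, with per-synchronisation error probabilities decaying geometrically so the total error stays below $\epsilon$), and the upper bound rests on the same key facts the paper uses --- independence of the two modules between global synchronisation states, the product of per-player reachability polytopes whose vertices are realised by pure memoryless strategies, and bilinearity forcing the optimum onto a vertex pair, i.e.\ onto an action of $\mdp_\structure$. The only organisational difference is that you package the upper bound as a supermartingale/optional-stopping argument on the embedded synchronisation-state chain, whereas the paper routes it through an intermediate class of ``synchronisation strategies'' (global strategies that observe the full state at synchronisation points) together with fixed-point value equations; both are sound and deliver the same conclusion.
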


\paragraph{Proof sketch.}
	As regards the $\leq$ inequality, it suffices to show that any strategy profile $\vect{\sigma}$ can be mimicked by some strategy $\pi$. This is simple as $\pi$ in $\mdp_\structure$ has always knowledge of the global state.
	Much more interesting is the $\geq$ inequality. We argue that for any strategy $\pi$ there is a sequence of local strategy profiles ${\vect{\sigma}}^1, {\vect{\sigma}}^2, \ldots$ such that 
	$$
	\lim_{i\to\infty} \pr^{{\vect{\sigma}}^i}(\diamond T)
	= 
	\pr^{\pi}(\diamond T).$$

	\begin{wrapfigure}[10]{r}{0.42\linewidth}
			\vspace{-4.5em}
		\begin{tikzpicture}[xscale=1,yscale=1,font=\scriptsize]
		
		\draw[pattern=north west lines, pattern color=black!30] (0,0) rectangle (1,1);
		\draw[pattern=north west lines, pattern color=black!30] (2,0) rectangle (3,1);
		
		\draw[pattern=north east lines, pattern color=black!30] (0,0) rectangle (1,3);
		
		\draw[thick, scale=1] (0, 0) grid (4, 3);
		
		\draw [decoration={brace}, decorate] (-.7,0) -- (-.7,3);
		
		\draw [decoration={brace}, decorate] (0,3.7) -- (4,3.7);
		
		\node[]  at (2,4) {$S'_2$};
		\node[draw,circle]  at (1,3.4) {$1$};
		\node[]  at (3,3.4) {$2$};
		
		\node[]  at (.5,-.3) {$\sync$};
		\node[]  at (1.5,-.3) {$\neg \sync$};
		\node[]  at (2.5,-.3) {$\sync$};
		\node[]  at (3.5,-.3) {$\neg \sync$};
		
		\node[]  at (-1,1.5) {$S'_1$};
		\node[]  at (-.4,2.5) {$1$};
		\node[]  at (-.4,1.5) {$2$};
		\node[draw,circle]  at (-.4,.5) {$3$};

		\node[]  at (.5,2.7) {$\donothing_1$};
		\node[]  at (.5,2.3) {$c_2(1,1)$};
		
		\node[]  at (.5,1.7) {$\donothing_1$};
		\node[]  at (.5,1.3) {$c_2(2,1)$};
		
		\node[]  at (.5,.7) {$c_1(3,1)$};
		\node[]  at (.5,.3) {$c_2(3,1)$};
		
		\node[]  at (2.5,.7) {$c_1(3,2)$};
		\node[]  at (2.5,.3) {$\donothing_2$};
		
		\node[]  at (2.5,1.7) {$\donothing_1$};
		\node[]  at (2.5,1.3) {$\donothing_2$};
		\node[]  at (2.5,2.7) {$\donothing_1$};
		\node[]  at (2.5,2.3) {$\donothing_2$};

		\end{tikzpicture}
		\label{fig:grid}
	\end{wrapfigure}

The crucial idea is that a strategy profile communicates correctly (with high probability) the current global state in a synchronisation state by \emph{delaying} as follows. The time is divided into phases, each of $|S'_1|\cdot 2|S'_2|$ slots (where $S'_i$ are the synchronisation states of player $i$). We depict a phase by the table on the right where the time flows from top to bottom and from left to right (as reading a book).
			Players $1$ and $2$ try to synchronise in the row and column, respectively, corresponding to their current states (in circle) and in each slot take the choice $c_i(s_1,s_2)$ optimal given the current global state is $(s_1,s_2)$; in the remaining slots they choose to do nothing.
			Since the players can change their choice only at random moments of time, their synchronising choice always stretches a bit into the successive silent slot (in a $\neg \sync$ column).
			The more we increase the size of each slot, the lower is the chance that a synchronisation choice of a player stretches to the next \emph{synchronisation} slot. Thus, the lower is the chance of an erroneous synchronisation.
			We define the size of the slot to increase with $i$ and also along the play so that for any fixed $i$ the probability of at least one erroneous synchronisation is bounded by $\kappa_i < 1$ and for $i\to \infty$, we have $\kappa_i \to 0$.
\qed

\section{Discussion and Conclusion}
\label{sec:discussion}

This paper has introduced a foundational framework for modelling and
synthesising distributed controllers interacting in continuous time
via handshake communication. The continuous time nature of the model
induces that the interleaving scheduler has in fact very little
power. We studied cooperative reachability problems for which we
presented a number of undecidability results, while for non-urgent
models we established decidability of both quantitative and
qualitative problems for the two-player case. In the framework
considered, the restriction to exponential distributions is a
technical vehicle, the results could have been developed in general
continuous time, e.g.~by using the model of stochastic
automata~\cite{dargenio-sa}.

Distributed IMCs can be considered as an attractive base model
especially in the context of information flow and other
security-related studies. This is because in contrast to the discrete
time setting, the power of the interleaving scheduler is no matter of
debate, it can leak no essential information. 

From a more general perspective, distributed synthesis of control
algorithms has received considerable attention in its
entirety~\cite{pnueli-rosner,pnueli-rosner-async,madhu1}. The asynchronous setting with handshake
synchronisation has been considered in~\cite{madhu}. Notably, our
assumption that players stay committed to a particular action choice
for the time in between state changes implies the necessity to let the
players explicitly solve distributed consensus problems. As done
in~\cite{madhu}, one can overcome this by letting local
players pick sets of enabled actions (or letting them change choice with infinite speed), and then let some built-in magic
pick a valid action from the intersection, 
implying that whenever possible a consensus is
reached for sure. Such a change would however reintroduce the scheduler.
      
We should point out that distributed IMCs are not fully
compositional: We are assuming a fixed vector of modules, and do not
discuss that modules themselves may be vectors. 
Otherwise, we would face the
phenomenon of auto-concurrency~\cite{auto}, where transitions with
identical synchronisation actions might get enabled concurrently,
despite not synchronising. This in turn would again re-introduce distinguishing power of the scheduler. 

Distributed Markov chains~\cite{SEJMT15}
constitute another recent discrete-time approach where interleaving
nondeterminism is tamed successfully via assumptions on the
communication structure.
The observation that continuous time reduces the power of the
interleaving scheduler is not entirely new. Though not explicitly
discussed, it underpins the model of probabilistic I/O automata
(PIOA)~\cite{WSS97} which uses I/O communication with
input-enabledness and output-determinism. In that setting,
output-determinism implies that local players have no decisive power,
and hence a continuous time Markov chain arises. 
We can approximate
I/O-based communication by distributed IMCs without the need
for output-determinism. The approximation is linked to arbitrarily small
but non-zero delays needed to cycle through synchronising action
sets. A profound investigation of the continuous-time particularities
of this and other synchronisation disciplines is considered an
interesting topic for future work.

\begin{paragraph}{Acknowledgements}
	This work is supported
	by the EU 7th Framework Programme projects 295261
	(MEALS) and 318490 (SENSATION), by the Czech Science Foundation project P202/12/G061, the DFG Transregional Collaborative
	Research Centre SFB/TR 14 AVACS, and by the CDZ project 1023 (CAP).
\end{paragraph}

\newpage

\bibliographystyle{plain}
\bibliography{biblio}

\newpage
\appendix

\section{Definition of the probability measure} \label{app:defs}

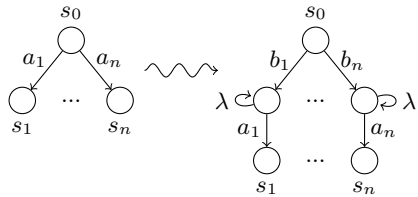
\begin{figure}[b]
	\begin{center}
	\begin{tikzpicture}[yscale=0.8,xscale=0.65]
	
	\tikzstyle{every node}=[ellipse, draw=black,
	inner sep=0pt, minimum width=10pt, minimum height=10pt]
	
	\draw (2,6) node [label=above:$s_0$] (s0)	{};
	\draw (1,5) node [label=below:$s_1$] (s1)	{};
	\draw (2,5) node [draw = none] (s2)	{...};
	\draw (3,5) node [label=below:$s_n$] (s3)	{};
	
	\draw (7,6) node [label=above:$s_0$] (t0)	{};
	\draw (6,4) node [label=below:$s_1$] (t1)	{};
	\draw (7,4) node [draw = none] (t2)	{...};
	\draw (8,4) node [label=below:$s_n$] (t3)	{};
	
	\draw (6,5) node (n1)	{};
	\draw (7,5) node [draw = none] (n2)	{...};
	\draw (8,5) node (n3)	{};
	
	\path[->] (3.5,5.5) edge [decorate, decoration={snake}] (5,5.5);
	\path[->] (s0) edge node [above left, draw=none] {$a_1$} (s1);
	\path[->] (s0) edge node [above right, draw=none] {$a_n$} (s3);
	
	\path[->] (t0) edge node [above left, draw=none] {$b_1$} (n1);
	\path[->] (t0) edge node [above right, draw=none] {$b_n$} (n3);
	
	\path[->] (n1) edge node [left, draw=none] {$a_1$} (t1);
	\path[->] (n3) edge node [right, draw=none] {$a_n$} (t3);
	
	\path[->] (n1) edge [loop left] node [left, draw=none] {$\lambda$} (n1);
	\path[->] (n3) edge [loop right] node [right, draw=none] {$\lambda$} (n3);
	
	\end{tikzpicture}
\end{center}
	\caption{Transformation for states with synchronisation transitions that adds new private actions $b_1, \ldots, b_n$.}
	\label{fig:transform}
\end{figure}
First, we pose an assumption on the module of each player $j$ to avoid that a play stops because no further steps are possible.

\begin{myassumption}\label{ass:non-stop}
We assume that every state has at least one outgoing delay transition or it only has outgoing private action transitions (that cannot be blocked by other modules). 
\end{myassumption}
This is no real restriction: for states without any transition, we can add a delay self-loop; states with synchronisation transitions are transformed as depicted in Figure~\ref{fig:transform}.
Note that in the second case, simply adding delay self-loops would change the behaviour because taking a delay self-loop allows the player to change the choice.

As common for continuous-time systems, the definition is based on \emph{cylinder} sets generated from interval-timed histories. An \emph{interval-timed history} is a sequence 
$$H = \vect{s_0} \vect{c_0} \arrow{a_1,I_1}{} \vect{s_1} \vect{c_1}  \cdots \arrow{a_k,I_k}{} \vect{s_k}$$
where each $I_{i+1}$ is a real interval bounding the time spent \emph{waiting in $\vect{s_i}$}. We further require that $I_i = [0,0]$ whenever $a_i \in \act$ and that the set of histories that \emph{conform} to $H$ is non-empty. We say that a history 
$\vect{s_0} \vect{c_0} \arrow{a_1,t_1}{} \vect{s_1} \vect{c_1}  \cdots \arrow{a_k,t_k}{} \vect{s_k}$ \emph{conforms to $H$} if $t_i - t_{i-1} \in I_i$ for each $1 \leq i \leq k$ (where $t_0 = 0$).

%
Slightly abusing notation, we interpret an interval-timed history $H$ also as the set of histories that conform to $H$.
We define the cylinder $\cyl(H) = \{\rho \in \play \mid \exists i . \rho_{\leq i} \in H\}$ as the set of plays which have a prefix in $H$.
We define the measurable spaces $(\play, \sigmafield)$ and $(\hist, \sigmafielda)$ where $\sigmafield$ and $\sigmafielda$ are the $\sigma$-algebra generated from all cylinders and interval-timed histories, respectively:
\begin{align*}
\sigmafield &= \sigma\left( \{ \cyl(H) \mid \text{$H$ is an interval-timed history} \} \right) \\
\sigmafielda &= \sigma\left( \{ H \mid \text{$H$ is an interval-timed history} \} \right)
\end{align*}
Analogously, we can define measurable spaces $(\hist_j, \sigmafielda_j)$ over local histories (by allowing also $I_i$ for action transitions to have non-zero length).

For a given \emph{strategy profile} $\vect{\sigma}$, i.e. a tuple of strategies $\vect{\sigma} = (\sigma_1,\ldots,\sigma_n)$ of individual players, a given scheduler $\delta$ and initial state $\vect{s_0}$, we obtain a purely stochastic process and we can define a probability measure $\pr_{\vect{s_0}}^{\vect{\sigma},\delta}$ over $\play$. 
The probability measure is uniquely determined by fixing probabilities for cylinder sets $\cyl(H)$ for any interval-timed history
$$H = \vect{s_0} \vect{c_0} \arrow{a_1,I_1}{} \vect{s_1} \vect{c_1}  \cdots \arrow{a_k,I_k}{} \vect{s_k}.$$
Let $\alpha_1,...,\alpha_v$ be the indices of delay transitions, i.e. for each $\alpha_j$ we have $a_{\alpha_j} \in \agt$; and 
$\beta_1,...,\beta_u$ be the indices of action transitions, i.e. for each $\beta_j$ we have $a_{\beta_j} \in \act$.
For $1 \leq i \leq v$, let $\ell_i = \inf(I_{\alpha_i})$ and $u_i = \sup(I_{\alpha_i})$. 
Then $\pr_{\vect{s_0}}^{\vect{\sigma},\delta}(\cyl(H))$ is defined as
$$\displaystyle
\int_{\ell_1}^{u_1}... \int_{\ell_v}^{u_v} \prod_{1 \leq i \leq v} De^{\alpha_i}(d_i) \cdot \prod_{0 \leq i < k} St^{i} \cdot \prod_{1 \leq i \leq u} Sc^{\beta_i} \; \mathrm{d} d_v \cdots \mathrm{d} d_1$$
where the terms $De^i(d)$, $Sc^i$, and $St^i$ express the contribution of the $i$th transition to the overall probability caused by the delays, decisions of the scheduler, and decision on the strategies, respectively (see below).
The variables $d_i$ denote the delay at $i$th delay transitions and induce a history 
$$h = \vect{s_0} \vect{c_0} \arrow{a_1,t_1}{} \vect{s_1} \vect{c_1}  \cdots \arrow{a_k,t_k}{} \vect{s_k}$$ such that $t_i = \sum_{\ell,\alpha_\ell \leq i} d_{\alpha_\ell}$. 
Finally, we set
\begin{align*}
De^i(d) &= Q(\vect{s}_i,\vect{s}_{i+1})\cdot e^{-E(\vect{s}_i) d}, \\
\intertext{where $Q(\vect{s},\vect{s'}) = \sum_{\vect{s} \marko{\lambda, j} \vect{s'}} \lambda$ and $E(\vect{s}) = \sum_{\vect{s}' \neq \vect{s}} Q(\vect{s},\vect{s}')$, and}
St^i &= \prod_{j \in \syn(a_i)} \sigma_j(\pi_j(h_{\leq i}))(\vect{c_i}(j)), \\
Sc^i &= \delta(h_{\leq i-1},\vect{c_{i-1}})(a_i).
\end{align*}

When $\vect{s'_0} \neq \vect{s_0}$ then $\pr_{\vect{s'_0}}^{\vect{\sigma},\delta}(\cyl_\mathcal{G}(H)) = 0$.

\section{Proof of Theorem \ref{theo:sched_power}}

Before we prove Theorem \ref{theo:sched_power} we need a few definitions and lemmas.

For an interval-timed history let $\pi_j(H) = \{\pi_j(\rho) \in \play \mid \rho \textup{ conforms to } H \}$. Further, Let $\sim$ be an equivalence relation on interval-timed histories defined such that $H \sim H'$ for two interval-timed histories $H$ and $H'$ if and only if $\pi_j(H) = \pi_j(H')$ for every $j \in \agt$. We write $[H]$ for the set of histories $h$ such that there exists $H' \sim H$ with $h \in H'$. 

Now, for an interval-timed history $H = \vect{s_0} \vect{c_0} \arrow{a_1,I_1}{} \vect{s_1} \vect{c_1}  \cdots \arrow{a_k,I_k}{} \vect{s_k}$ such that the last action $a_k \in \agt$ if $k > 0$ we define the interleaving-abstract cylinder as the set of plays with a prefix conforming to interval-timed histories $H'$ assuring $H \sim H'$:
$$ \cylta(H) =\{ \rho \in \play \mid \rho_{\leq k} \in [H]\}$$
Note that the interleaving-abstract cylinders are contained in the $\sigma$-algebra $\sigmafield$ generated by the cylinder sets. We can therefore define a sub-$\sigma$-algebra $\sigmafieldb$ of $\sigmafield$ generated by interleaving-abstract cylinders
\begin{align*}
\sigmafieldb = \sigma(\{\cylta(H) \mid & H \textup{ is an interval-timed history }\\
 & \textup{s.t. the last action is not in } \act \})
\end{align*}
Since this is a sub-$\sigma$-algebra of $\sigmafield$ it inherits the probability measures defined earlier restricted to $\sigmafieldb$. Note that interleaving-abstract events are also 0-time abstract. That is, they are events which are invariant under reordering of  0-time interactions. Indeed, if no player can distinguish two histories $h$ and $h'$, then the delay transitions must be the same in the two histories. Further, since all players have access to global time it must be the same actions that are performed in $h$ and $h'$ in every 0-duration sub-history Now, the only way in which $h$ and $h'$ can differ is in the interleaving of these 0-duration subhistories. 

\begin{mylemma}
\label{lem:pureunique}
Let $\sigma = (\sigma_1,...,\sigma_n)$ be a pure strategy profile and let $\delta, \delta'$ be two schedulers. Further, let $H = \vect{s_0} \vect{c_0} \arrow{a_1,I_1}{} \vect{s_1} \vect{c_1}  \cdots \arrow{a_k,I_k}{} \vect{s_k}$ be an interval-timed history such that $a_i \in \act$ for all $1 \leq i < k$ and $a_k \in \agt$. Then
$$\pr_{\vect{s_0}}^{\sigma, \delta}(\cylta(H)) = \pr_{\vect{s_0}}^{\sigma, \delta'}(\cylta(H))$$
Further, this probability is either 0 or 1.
\end{mylemma}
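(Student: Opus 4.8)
The plan is to exploit that a \emph{pure} strategy profile removes all randomness from the players' choices, so that the only remaining sources of randomness are the scheduler's resolution of interleaving and the random delays. Since $a_1,\ldots,a_{k-1}\in\act$, the interval-timed history $H$ forces $I_1=\cdots=I_{k-1}=[0,0]$, so the prefix recorded by $H$ is a maximal zero-time burst of action transitions from $\vect{s_0}$ to $\vect{s_{k-1}}$, sealed by the first delay transition $a_k\in\agt$. First I would reduce both assertions to an analysis of this burst under the fixed Dirac-valued choices prescribed by $\vect\sigma$.

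The key structural step, and the one I expect to be the main obstacle, is a \emph{confluence} (diamond) property of the zero-time burst. First, concurrently enabled actions are independent: if $a,b\in\enab(\vect c)$ with $a\neq b$, then no player can be committed to both simultaneously (each player fixes a single choice $\vect c(j)$), so $\syn(a)\cap\syn(b)=\emptyset$; hence firing $a$ leaves every player of $\syn(b)$ untouched, $b$ stays enabled, and the two firings commute (local confluence). Second, the cycle-freeness assumption imposed on action transitions guarantees that no infinite sequence of action firings can occur in zero time, i.e. the burst terminates. By Newman's lemma, local confluence together with termination yields a unique normal form: every maximal interleaving of enabled actions from $\vect{s_0}$ reaches the same frontier state $\vect{s_{k-1}}$ (the unique reachable state with $\enab=\emptyset$), and, since each firing is visible only to its synchronising players and the subsequent re-choices are deterministic, all such interleavings are pairwise local-observation equivalent, hence fall into a single $\sim$-class. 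I would prove this by induction on the burst length.

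For the scheduler-independence claim I then use that $\cylta(H)$ is, by its very definition, closed under $\sim$: it is the union over all $H'\sim H$ of the histories conforming to $H'$, i.e. of all interleavings of the burst. By the confluence property the scheduler $\delta$ can only permute the order of these commuting, pairwise-independent firings, every permutation yielding a history in $[H]$, and none escaping it. Concretely, in the cylinder-measure formula of \citeapp{app:defs} the only place $\delta$ enters is through the factors $Sc^{\beta_i}=\delta(\cdot)(a_{\beta_i})$; summing the product of these factors over all orderings collected in $[H]$ telescopes to $1$ independently of $\delta$, whereas the delay factors $De$ and the pure-strategy factors $St^i\in\{0,1\}$ do not involve $\delta$ at all. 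This gives $\pr^{\vect\sigma,\delta}(\cylta(H))=\pr^{\vect\sigma,\delta'}(\cylta(H))$.

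Finally, for the $0/1$ dichotomy I would argue that, with $\vect{s_0}$ and the pure profile $\vect\sigma$ fixed, the event $\cylta(H)$ asks only that the initial maximal zero-time burst coincide, up to reordering, with the one recorded in $H$, after which a delay necessarily follows from the deadlock-free frontier $\vect{s_{k-1}}$. Either $H$ is \emph{consistent} with $\vect\sigma$ — every choice $\vect c_i$ agrees with the Dirac distribution $\sigma_j(\pi_j(h_{\le i}))$ and every $a_i\in\enab(\vect c_{i-1})$ — in which case the confluent burst is realised with probability $1$ and sealed by a delay almost surely, so $\pr(\cylta(H))=1$; or $H$ violates one of these constraints, forcing some factor $St^i=0$ and hence $\pr(\cylta(H))=0$. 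The confluence lemma of the second paragraph is precisely what guarantees that ``consistent with $\vect\sigma$'' is a property of the whole $\sim$-class rather than of a single interleaving, which is what makes both the dichotomy and the scheduler-independence well defined.
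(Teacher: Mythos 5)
Your proposal is correct in substance, but it establishes the key fact by a genuinely different route than the paper. Both arguments boil down to showing that, under a pure profile, the maximal zero-time burst of action transitions preceding the first delay is determined up to local observation equivalence independently of the scheduler. The paper proves this by contradiction: assuming two schedulers leave some player's local play at different lengths after the burst, it extracts an infinite strictly decreasing chain in a finite partial order on pairs (action, occurrence count), which is impossible; uniqueness of the local projections then follows since each local play is a prefix of one fixed linear sequence. You instead prove a one-step commutation (diamond) property --- correctly observing that two distinct concurrently enabled actions must have disjoint synchronisation sets because each player commits to a single transition, so the firings commute together with the deterministic re-choices --- and then conclude that all maximal interleavings form a single $\sim$-class with identical local projections, after which the scheduler factors sum to $1$ over this class. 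Your approach is arguably cleaner and closer to standard Mazurkiewicz-trace reasoning, and it buys a more constructive description of $[H]$ as the set of linearisations of one trace. Two caveats: (i) Newman's lemma only gives you the unique \emph{endpoint} of the burst, not the equality of all local projections along the way; for that you need the strict one-step diamond you established plus the permutation induction you sketch, so that induction, not Newman, is the real centrepiece. (ii) Like the paper, your $0/1$ dichotomy silently ignores the contribution of the final delay transition (which race winner, which target, and whether the delay lands in $I_k$), so the ``probability $1$'' claim should be read as a statement about the discrete scheduler-and-strategy part conditioned on the delays --- which is exactly the form in which the lemma is used afterwards.
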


\begin{proof}
First note that before a delay transition happens there can be no loops in any of the local histories $\pi_j(H)$. When we restrict to histories up to the first delay transition, a pure strategy for player $j$ can be considered to be simply a maximal sequence $(a^j_1, s^j_1) ... (a^j_{\ell_j}, s^j_{\ell_j})$ of choices of player $j$. This is because for any strategy $\sigma_j$ of player $j$, the $j$-play moves along the unique path in the module of player $j$ that is chosen by player $j$ at each step. The only thing player $j$ can observe is whether the transition (that he chose) is taken or not.

Assume now that we have a pure strategy profile $\sigma = (\sigma_1,...,\sigma_n)$. This induces such a sequence of choices for each player. The longest possible $j$-play that can occur under $\sigma$ before the first delay transition is $\rho^j = s_0(j) (a^j_1, s^j_1), \arrow{a^j_1, 0}{} s^j_1 ... \arrow{a^j_{\ell_j}, 0}{} s^j_{\ell_j}$. Further, all possible $j$-plays are prefixes of $\rho^j$. We now suppose for contradiction that there are two histories $h, h'$ consistent with $\sigma$ ending after the first delay transition such that there exists a player $j$ with $|\pi_j(h)| \neq |\pi_j(h')|$. Since $\pi_{j}(h)$ and $\pi_{j}(h')$ are both prefixes of $\rho^{j}$ either $\pi_{j}(h)$ is a proper prefix of $\pi_{j}(h')$ or the other way around. Without loss of generality let $u = |\pi_{j}(h)| < |\pi_{j}(h')|$.

Suppose $\sigma_j(\pi_j(h)) = (b_0, s)$. Now, player $j$ has not been able to synchronize on $b_0 \in \act$ in the last state of $h$. However, he has been able to synchronize on it along $h'$ due to different choices of the scheduler. Note that at earlier points on $h$ he might have synchronized on $b_0$ a number of times already. Let this number of times be $c_0 \in \mathbb{N}$. This means that all players capable of synchronizing on $b_0$ must have done so exactly $c_0$ times along $h$. At least one of these players, let us call him $j_1$, must have stopped before committing to synchronize on $b_0$ the $(c_0+1)$th time, because otherwise the play would have progressed since no action can be enabled when a delay transition takes place. Note that each of these players are willing to perform $b_0$ at least $c_0+1$ times at some point since this happens in $h'$. Let $j_1$ be committed to synchronize on action $b_1 \neq b_0$ when the play stops in $h$ and suppose he has already synchronized $c_1$ times on $b_1$ before this point. We can now perform the same reasoning again to find a player $j_2$ that has stopped in $h$ before reaching the point where he is ready to synchronize on $b_1$ for the $(c_1+1)$th time. At this point he is committed to performing the action $b_2$ for the $(c_2+1)$th time. This reasoning gives us an infinite sequence $j_i$ of players committed to actions $b_i$ in the last state after having performed the action $b_i$ exactly $c_i$ times before.


We now introduce a partial order $\preceq \subseteq (\act \times \mathbb{N})^2$ on elements $(a,d) \in (\act \times \mathbb{N})$ such that there exists $j$ so action $a$ occurs at least $d$ times in $h'$. The relation is defined such that $(a,d) \preceq (a',d')$ if there is a player $j$ such that $a$ occurs $d$ times on $h'$ before $a'$ occurs $d'$ times on $h'$ (and $a'$ actually does occur $d'$ times at some point on $h'$). It is reflexive and transitive because all players that has an action in their alphabet must commit in order to synchronize on it. Anti-symmetry follows from this and the fact that $\rho^j$ is linear for every $j$.

Now, we have that $(b_{i+1},c_{i+1}) \prec (b_i,c_i)$ for all $i \ge 0$. This is the case firstly because $\pi_{j_i}(h)$ is a prefix of $\pi_{j_i}(h')$ for the players giving rise to the sequence of $b_i$'s. Secondly, because $b_{i+1} \neq b_i$. This means that

$$(b_0,c_0) \succ (b_1,c_1) \succ (b_2,c_2) ...$$

is an infinite strictly decreasing sequence. Since $\succ$ is only defined on a finite number of elements this gives a contradiction. Thus, $|\pi_j(h)| = |\pi_j(h')|$. Further, $\pi_j(h) = \pi_j(h')$ since one is a prefix of the other. Since $j$ was chosen arbitrarily, this means that the local history $\pi_j(h)$ before the delay transition cannot be changed by any scheduler and is uniquely determined by the pure strategies. From this, the lemma follows.
\qed
\end{proof}

We now extend to arbitrary events in $\sigmafieldb$ by applying the result above. However, first we need some notation. If $H = \vect{s_0} \vect{c_0} \arrow{a_1,I_1}{} \vect{s_1} \vect{c_1}  \cdots \arrow{a_k,I_k}{} \vect{s_k}$ is an interval-timed history then the set of histories $h \in [H]$ with specific delays $d_1,...,d_v$ on the delay transitions is denoted $[H]^{d_1,...,d_v}$. Note that this set is finite.

\begin{mylemma}
\label{lem:purestrats}
Let $E \in \sigmafieldb$, $\vect{s_0} \in S$, $\sigma = (\sigma_1,...,\sigma_n)$ be a pure strategy profile and $\delta,\delta'$ be two schedulers. Then

$$\pr_{\vect{s_0}}^{\sigma, \delta}(E) = \pr_{\vect{s_0}}^{\sigma, \delta'}(E)$$

\end{mylemma}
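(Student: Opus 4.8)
The plan is to pass from the generators of $\sigmafieldb$ to all of $\sigmafieldb$ by a measure-uniqueness argument, and to establish the statement on the generators by an induction that feeds on Lemma~\ref{lem:pureunique}. Fix the pure profile $\sigma$ and abbreviate $\mu = \pr_{\vect{s_0}}^{\sigma,\delta}$ and $\mu' = \pr_{\vect{s_0}}^{\sigma,\delta'}$. Let $\mathcal{D} = \{E \in \sigmafieldb \mid \mu(E) = \mu'(E)\}$. Since $\mu$ and $\mu'$ are probability measures, $\mathcal{D}$ is a Dynkin system: it contains $\play$, it is closed under proper differences because $\mu(B \setminus A) = \mu(B) - \mu(A)$ for $A \subseteq B$, and it is closed under increasing countable unions by continuity from below. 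The interleaving-abstract cylinders $\cylta(H)$, with $H$ ranging over interval-timed histories whose last transition is a delay, generate $\sigmafieldb$ and contain $\play$; I would check that they form a $\pi$-system, i.e. that $\cylta(H_1) \cap \cylta(H_2)$ is empty or again such a cylinder. This is routine: the number of transitions and the positions of the delay transitions are invariant under $\sim$, so comparing two histories amounts to checking a prefix relation up to observation equivalence and intersecting the corresponding time intervals. The $\pi$-$\lambda$ theorem then reduces the lemma to proving $\mu(\cylta(H)) = \mu'(\cylta(H))$ for every such $H$.

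I would prove this equality by induction on the number $v$ of delay transitions in $H$. For $v = 1$, the history $H$ is a maximal block of zero-time action transitions followed by a single delay, which is exactly the shape of Lemma~\ref{lem:pureunique}; that lemma gives scheduler-independence directly (and even that the value is $0$ or $1$), establishing the base case.

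For the step, I would split $H$ at its first delay into $H = H_1 \cdot H_{\mathrm{rest}}$, where $H_1$ has the $v = 1$ shape and ends with the first delay transition. By Lemma~\ref{lem:pureunique}, every history realizing the first zero-time block produces the same local projections for all players, hence the same global state $\vect{s}^\ast$ just before the first delay and the same committed choices there; and the probability of reaching $[H_1]$ is scheduler-independent. Consequently the first delay is governed solely by the exponential race out of $\vect{s}^\ast$, whose law (the delay $d_1$ together with which delay transition fires) does not involve the scheduler. Conditioned on this outcome the process restarts from the resulting global state with the continuation pure profile obtained by feeding the players their now-determined local histories, and the residual event is the interleaving-abstract cylinder $\cylta(H_{\mathrm{rest}})$ carrying $v - 1$ delays. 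Applying the induction hypothesis to the continuation profile and integrating over $d_1$ against the scheduler-free delay density yields $\mu(\cylta(H)) = \mu'(\cylta(H))$.

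I expect the main obstacle to be making this restart rigorous against the integral definition of $\pr(\cyl(\cdot))$ from the appendix: one must argue that summing the scheduler contributions $Sc^i$ over all orderings lying in the equivalence class $[H_1]$ telescopes to a single scheduler-independent weight for the first block, so that the integral factors as (first-block weight) $\times$ (continuation probability). This is exactly the point where passing from ordinary cylinders to interleaving-abstract cylinders is indispensable, and where Lemma~\ref{lem:pureunique} carries the load; the $\pi$-system bookkeeping for continuous-time interval cylinders is fiddly but routine and I would keep it terse.
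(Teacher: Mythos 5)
Your proposal is correct and follows essentially the same route as the paper's proof: reduce to the interleaving-abstract cylinder generators of $\sigmafieldb$ and induct on the number of delay transitions, using Lemma~\ref{lem:pureunique} to show that each zero-duration block contributes a scheduler-independent weight (in fact $0$ or $1$) so that the defining integral factors. The only differences are cosmetic: you make the passage from generators to all of $\sigmafieldb$ explicit via the $\pi$--$\lambda$ theorem (the paper leaves this implicit), and you peel off the \emph{first} delay and restart with a continuation profile, whereas the paper peels off the \emph{last} delay and applies the induction hypothesis to the prefix with the original strategies, which avoids having to formalise the restart.
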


\begin{proof}
We show this by showing that for every time-abstract cylinder $\cylta(H)$ for an interval-timed history $H = \vect{s_0} \vect{c_0} \arrow{a_1,I_1}{} \vect{s_1} \vect{c_1}  \cdots \arrow{a_k,I_k}{} \vect{s_k}$ such that $a_k \in \agt$ if $k > 0$, every pure strategy profile $\sigma$ and every pair $\delta, \delta'$ of schedulers we have
 
$$\pr_{\vect{s_0}}^{\sigma, \delta}(\cylta(H)) = \pr_{\vect{s_0}}^{\sigma, \delta'}(\cylta(H))$$

First, for any scheduler $\delta$ we have

$$\pr_{\vect{s_0}}^{\sigma, \delta}(\cylta(H)) = \sum_{H' \sim H} \pr_{\vect{s_0}}^{\sigma, \delta}(\cyl(H'))$$

since these cylinder sets are disjoint.

Suppose	that $a_i \in \agt$ for the indices $\alpha_1,...,\alpha_v$ and $a_i \in \act$ for the indices $\beta_1,...,\beta_u$. For $1 \leq i \leq v$, let $\ell_i = \inf(I_{\alpha_i})$ and $u_i = \sup(I_{\alpha_i})$. Next, consider fixed delays $d_1,...,d_v$ and let $[H]^{d_1,...,d_v} = \{h^1,...,h^r\}$. For $1 \leq m \leq r$ we denote

$$h^m = \vect{s^m_0} \vect{c^m_0} \arrow{a^m_1,t_1}{} \vect{s^m_1} \vect{c^m_1}  \cdots \arrow{a^m_k,t_k}{} \vect{s^m_k}$$

where the timestamps $t_i$ are induced by the delays $d_1,...,d_v$ on the delay transitions. Now, since every interval-timed history $H' \sim H$ contains the same intervals and same delay transitions we have
\\

$\displaystyle \pr_{\vect{s_0}}^{\sigma, \delta}(\cylta(H))$\\

$\displaystyle = \sum_{H' \sim H} \pr_{\vect{s_0}}^{\sigma, \delta}(\cyl(H'))$\\

$\displaystyle = \int_{\ell_1}^{u_1}... \int_{\ell_v}^{u_v} \prod_{1 \leq i \leq v} Q(s^m_{\alpha_i}, s^m_{\alpha_i + 1}) \cdot e^{-E(d_i)}$\\

$\displaystyle\cdot \sum_{h^m \in [H]^{d_1,...,d_v}} \bigg ( \prod_{1 \leq i \leq u} \delta(h^m_{\leq \beta_{i} - 1}, c^m_{\beta_i - 1})(a^m_{\beta_i})$\\

$\displaystyle\cdot \prod_{0 \leq i < k} \prod_{j \in \syn(a^m_i)} \sigma_j(\pi_j(h^m_{\leq i}))(c^m_i(j)) \bigg)  \; \mathrm{d} d_v \cdots \mathrm{d} d_1$ 
\\

We will now show by induction on $v$ that for any fixed delays $d_1,...,d_v$
\\

$p^{d_1,...,d_v} = \displaystyle \sum_{h^m \in [H]^{d_1,...,d_v}} \bigg ( \prod_{1 \leq i \leq u} \delta(h^m_{\leq \beta_{i} - 1}, c^m_{\beta_i - 1})(a^m_{\beta_i})$
\\

$\displaystyle \prod_{0 \leq i < k} \prod_{j \in \syn(a^m_i)} \sigma_j(\pi_j(h^m_{\leq i}))(c^m_i(j)) \bigg)$
\\

either equals $0$ or $1$ independently of the scheduler $\delta$. This implies that $\displaystyle \pr_{\vect{s_0}}^{\sigma, \delta}(\cylta(H))$ is independent of the scheduler and thus proves the lemma.

For the base case suppose that $v = 0$. Then the only possibility is $H = \vect{s_0}$ since $H$ cannot end with an action transition. In this case it is immediate that $p^{d_1,...,d_v} = 1$.

For the inductive case suppose that $v > 0$ and that it holds for all $H'$ with less than $v$ delay transitions. Note that $s_{\alpha_{v-1}}$ is the same state for every $h \in [H]^{d_1,...,d_v}$. Thus, we have $[H]^{d_1,...,d_v} = \{h \cdot h' \mid h \in [H_{\leq \alpha_{v-1}}]^{d_1,...,d_{v-1}} \textup{ and } h' \in [H_{\geq \alpha_{v-1}}]^{d_v}\}$. That is, the set of histories in $[H]^{d_1,...,d_v}$ is obtained by gluing together every prefix up to $s_{\alpha_{v-1}}$ with every suffix starting in $s_{\alpha_{v-1}}$. For two histories $h^m$ and $h^n$ in these two sets we denote their concatenation (where final state of $h^m$ is merged with initial state of $h^n$) by $h^{mn}$. This gives us 

$p^{d_1,...,d_v} = \displaystyle \sum_{h^m \in [H_{\leq \alpha_{v-1}}]^{d_1,...,d_{v-1}}} \sum_{h^n \in [H_{\geq \alpha_{v-1}}]^{d_v}} \bigg ( $
\\

$\displaystyle \prod_{i: \beta_i < \alpha_{v-1}} \delta(h^m_{\leq \beta_{i} - 1}, c^m_{\beta_i - 1})(a^m_{\beta_i})$
\\

$\displaystyle \prod_{0 \leq i < \alpha_{v-1}} \prod_{j \in \syn(a^m_i)} \sigma_j(\pi_j(h^m_{\leq i}))(c^m_i(j))$
\\

$\displaystyle \prod_{i: \beta_i > \alpha_{v-1}} \delta(h^{mn}_{\leq \beta_{i} - 1}, c^{mn}_{\beta_i - 1})(a^{mn}_{\beta_i})$
\\

$\displaystyle \prod_{\alpha_{v-1} \leq i < \alpha_v} \prod_{j \in \syn(a^{mn}_i)} \sigma_j(\pi_j(h^{mn}_{\leq i}))(c^{mn}_i(j)) \bigg)$
\\

$= \displaystyle \sum_{h^m \in [H_{\leq \alpha_{v-1}}]^{d_1,...,d_{v-1}}} $
\\

$\cdot \displaystyle \prod_{i: \beta_i < \alpha_{v-1}} \delta(h^m_{\leq \beta_{i} - 1}, c^m_{\beta_i - 1})(a^m_{\beta_i})$
\\

$\cdot \displaystyle \prod_{0 \leq i < \alpha_{v-1}} \prod_{j \in \syn(a^m_i)} \sigma_j(\pi_j(h^m_{\leq i}))(c^m_i(j))$
\\

$\cdot \displaystyle \sum_{h^n \in [H_{\geq \alpha_{v-1}}]^{d_v}} \bigg ( \prod_{i: \beta_i > \alpha_{v-1}} \delta(h^{mn}_{\leq \beta_{i} - 1}, c^{mn}_{\beta_i - 1})(a^{mn}_{\beta_i})$
\\

$\displaystyle \prod_{\alpha_{v-1} \leq i < \alpha_v} \prod_{j \in \syn(a^{mn}_i)} \sigma_j(\pi_j(h^{mn}_{\leq i}))(c^{mn}_i(j)) \bigg)$
\\

If we can show that the second sum is either $0$ or $1$ independently of the scheduler then we can apply the induction hypothesis on the remaining part. Note that no player can distinguish between the prefixes $h^m$ since they are in the same equivalence class. Thus, using pure strategies the players can only base their decision on what happens after reaching $s^{mn}_{\alpha_{v-1}}$. Now, using the same technique as in the proof of Lemma \ref{lem:pureunique} the result follows.
\qed
\end{proof}

From Lemma \ref{lem:purestrats} we know that the probabilities of events in $\sigmafieldb$ are independent of the scheduler when the strategy profile is pure. Using this we can show that this is also the case for non-pure strategy profiles. The idea of the proof is similar to the proof of Kuhn's Theorem \footnote{H. W. Kuhn, ''Extensive games and the problem of information,'' Annals of Mathematics Studies, vol. 28, 1953} behavioural strategies are shown equivalent in perfect recall extensive-form games. This intuition is applied in the 0-duration subhistories.

\begin{reftheo}{theo:sched_power}
Let $E \in \sigmafieldb$, let $\vect{s_0} \in S$ be a state, let $\sigma = (\sigma_1,...,\sigma_n)$ be a strategy profile and $\delta,\delta'$ be two schedulers. Then

$$\pr_{\vect{s_0}}^{\sigma, \delta}(E) = \pr_{\vect{s_0}}^{\sigma, \delta'}(E)$$
\end{reftheo}

\begin{proof}
We show this by showing that for every time-abstract cylinder $\cylta(H)$ for an interval-timed history $H = \vect{s_0} \vect{c_0} \arrow{a_1,I_1}{} \vect{s_1} \vect{c_1}  \cdots \arrow{a_k,I_k}{} \vect{s_k}$ such that $a_k \in \agt$ if $k > 0$, every strategy profile $\sigma$ and every pair $\delta, \delta'$ of schedulers we have
 
$$\pr_{\vect{s_0}}^{\sigma, \delta}(\cylta(H)) = \pr_{\vect{s_0}}^{\sigma, \delta'}(\cylta(H))$$

We use the same notation as in the proof of Lemma \ref{lem:purestrats}. Again we have
\\

$\displaystyle \pr_{\vect{s_0}}^{\sigma, \delta}(\cylta(H))$\\

$\displaystyle = \int_{\ell_1}^{u_1}... \int_{\ell_v}^{u_v} \prod_{1 \leq i \leq v} Q(s^m_{\alpha_i}, s^m_{\alpha_i + 1}) \cdot e^{-E(d_i)}$\\

$\displaystyle\cdot \sum_{h^m \in [H]^{d_1,...,d_v}} \bigg ( \prod_{1 \leq i \leq u} \delta(h^m_{\leq \beta_{i} - 1}, c^m_{\beta_i - 1})(a^m_{\beta_i})$\\

$\displaystyle\cdot \prod_{0 \leq i < k} \prod_{j \in \syn(a^m_i)} \sigma_j(\pi_j(h^m_{\leq i}))(c^m_i(j)) \bigg)  \; \mathrm{d} d_v \cdots \mathrm{d} d_1$ 
\\

For fixed delays $d_1,...,d_v$ we will show that a discrete probability distribution over a finite set of pure strategies gives rise to the same probability as above. As this is independent of the scheduler by Lemma \ref{lem:purestrats} so is the probability for the mixed strategies. As this holds for all delays, the Theorem follows.

Now, for each history $h^m \in [H]^{d_1,...,d_v}$ we define a pure strategy profile $\sigma^{h^m}$ that plays according to $h^m$ as well as a probability $p^{h^m}$ defined by
\\

$$p^{h^m} = \prod_{0 \leq i < k} \prod_{j \in \syn(a^m_i)} \sigma_j(\pi_j(h^m))(c^m_i(j))$$

If $p = \sum_{h^m \in [H]^{d_1,...,d_v}} p^{h^m} < 1$ then define a pure strategy $\sigma''$ that plays such that a $h \in [H]^{d_1,...,d_v}$ is not possible. Further, define $p^{\sigma''} = 1 - p$. Now, consider the experiment of using scheduler $\delta$ and picking either one of the strategy profiles $\sigma^{h^m}$ with probability $p^{h^m}$ or $\sigma''$ with probability $p^{\sigma''}$ and applying these strategies. The probability that a prefix of the play is in $[H]^{d_1,...,d_v}$ in this experiment is independent of the scheduler because of Lemma \ref{lem:purestrats}. We will show that it is in fact equal to $\pr_{\vect{s_0}}^{\sigma, \delta}(\cylta(H))$. Indeed, the probability is
\\

$\displaystyle \sum_{\sigma^{h^m}} p^{h^m} \sum_{h^m \in [H]^{d_1,...,d_v}}  \cdot \bigg ( \prod_{1 \leq i \leq v} Q(s^m_{\alpha_i}, s^m_{\alpha_i + 1}) \cdot e^{-E(d_i)}$\\

$\displaystyle\cdot \prod_{1 \leq i \leq u} \delta(h^m_{\leq \beta_{i} - 1}, c^m_{\beta_i - 1})(a^m_{\beta_i})$\\

$\displaystyle\cdot \prod_{0 \leq i < k} \prod_{j \in \syn(a^m_i)} \sigma^{h^m}_j(\pi_j(h^m_{\leq i}))(c^m_i(j)) \bigg)$ 
\\

$\displaystyle = \prod_{1 \leq i \leq v} Q(s^m_{\alpha_i}, s^m_{\alpha_i + 1}) \cdot e^{-E(d_i)}$\\

$\displaystyle\cdot \sum_{h^m \in [H]^{d_1,...,d_v}} \bigg ( p^{h^m} \cdot \prod_{1 \leq i \leq u} \delta(h^m_{\leq \beta_{i} - 1}, c^m_{\beta_i - 1})(a^m_{\beta_i}) \bigg )$
\\

By inserting the expression for $p^{h^m}$ the result follows.
\qed
\end{proof}

\section{Full proof of Proposition \ref{prop:reduc_decpomdp}}
\label{app:reduction}

\begin{refprop}{prop:reduc_decpomdp}
For a DEC-POMDP $\pomdp$ with $n$ players and target set $T$ we can construct in polynomial time a distributed IMC $\img$ with $n+1$ players and target set $T'$ such that 
 $$\exists \vect{\sigma}:  \pr^{\vect{\sigma}}_\img(\diamond T) = p 
 \;\; \iff \;\; 
 \exists \vect{\sigma}': \pr^{\vect{\sigma}'}_\pomdp(\diamond T') = p.
 $$
 
\end{refprop}

\begin{proof}
Let us fix a DEC-POMDP $\pomdp = (S, \agt, (\act_i)_{1 \leq i \leq n}, \delta, (\obs_i)_{1 \leq i \leq n}, \obsfunc, \initstate)$ with $n$ players $\agt = \{1,...,n\}$. 
Further, let $\act_i = \{a_{i1},...,a_{im_i} \}$ and $\obs_i = \{o_{i1},...,o_{i\ell_i} \}$ for player $i \in \agt$. 
The distributed IMC $\img$ has $n+1$ modules, one module for each player in $\pomdp$ and the \emph{main} module responsible for their synchronization. Intuitively,
\begin{itemize}
	\item the module of every player $i$ stores the last local observation in its state space. Every step of $\pomdp$ is modelled as follows: The player \emph{outputs} to the main module the action it chooses and then \emph{inputs} from the main module the next observation.
	\item The main module stores the global state in its state space. Every step of $\pomdp$ corresponds to the following: The main module \emph{inputs} the actions of all players one by one, then it randomly picks the new state and new observations according to the rules of $\pomdp$ based on the actions collected. The observations are lastly \emph{output} to all players, again one by one.
	The main module is constructed so in every state there is at most one action transition. Thus, there is only one trivial strategy that cannot influence anything.
\end{itemize}

The construction of module for player $i \in \{1,...,n\}$ is illustrated in Figure~\ref{fig:syncmodule_app} along with constructions for input and output. The interesting part is the inputting mechanism. Inputting an action from the set $\{a_1,\ldots,a_r\}$ when in a state $s$  is done as follows. Instead of waiting in $s$, the player travels by delay transitions in a round-robin fashion through a cycle of $r$ states, where in the $i$-th state, only the action $a_i$ is available. This way, the player cannot influence anything and must input the action that comes with probability 1.

\mycut{
Formally, for any player $i \in \agt$, we build a module $(S_i, \act'_i, \inter{}_i, \marko{}_i, \initstate_{i})$ where
\begin{itemize}
	\item $\states_i = \obs_i \cup \{?\}$,
	\item $\act'_i = \act_i \cup \obs_i$,
	
\end{itemize}
}

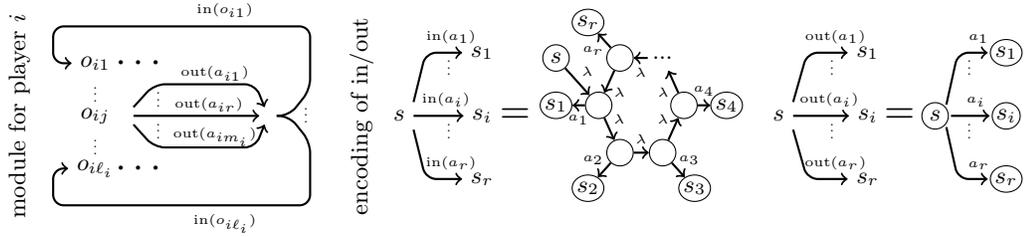
\begin{figure}
\begin{center}
\begin{tikzpicture}[xscale=0.7,yscale=0.7]

\tikzstyle{state}=[ellipse, draw=black,
inner sep=0pt, outer sep=1pt, minimum width=10pt, minimum height=10pt]
\tikzstyle{coord}=[inner sep = 0, outer sep=0]
\tikzstyle{dots}=[yscale=0.6, xscale=0.6]
\tikzstyle{bigdots}=[yscale=1.6, xscale=1.6]
\tikzstyle{action}=[font=\tiny,->,thick, rounded corners]
\tikzstyle{rate}=[font=\tiny,text width=2cm,text centered]
\tikzstyle{brace}=[decoration={brace, mirror},
decorate,font=\footnotesize]

\begin{scope}[xshift=0cm,yshift=-6cm,xscale=0.8]

\node[rotate=90] at (-2.5,0) {module for player $i$};

\begin{scope}[xshift=-0.7cm]
\node[] (s1) at (0,1) {$o_{i1}$};
\node[bigdots] at (1.1,1) {$\cdots$};
\node[dots] at (0,0.5) {$\vdots$};
\node[] (si) at (0,0) {$o_{ij}$};
\node[dots] at (0,-.5) {$\vdots$};
\node[] (sn) at (0,-1) {$o_{i \ell_i}$};
\node[bigdots] at (1.1,-1) {$\cdots$};
\end{scope}

\begin{scope}[xshift=0cm]
\node[] (si) at (0,0) {};
\node[] (i1) at (3.4,0) {};

	\path[action] (si) edge node[above=-2,pos=0.6] {out($a_{i r}$)} (i1);

	\draw[action] (si.north east) -- (0.7,0.6) -- node[above=-2,pos=0.6] {
				out($a_{i 1}$)
	} (3.1,0.6) -- (i1) ;
	
	\draw[action] (si.south east) -- (0.7,-0.6) -- node[above=-3,pos=0.6] {
				out($a_{i m_i}$)
	} (3.1,-0.6) -- (i1) ;
	
\node[dots] at (0.8,-.2) {$\vdots$};
\node[dots] at (0.8,.4) {$\vdots$};
\end{scope}

\begin{scope}[xshift=3.4cm]
\node[] (si) at (0,0) {};

\draw[action] (si.east) -- (0.5,0) -- (1,0.5) |- (0,1.7) -| node[above,pos=0.1] {
	in($o_{i 1}$)
} ($(s1)+(-1,0)$) -- (s1) ;

\draw[action] (si.east) -- (0.5,0) -- (1,-0.5) |- (0,-1.7) -| node[below,pos=0.1] {
	in($o_{i \ell_i}$)
} ($(sn)+(-1,0)$) -- (sn) ;

\node[dots] at (0.9,0.1) {$\vdots$};
\end{scope}

\end{scope}

\begin{scope}[xshift=5.4cm,yshift=-6cm,xscale=0.9]

\node[rotate=90] at (-1,0) {encoding of in/out};

\begin{scope}[xscale=0.7,xshift=-0.3cm]

\node (s) at (0,0) {$s$};

\node[] (si) at (2.5,0) {$s_i$};
\node[] (sr) at (2.5,-1.2) {$s_r$};
\node[] (s1) at (2.5,1.2) {$s_1$};

\path[action] (s) edge node[above,pos=0.65] {in($a_i$)} (si);
\draw[action] (s.north east) -- (0.8,1.2) -- node[above,pos=0.7] {in($a_1$)} (s1);
\node[dots] at (1.5,1) {$\vdots$};
\draw[action] (s.south east) -- (0.8,-1.2) -- node[above,pos=0.7] {in($a_r$)} (sr);
\node[dots] at (1.5,-.2) {$\vdots$};

\node[scale=1.6] at (3.5,0) {$=$};
\end{scope}

\begin{scope}[xshift=2.2cm,yshift=-5.2cm,xscale=0.9,yscale=0.9]
\tikzstyle{every node}=[ellipse, draw=black,
                        inner sep=0pt, minimum width=10pt, minimum height=10pt]

\draw (2,6)     node  (t1)	{};
\draw (2.5,5)     node  (t2)	{};
\draw (3.5,5)     node  (t3)	{};
\draw (4,6)     node (t4)	{};
\draw (3.5,7)     node [draw=none] (t5)	{...};
\draw (2.5,7)     node (t6)	{};

\draw (1,7)     node (s)	{$s$};
\draw (1,6)     node (s1)	{$s_1$};
\draw (1.75,4.25)     node (s2)	{$s_2$};
\draw (4.25,4.25)     node  (s3)	{$s_3$};
\draw (5,6)     node  (s4)	{$s_4$};
\draw (1.75,7.75)     node (s6)	{$s_r$};

\path[->,action] (t1) edge node [above right, draw=none] {$\lambda$} (t2);
\path[->,action] (t2) edge node [above, draw=none] {$\lambda$} (t3);
\path[->,action] (t3) edge node [above left, draw=none] {$\lambda$} (t4);
\path[->,action] (t4) edge node [below left, draw=none] {$\lambda$} (t5);
\path[->,action] (t5) edge node [below, draw=none] {$\lambda$} (t6);
\path[->,action] (t6) edge node [below right, draw=none] {$\lambda$} (t1);
\path[->,action] (s) edge node [above right, draw=none] {$\lambda$} (t1);

\path[->,action] (t1) edge node [below, draw=none] {$a_1$} (s1);
\path[->,action] (t2) edge node [above left, draw=none] {$a_2$} (s2);
\path[->,action] (t3) edge node [above right, draw=none] {$a_3$} (s3);
\path[->,action] (t4) edge node [above, draw=none] {$a_4$} (s4);
\path[->,action] (t6) edge node [below left, draw=none] {$a_r$} (s6);

\end{scope}


\begin{scope}[xshift=7.8cm,xscale=0.75]

	\node (s) at (0,0) {$s$};
	
	\node[] (si) at (2.5,0) {$s_i$};
	\node[] (sr) at (2.5,-1.2) {$s_r$};
	\node[] (s1) at (2.5,1.2) {$s_1$};
	
	\path[action] (s) edge node[above,pos=0.65] {out($a_i$)} (si);
	\draw[action] (s.north east) -- (0.8,1.2) -- node[above,pos=0.7] {out($a_1$)} (s1);
	\node[dots] at (1.5,1) {$\vdots$};
	\draw[action] (s.south east) -- (0.8,-1.2) -- node[above,pos=0.7] {out($a_r$)} (sr);
	\node[dots] at (1.5,-.2) {$\vdots$};
	
	\node[scale=1.6] at (3.4,0) {$=$};
	
	\begin{scope}[xshift=4.4cm,xscale=0.8]
	
	\node[state] (s) at (0,0) {$s$};
	
	\node[state] (si) at (2.5,0) {$s_i$};
	\node[state] (sr) at (2.5,-1.2) {$s_r$};
	\node[state] (s1) at (2.5,1.2) {$s_1$};
	
	\path[action] (s) edge node[above,pos=0.65] {$a_i$} (si);
	\draw[action] (s.north east) -- (0.8,1.2) -- node[above,pos=0.7] {$a_1$} (s1);
	\node[dots] at (1.2,1) {$\vdots$};
	\draw[action] (s.south east) -- (0.8,-1.2) -- node[above,pos=0.7] {$a_r$} (sr);
	\node[dots] at (1.2,-.2) {$\vdots$};
	
	\end{scope}

\end{scope}

\end{scope}

\end{tikzpicture}
\end{center}
\caption{Module for player $i$ on the left. Input and output encoding to the right.}
\label{fig:syncmodule_app}

\end{figure}

In the same fashion, the main module is constructed such that the extra player $n+1$ has at most one possible choice in each state and thus, this player controls nothing except what he is forced to do by the structure of his module. Thus, he has no choice but to enforce the rules of $\pomdp$ in $\img$. In the main module the current state $s$ of the DEC-POMDP $\pomdp$ is remembered at all times. Player $n+1$ then goes through $n$ steps inputting an action for each player. These are saved in the state of the module as well. When all actions have been collected, a random choice according to $\delta$ is made. This determines the successor state $s'$. Afterwards, a random choice of observations for the different players is done according to $\obsfunc$. Then, these observations are outputted to the other players again in a sequence of $n$ steps. The play proceeds for an infinite number of rounds, thereby modelling the DEC-POMDP.

The initial global state in $\img$ is known to all players and each initial local state is left one by one as the players output their first action. In addition, $\vect{s_0}(n+1) = \initstate$ as the main module mimics $\pomdp$. The target set $T'$ in $\img$ is given by all global states such that the main module is in a state corresponding to a state in $T$.

Since none of the players can affect the timing in the distributed IMC created (they always need to choose actions immediately and are never allowed to re-decide) none of them can gain any information that was not already there in the DEC-POMDP. Therefore, there exists a strategy profile $\vect{\sigma}$ in $\pomdp$ with $\pr^{\vect{\sigma}}_{\pomdp}(\diamond T) = p$ if and only if there exists a strategy profile ${\vect{\sigma}}'$ in $\img$ with $\pr^{{\vect{\sigma}}'}_{\img}(\diamond T') = p$. Note that this is also the case since there is never more than one enabled action at a time in the main module. Thus, player $n+1$ does not have any influence.
\mycut{
	
We first define a synchronization module $\syncmod((a_1,s_1),...,(a_r,s_r), \lambda)$ as illustrated in Figure \ref{fig:syncmodule} on the left where $a_i$ are actions, $s_i$ are local states and $\lambda \in \Qset$ is a rate. The initial state of the module is $s'_1$.

The intuition behind this module is that the player entering such a module must in a round-robin fashion commit to the actions $a_1,...,a_r$. If synchronization happens on action $a_i$ at some point, the player continues from $s_i$.

We now generate from a DEC-POMDP $\pomdp = (S, \agt, (\act_i)_{1 \leq i \leq n}, \delta, (\obs_i)_{1 \leq i \leq n}, \obsfunc)$ with $n$ players a corresponding distributed IMC $\img$ with $n+1$ players. There will be one player corresponding to each player in $\pomdp$ and one special player $P$ that receives actions from the other players, performs the transitions of $\pomdp$ and transmits observations back to the players.

The construction is as follows. Let $\agt = \{1,...,n\}$. Further, let $\act_i = \{a_{i1},...,a_{im_i} \}$ and $\obs_i = \{o_{i1},...,o_{i\ell_i} \}$ for player $i \in \agt$. The local module for player $i \in \{1,...,n\}$ in $\img$ is now constructed as shown in Figure \ref{fig:syncmodule} on the right.

%
%
%
%
%
%
%
%

The idea is that player $i \in \agt$ has one state $s(o)$ for each observation $o \in \obs_i$. From this, he chooses an action $a$ on which he will synchronize with $P$. Then he moves to the module $\syncmod((o_{i1},s(o_{i1})),...,(o_{i\ell_i},s(o_{i\ell_i})),1)$ where he waits until he receives a new observation $o'$ from player $P$ by synchronizing on $o'$ and going to $s(o')$. The play continues like this, mimicking the play in $\pomdp$ for player $i$.

In the module of the special player $P$ the current state $s$ of the DEC-POMDP $\pomdp$ is remembered at all times. Player $P$ then goes through $n$ synchronization modules, one for each player. Here, he collects an action $a_i$ from each player in $\{1,...,n\}$. These are saved in the state of the module as well. When all actions have been collected, a random choice according to $\delta$ is made. This determines the successor state $s'$. Afterwards, a random choice of observations for the different players is done according to $\obsfunc$. Then, these observations are sent back to the players again through a sequence of $n$ synchronization modules. The play proceeds for an infinite number of rounds, thereby modelling the DEC-POMDP.

}
\qed
\end{proof}

\section{Full proof of Theorem \ref{theo:decpomdp_undec}}
\label{app:undec_decpomdp}

\begin{refprop}{theo:decpomdp_undec}
  The qualitative existence problem for DEC-POMDPs with $n \geq 2$ players is undecidable.
 
\end{refprop}

\begin{proof}
  We do a reduction from the non-halting problem of a deterministic Turing machine $M = (Q, q_0, \Sigma, \Delta, B, F)$ that never writes the blank symbol $B$ and starts with a blank input tape. Further, it never moves to the left of the initial state. Here, $Q$ is the finite set of control states, $q_0$ is the initial state, $\Sigma$ is the tape alphabet, $\Delta: Q \times \Sigma \rightarrow Q \times \Sigma \times \{L,R\}$ is the transition function and $F$ is the set of accepting states. From $M$ we construct a DEC-POMDP $\pomdp = (S, \{1,2\}, (\act_1, \act_2), \delta, (\obs_1, \obs_2), \obsfunc)$ with two players $\agt = \{1, 2\}$ such that $M$ does not halt if and only if player $1$ and $2$ have strategies $\sigma_1$ and $\sigma_2$ such that the probability of reaching a target set $T \subseteq S$ is 1. The overall structure of $\pomdp$ can be seen in Figure \ref{fig:undec_overall_app} but without the details of the sub-module $\pomdp'$.
 
\usetikzlibrary{shapes,backgrounds,calc}

\makeatletter
\tikzset{circle split part fill/.style  args={#1,#2}{%
 alias=tmp@name, 
  postaction={%
    insert path={
     \pgfextra{%
     \pgfpointdiff{\pgfpointanchor{\pgf@node@name}{center}}%
                  {\pgfpointanchor{\pgf@node@name}{east}}%
     \pgfmathsetmacro\insiderad{\pgf@x}
      \fill[#1] (\pgf@node@name.base) ([xshift=-\pgflinewidth]\pgf@node@name.east) arc
                          (0:180:\insiderad-\pgflinewidth)--cycle;
      \fill[#2] (\pgf@node@name.base) ([xshift=\pgflinewidth]\pgf@node@name.west)  arc
                           (180:360:\insiderad-\pgflinewidth)--cycle;            
         }}}}}  
 \makeatother  
 
\begin{figure}
\begin{center}
\begin{tikzpicture}[xscale=0.9,yscale=0.8]

\tikzstyle{every node}=[ellipse, draw=black,
                        inner sep=0pt, minimum width=10pt, minimum height=10pt, outer sep=1pt]

\draw (4,7)     node [label=below right:$s_0$] (s0)	{};
\draw (4,6)     node [draw=none] (inv)	{};
\draw (0,3)     node[shape=circle split,
    draw=black,
    line width=0mm,text=white,font=\bfseries,
    circle split part fill={black,black}, label=above left:$s_1$] (s1)	{};
\draw (4,5)     node [label=above right:$s_2$] (s2)	{};
\draw (7,3)     node [shape=circle split,
    draw=black,
    line width=0mm,text=white,font=\bfseries,
    circle split part fill={black,white},label=above right:$s_3$] (s3)	{};
\draw (4,3)     node [shape=circle split,
    draw=black,
    line width=0mm,text=white,font=\bfseries,
    circle split part fill={black,black},label=right:$s_4$] (s4)	{};
\draw (2,3)     node [accepting, minimum width=15pt, minimum height=15pt,label=above:$s_f$] (sf)	{};

\draw (7,1)     node [draw=none] (m3)	{$\pomdp'$};

\draw (0,1)     node [label=right:$s_5$] (s5)	{};
\draw (4,1)     node [label=left:$s_6$] (s6)	{};

\path[-] (s0) edge node [left, draw=none,font=\footnotesize] {$(n,n)$} (4,6);
\path[->] (4,6) edge [bend right] node [above left, draw=none] {$\frac{1}{3}$} (s1);
\path[->] (4,6) edge node [left, draw=none] {$\frac{1}{3}$} (s2);
\path[->] (4,6) edge [bend left] node [above right, draw=none] {$\frac{1}{3}$} (s3);

\path[-] (s2) edge [bend right] node [left=-4, draw=none,font=\footnotesize] {$(n,n)$} (4,4);
\path[->] (4,4) edge [bend right] node [above right, draw=none] {$\frac{1}{3}$} (s2);
\path[->] (4,4) edge node [above right, draw=none] {$\frac{1}{3}$} (s3);
\path[->] (4,4) edge node [left=-2,pos=0.55, draw=none] {$\frac{1}{3}$} (s4);

\path[->] (s3) edge (m3);
\path[->] (s1) edge node [pos=0.2,right=-5, draw=none,font=\footnotesize] {$(q_0,q_0)$} (s5);
\path[->] (s5) edge node [below right, draw=none,font=\footnotesize] {$(\#,\#)$} (sf);

\path[->] (s4) edge node [right = -12, draw=none,font=\footnotesize] {$\displaystyle\bigcup_{a \in \act \setminus \{\#\}}\{(a,a)\}$} (s6);
\path[->] (s6) edge node [below left, draw=none,font=\footnotesize] {$(\#,\#)$} (sf);
\path[->] (s6) edge [loop below] node [right=-20, draw=none,font=\footnotesize] {$\displaystyle\bigcup_{a \in \act \setminus \{\#\}}\{(a,a)\}$} (s6);

\end{tikzpicture}
\end{center}
\caption{Overall structure of $\pomdp$ without the details of $\pomdp'$.}
\label{fig:undec_overall_app}
\end{figure}

In this game, both players have two possible observations. In all states except $s_1,s_3,s_4$ player 1 receives one observation (illustrated with top-half of states filled with black) and in all other states he will receive another observation (top-half of states being white). Player 2 gets one observation for states $s_1,s_4$ as well as in some states in $\pomdp'$ (when bottom-half is black). In all other states he receives another observation.

The play starts in $s_0$ and with probability 1 either each player will receive the black observation exactly once during the game or the play will go to $s_f \in F$ from $s_2$ at some point. If the play goes to $s_1$ or $s_4$ the players will receive the observation at the same time and if the play goes to $s_3$ then player 2 will receive the observation in the next state. We will show that in order to have strategies $\sigma_1, \sigma_2$ that can make sure to reach the target set with probability 1 two things must be satisfied

\begin{enumerate}
 \item $\sigma_i$ must prescribe playing the $j$th configuration of the Turing machine $M$ (in a format described below) if player $i \in \{1,2\}$ receives the black observation in the $j$th step after the play leaves $s_0$
 
 \item $M$ does not halt
\end{enumerate}

Further, when these two properties are satisfied, players 1 and 2 can indeed reach the target set with probability 1 by applying $\sigma_1$ and $\sigma_2$. We now explain what we mean by playing a configuration of $M$. Let $(q,w,j) \in Q \times \Sigma^* \times \mathbb{N}$ be a configuration of $M$ where $q$ is the current control state, $w$ is the current non-blank part of the tape contents and $j$ is the current position of the tape head. By playing configuration $(q,w,j)$ we mean playing the following sequence of actions (where possible actions are corresponding to control states, tape symbols and $\#$ as an end of non-blank tape marker)
$$w_1 w_2 ... w_{j-1} q w_j w_{j+1} ... w_{|w|} \#$$
In other words, the contents of the tape is played one symbol at a time and the control state is prior to contents of the tape cell that the tape head points to.

The DEC-POMDP $\pomdp$ is constructed such that

\begin{enumerate}
 \item If the play goes to $s_1$ both players must play the first configuration of $M$
 
 \item If the play goes to $s_4$ both players must play the same sequence of symbols and ending with a $\#$
 
 \item If the play goes to $s_3$ the two players must play configurations $C_1$ and $C_2$ respectively such that $C_2$ is a successor configuration of $C_1$ according to the transition function of $M$. Further, this must be done with an offset of 1 step because player 2 receives the black observation one step later than player 1. Finally, the play will go to a sink state if they play a halting state as part of the configurations, making it impossible to reach $T$ afterwards. This can be done with a finite module~$\pomdp'$~\cite{BK10}.
 
\end{enumerate}

By induction we can show that if player 1 and 2 have strategies $\sigma_1$ and $\sigma_2$ to ensure reaching $T$ with probability 1 then they must play the $j$th configuration of $M$ if receiving the black observation in the $j$th step after the play leaves $s_0$. Indeed, if one of the players gets the black observation for $j = 1$ then $s_1$ is a possible state. Thus, both players must play the first configuration when $j = 1$. Now, suppose as induction hypothesis it is true for some $j$. Now, if player 2 receives the black observation in the $(j+1)$th step then the play might have passed through $s_3$ in which case player 1 will have received the black observation in the $j$th step. Thus, to make sure to reach $T$ in this case player 2 must player a successor configuration of what player 1 plays. Thus, player 2 must play the $(j+1)$th configuration of $M$ when getting the black observation in the $(j+1)$th step. Now, if player 1 gets the black observation after $j+1$ steps, then it is possible that the play is in $s_4$. If this is the case then player 2 will also have received the black observation in the $(j+1)$th step. Since player 2 will then play the $(j+1)$th configuration of $M$ then player 1 has to do this as well in order to reach $T$ since the players must play the same sequence of symbols to reach $T$ from $s_4$. This concludes the induction step.

We have now shown that if player 1 and 2 have strategies $\sigma_1$ and $\sigma_2$ to ensure reaching $T$ with probability 1 then they must play the $j$th configuration of $M$ when receiving the black observation in the $j$th step after the play leaves $s_0$. We now need to show that if $M$ does not halt, then applying these strategies will actually ensure reaching $T$ with probability 1 and if $M$ does halt, then applying these strategies will not ensure reaching $T$ with probability 1 (and thus, in this case, no strategies can ensure this).

In the case where $M$ does not halt, the play will reach $s_f, s_1, s_3$ or $s_4$ with probability 1. Suppose it reaches $s_1,s_3$ or $s_4$ at step $j$. Then both players play the appropriate configuration and reach $T$ with probability 1 since none of them will play a halting state at any point.

Suppose on the other hand that $M$ halts after $j$ steps. Now the play will reach $s_4$ after $j$ steps with positive probability. And since the players using $\sigma_1$ and $\sigma_2$ play the $j$th configuration of $M$ when this happens, they will play a halting state with positive probability. Thus, they cannot ensure reaching $T$ with probability 1 in this case.

In total this means the players have strategies to reach $T$ with probability 1 if and only if $M$ does not halt.
\qed
\end{proof}

\section{Proof of Proposition \ref{prop:correct}}

\begin{refprop}{prop:correct}
		The value of $\structure$ equals to the value of $\mdp_\structure$, i.e.
		$$
		\sup_{\vect{\sigma}} \pr^{\vect{\sigma}}(\diamond T) 
		\;\; = \;\; 
		\sup_\pi \pr^\pi(\diamond T).
		$$
\end{refprop}

The proof goes by three technical steps. First we define a stronger class of \emph{synchronization} strategies that can observe the whole global state whenever in a synchronization state that have the same power as the strategies in $\mdp_\structure$. Second, we show that any strategy profile can be mimicked by a synchronization strategy. The crucial step is the third. We show how any synchronization strategy can be (up to an arbitrary error) emulated by standard strategies in a distributed IMC. These strategies obtain (with high probability) full-observation by delaying.

\subsection{Synchronization strategies}
A \emph{global strategy} is a measurable function $\theta: \hist \to \Delta(\choice)$ that takes new choices only when allowed, i.e. that for any global history $h$ of the form $h = h' \cdot \vect{c} \arrow{a,t}{} \vect{s}$ and any global choice $\vect{c'}$ such that $\theta(h)(\vect{c'}) > 0$ we have $\vect{c}(j) = \vect{c'}(j)$ if $j\not\in\syn(a)$. For this section, we call standard strategies \emph{local} to stress the difference.

Furthermore, we call a global strategy a \emph{synchronization} strategy if it, intuitively, disregards non-local knowledge after the last synchronization state. Formally, the following condition needs to be satisfied for any pair of global histories $h$ and $h'$ and any player $j \in \{1,2\}$.
Let $h = h_1 \cdot s \cdot h_2$ and $h' = h'_1 \cdot s' \cdot h'_2$ where $h_2$ and $h'_2$ are the sequences after the last visit of a synchronization state ($s$ and $s'$). If (1) $h_1 \cdot s = h'_1 \cdot s'$, and (2) $\pi_j(h_2) = \pi_j(h_2)$, we have $\theta(h)(c) = \theta(h')(c)$ for any $c \in \choice_j$ where 
$\theta(h)(c) := \sum_{\vect{c}, \vect{c}(j)=c} \theta(h)(\vect{c})$.

Note that a synchronization strategy replaces a complete profile of local strategies. The definition of the probability measure $\pr_{\vect{s_0}}^{\theta,\delta}$ induced by a synchronization strategy $\theta$ and a scheduler $\delta$ goes along the same lines as for a strategy profile. The only difference is that we replace the term $St_i$ by the much simpler term $St'_i = \theta(h_{i})(\vect{c_i})$. 
By observing the whole synchronization state the strategy $\theta$ can completely avoid that a scheduler $\delta$ has any power. Indeed, the strategy $\theta$ can make sure that $\enab(\vect{s})$ always contains at most one action. Similarly to local strategies, we thus simplify the notation by denoting the probability measure $\pr_{\vect{s_0}}^{\theta}$.

\begin{lemma}
	The value of $\structure$ w.r.t. synchronization strategies equals to the value of $\mdp_\structure$:
	$$
	\sup_\pi \pr^{\pi}(\diamond T) 
	\;\; = \;\; 
	\sup_\theta \pr^\theta(\diamond T)
	$$
\end{lemma}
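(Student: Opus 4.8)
The plan is to prove the two inequalities separately, exploiting that the embedded process at synchronisation states is governed, one step at a time, by a bilinear optimisation that $\mdp_\structure$ already solves. The inequality $\sup_\pi \pr^\pi(\diamond T) \le \sup_\theta \pr^\theta(\diamond T)$ is the routine one: given a strategy $\pi$ in $\mdp_\structure$ I would build a synchronisation strategy $\theta$ that mimics it. Whenever the play sits in a synchronisation state $\vect{s}\in S'$, the sequence of synchronisation states visited so far determines an MDP history, and I let $\theta$ read off $\pi$'s action $(\vect{c},\sigma_1,\sigma_2)$ there: it commits both players to the global choice $\vect{c}$, so that $\enab(\vect{c})=\{a\}$ is a singleton and the synchronisation fires immediately (in particular the scheduler has nothing to resolve), and then it plays the pure memoryless profile $(\sigma_1,\sigma_2)$ throughout the ensuing private phase. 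By the very definition of the transition function $P$, the distribution over the next synchronisation state and over $\diamond T$ under $\theta$ agrees with the one under $\pi$, so $\pr^\theta(\diamond T)=\pr^\pi(\diamond T)$.

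The interesting inequality is $\sup_\theta \pr^\theta(\diamond T)\le \sup_\pi \pr^\pi(\diamond T)$. Fix an optimal pure memoryless strategy of $\mdp_\structure$ with value function $V^\ast$ over $S'$, which exists and satisfies the Bellman optimality equation by standard MDP theory~\cite{puterman2009markov}. The decisive structural consequence of non-urgency is that between two consecutive synchronisation states the two players do not interact: once the chosen synchronisation $a$ has fired, both players move through \emph{private} states, from which no synchronisation action is available, so their trajectories up to reaching the next synchronisation state are stochastically independent. Moreover, condition~(2) in the definition of a synchronisation strategy forces the private-phase behaviour of each player $j$ to depend only on its local projection $\pi_j$ of the history after the last synchronisation state. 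Hence, from a synchronisation state $(s_1,s_2)$, the pair of next synchronisation states reached under $\theta$ is distributed as a \emph{product} $\mu_1\times\mu_2$, where $\mu_j$ is the reaching distribution induced by some single-player local strategy $\tau_j$ of player $j$.

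The crux is then a purely MDP-theoretic statement: for every product $\mu_1\times\mu_2$ obtainable from arbitrary local private strategies,
\[
\sum_{(s_1',s_2')} \mu_1(s_1')\,\mu_2(s_2')\,V^\ast(s_1',s_2')
\;\le\;
\max_{\sigma_1,\sigma_2}\;\sum_{(s_1',s_2')} \mu_1^{\sigma_1}(s_1')\,\mu_2^{\sigma_2}(s_2')\,V^\ast(s_1',s_2'),
\]
where the maximum ranges over pure memoryless $\sigma_1,\sigma_2$, i.e.\ exactly over the private components of the actions of $\mdp_\structure$ at $(s_1,s_2)$. This holds because, in a finite single-player MDP, the achievable reaching distributions form the convex hull of those realised by pure memoryless strategies~\cite{puterman2009markov}, and a bilinear form over a product of two polytopes attains its maximum at a product of vertices (fixing one argument makes the form linear in the other, so one can push each argument to a vertex without decreasing the value). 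The right-hand side is at most $V^\ast(s_1,s_2)$ by Bellman optimality. Consequently, along the sequence $X_0,X_1,\dots$ of synchronisation states visited under any $\theta$, the process $V^\ast(X_n)$ is a supermartingale until absorption in $T$; optional stopping yields $\pr^\theta(\diamond T)\le V^\ast(\vect{s_0})=\sup_\pi\pr^\pi(\diamond T)$.

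I expect the pure-memoryless/bilinear step to be the main obstacle, since it is where arbitrary randomised, history-dependent private strategies must be shown to gain nothing over the finite action set of $\mdp_\structure$; the independence of the two private phases (a genuine use of non-urgency) and the reduction of the scheduler to a non-issue (the synchronisation state is fully observed, so $\enab$ can be kept a singleton, consistently with Theorem~\ref{theo:sched_power}) are the supporting ingredients. A minor point to dispatch is that a private phase may never reach a synchronisation state; such plays never reach $T\subseteq S'$ and contribute $0$, which matches the convention already built into $V^\ast$ through the self-loop transition $P(\cdot,\bot)$.
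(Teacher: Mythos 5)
Your proof is correct and follows essentially the same route as the paper's: the $\le$ direction by direct mimicking of the MDP strategy, and the $\ge$ direction via the stochastic independence of the two private phases, the polytope of achievable reaching distributions whose vertices are realised by pure memoryless strategies, and maximisation of a bilinear form over a product of polytopes at a product of vertices. The only difference is presentational --- you propagate the one-step inequality by a supermartingale/optional-stopping argument on $V^\ast$, whereas the paper phrases it as a value recursion over synchronisation states; both hinge on the identical key step.
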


\begin{proof}
	Ad $\leq$: We show that an optimal pure memoryless strategy $\pi$ induces a synchronization pure memoryless strategy $\theta$ with the same value. Let $h$ be a history in $\structure$. This gives us a history $\bar{h}$ in the MDP (by removing choices, actions, time, and all private states). Further, let $(\vect{c},\sigma_1,\sigma_2) = \pi(\bar{h})$ be the decision of $\pi$.
	\begin{itemize}
		\item If $\last(h)$ is a synchronization state, $\theta$ takes the choice $\vect{c}$ (maybe step by step as the choice of each player can be changes only after the player moves).
		\item If $\last(h)$ is a private state and the player $j \in \{1,2\}$ needs to take an action, we distinguish two cases:
		\begin{itemize}
			\item if $\last(h)(j)$ is a private state, player $j$ takes action according to the pure memoryless strategy $\sigma_j$;
			\item if $\last(h)(j)$ is a synchronization state, player $j$ chooses $\donothing_j$.
		\end{itemize}
	\end{itemize}
	The equality $\pr^{\pi}(\diamond T) = \pr{\theta}(\diamond T)$ can be easily shown by induction in the number of visits to synchronization states. The crucial fact is that the probability to reach in $\structure$ by $\theta$ a state $(s'_1,s'_2) \in S'$ from a state $(s_1,s_2) \in S'$ only via $S\setminus S'$ coincides with $P((s_1,s_2),\pi((s_1,s_2)))(s'_1,s'_2)$.
	
	Ad $\geq$: For a global history $h \cdot (s_1, s_2)$, we denote by $\pr^\theta_{h \cdot (s_1,s_2)}(\diamond T)$ the probability to reach $T$ with $\theta$ after history $h \cdot (s_1,s_2)$ has already passed.	We show the statement in three steps. 
	
	First we argue that 
	$$\sup_\theta \pr_{h\cdot (s_1,s_2)}^\theta(\diamond T) = \sup_\theta \pr_{h'\cdot (s_1,s_2)}^\theta(\diamond T)$$ for any histories $h \cdot(s_1,s_2)$ and $h' \cdot(s_1,s_2)$ ending in the same synchronization state. Let us assume that for some $\theta'$ we have that $\pr_{h\cdot (s_1,s_2)}^{\theta'}(\diamond T)$ is strictly greater than the value for $h' (s_1,s_2)$. Since only the last synchronization state has impact on the further evolution, we could define a strategy $\theta''$ to play in $h'\cdot (s_1,s_2)$ in the same way as $\theta'$ plays in $h\cdot (s_1,s_2)$. This way, we get the same reachability probability yielding a contradiction.
	
	Second, we can thus define the \emph{value} of a synchronization state $(s_1,s_2)$ simply as 
	$$val(s_1,s_2) =  \sup_\theta \pr_{(s_1,s_2)}^\theta(\diamond T)$$ and we can easily obtain equations
	\begin{align*}
	val(s_1,s_2) 
	&= \sup_\theta \sum_{(s'_1,s'_2) \in S'} \pr_{(s_1,s_2)}^\theta(R(s'_1,s'_2)) \cdot val(s'_1,s'_2).
	\intertext{where $R(s'_1,s'_2)$ are the plays that reach $(s'_1,s'_2)$ via states in $S \setminus S'$. Because a synchronization strategy needs to play in each such fragment exactly as a profile of local strategies, it is in turn equal to}
	&= \sup_{(\sigma_1,\sigma_2)} \sum_{(s'_1,s'_2) \in S'} \pr_{(s_1,s_2)}^{(\sigma_1,\sigma_2)}(R(s'_1,s'_2)) \cdot val(s'_1,s'_2)
	\intertext{which can be decomposed to two independent one-player games}
	&= \sup_{\sigma_1,\sigma_2} \sum_{(s'_1,s'_2) \in S'} \pr_{s_1}^{\sigma_1}(R(s'_1)) \cdot \pr_{s_2}^{\sigma_2}(R(s'_2)) \cdot val(s'_1,s'_2).
	\end{align*}
	Each such one player game is equivalent to a continuous-time Markov decision process where any achievable vector of (time-unbounded) reachability probabilities of terminal states can be equivalently achieved in the embedded discrete-time Markov chain. The set of achievable vectors of reachability probabilities forms a polytope with the corners given by pure memoryless strategies. One can easily see that the supremum above is always obtained for some corners of the two polytopes, i.e. for pure memoryless strategies.
	\qed
\end{proof}

\subsection{Proof of $\leq$:}

We show that for every profile ${\vect{\sigma}}$ of local strategies there is a synchronization strategy $\theta$ such that
	$$\sup_{\vect{\sigma}} \pr^{\vect{\sigma}}(\diamond T) \leq  \pr^\theta(\diamond T)$$

	The proof follows from the fact that for a profile ${\vect{\sigma}}$ of local strategies there is a synchronization strategy $\theta$ inducing equivalent probability measures.
	Indeed, for decisions of every player $i$ the synchronization strategy disregards the global knowledge and decides only based on the $i$th local projection of the history exactly as $\sigma_i$.

\subsection{Proof of $\geq$:}

Finally, we show that using the optimal strategy $\pi'$, we can define a sequence of local strategy profiles ${\vect{\sigma}}^1, {\vect{\sigma}}^2, \ldots$ such that 
$$
\lim_{i\to\infty} \pr^{{\vect{\sigma}}^i}(\diamond T)
= 
\pr^{\pi'}(\diamond T).$$


Let us fix $i \in \Nset$. In order to define the strategy profile ${\vect{\sigma}}^i$, we need some auxiliary notions. 
We assume w.l.o.g. that $S'_1 = \{0, \ldots, |S_1|-1\}$ and $S'_2 = \{0, \ldots, |S_2|-1\}$.
Let $\lambda$ be the minimal rate of the delay transition over all synchronization states of both players and $K_i$ be the minimal integer such that $e^{-\lambda K_i} \leq 1/(4i\cdot(|S'_1|+|S'_2|))$. 
We split for each count of synchronization steps $k \in \Nseto$ the time into \emph{phases} of length $|S'_1| \cdot 2|S'_2| \cdot (K_i + k/\lambda)$ 
each composed of $|S'_1| \cdot 2 |S'_2|$ \emph{slots} of length $\ell_{i,k} = (K_i+k/\lambda)$. For any $t \in \Rset$ we define functions
\begin{itemize}
	\item $\row_{i,k}(t) := \lfloor t \; / \; { 2|S'_2| \cdot \ell_{i,k}}\rfloor \mod{|S'_1|}$,
	\item $\col_{i,k}(t) := \lfloor t \; / \; \ell_{i,k} \rfloor \mod{2|S'_2|}$.
\end{itemize}
that give the current row and column in a synchronization table depicted in the figure on page \pageref{fig:grid}.

\noindent
Let $h$ be a history that includes $k$ synchronization actions. 
\begin{itemize}
	\item Let $h$ end in a synchronization state $(s_1,s_2)$ at time $t$, and let the timed guesses be $\bar{s}_1 = \row_{i,k}(t)$ and $\bar{s}_2 = \lfloor \col_{i,k}(t) \; / \; 2 \rfloor$. Furthermore, let $(c_1^{\bar{s}_1,\bar{s}_2},c_2^{\bar{s}_1,\bar{s}_2})$ be the optimal choices of $\pi'$ for players $1$ and $2$ in state $(\bar{s}_1,\bar{s}_2)$. Finally, we set a proposition $\sync$ to true iff $\col_{i,k}(t) \mod{2} = 0 $. 
	We define for each player (if the player is allowed to change the choice at the moment)
	\begin{align*}
	\sigma^i_1(h) & = \begin{cases}
	c_1^{\bar{s}_1, \bar{s}_2} & \text{if $\bar{s}_1 = s_1$ and $\sync$,} \\
	\donothing_1 & \text{otherwise;}
	\end{cases} \\
	\sigma^i_2(h) & = \begin{cases}
	c_2^{\bar{s}_1, \bar{s}_2} & \text{if $\bar{s}_2 = s_2$ and $\sync$,} \\
	\donothing_2 & \text{otherwise;}
	\end{cases} 
	\end{align*}
	\item If $h$ ends in a private state $(s_1,s_2)$, let $t$ be the time of last synchronization. We define again the timed guesses by $\bar{s}_1 = \row_{i,k}(t)$ and $\bar{s}_2 = \lfloor \col_{i,k}(t) \; / \; 2 \rfloor$.
	Then $\sigma^i_j$ plays in $s_j$ as the pure memoryless strategy $\sigma_j$ chosen by the optimal strategy $\pi'$ in state $(\bar{s}_1,\bar{s}_2)$.
\end{itemize}

\begin{mylemma}
	For any $i \in \Nset$ we have
	$$
	|\pr_{s}^{\pi'}(\diamond T)
	-  \pr_{s}^{{\vect{\sigma}}^i}(\diamond T)| 
	\;\leq \; \frac{1}{i}.
	$$
\end{mylemma}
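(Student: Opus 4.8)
The plan is to bound the difference in reachability probabilities by the probability of a single ``bad'' event, namely that somewhere along the play under ${\vect{\sigma}}^i$ an \emph{erroneous synchronisation} occurs---a synchronisation taken in a slot whose timed guesses $(\bar{s}_1,\bar{s}_2)$ do not match the true current synchronisation state $(s_1,s_2)$. First I would set up a coupling of the two measures $\pr_s^{\pi'}$ and $\pr_s^{{\vect{\sigma}}^i}$ that shares the exponential delay samples of all delay self-loops as well as the internal randomness of the private fragments. Under this coupling, as long as no erroneous synchronisation happens, every synchronisation that ${\vect{\sigma}}^i$ performs is taken in the matched slot (where $\bar{s}_1=s_1$, $\bar{s}_2=s_2$ and $\sync$ holds) with exactly the optimal choices $(c_1^{s_1,s_2},c_2^{s_1,s_2})$ prescribed by $\pi'$, and in each private fragment both processes follow the same pure memoryless strategies, hence reach the same next synchronisation state. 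Consequently, on the complement of the bad event $\mathcal{E}$ the two processes visit the same sequence of synchronisation states and thus agree on $\diamond T$; missed synchronisations (no commitment in a matched slot) only postpone the correct synchronisation to a later repetition of the phase and, since $\diamond T$ is time-unbounded and target states carry delay self-loops, do not affect reachability. This gives $|\pr_s^{\pi'}(\diamond T)-\pr_s^{{\vect{\sigma}}^i}(\diamond T)|\le \pr(\mathcal{E})$.

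It then remains to show $\pr(\mathcal{E})\le 1/i$. The quantitative core is that a player's synchronising commitment can only reach a wrong synchronisation slot if no delay self-loop fires during the intervening silent slot; since every synchronisation state carries a self-loop of rate at least $\lambda$ and each slot has length $\ell_{i,k}=K_i+k/\lambda$, this has probability at most $e^{-\lambda\ell_{i,k}}=e^{-\lambda K_i}e^{-k}$. During one traversal of a phase (while waiting for the $k$-th synchronisation) each player commits to a synchronising choice in at most $|S'_1|+|S'_2|$ slots, yielding that many stretch opportunities. Because a miss has probability at most $2e^{-\lambda\ell_{i,k}}\le 1/2$ and merely causes the phase to repeat, a geometric series in the number of repetitions bounds the probability of an erroneous synchronisation before the $k$-th synchronisation by $O(|S'_1|+|S'_2|)\,e^{-\lambda K_i}e^{-k}$. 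Summing over $k\ge 0$ (contributing the factor $e/(e-1)$) and applying the defining inequality $e^{-\lambda K_i}\le 1/(4i(|S'_1|+|S'_2|))$ leaves enough slack to conclude $\pr(\mathcal{E})\le 1/i$.

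The hard part will be the coupling/agreement step rather than the numeric estimate: one must argue rigorously that, off $\mathcal{E}$, the distributed play simulates $\pi'$ exactly at the level of the synchronisation-state sequence despite the extra (random but harmless) delays produced by missed synchronisations, and despite commitments being updated only at random delay-transition moments. A secondary subtlety is the bookkeeping of stretch opportunities: since missed synchronisations make the number of phase traversals, and hence the number of commitments exposed to stretching, a random and unbounded quantity, the bound must be organised as a geometric sum conditioned on reaching each successive repetition rather than as a naive union bound over a fixed set of slots.
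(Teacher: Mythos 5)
Your proposal follows essentially the same route as the paper: bound the difference by the probability of any erroneous synchronisation, estimate the per-slot stretch probability as $e^{-\lambda \ell_{i,k}}$, multiply by the $|S'_1|+|S'_2|$ exposed commitments, handle missed synchronisations by a geometric sum over phase repetitions, and sum over $k$ using the defining inequality for $K_i$. The only difference is that you make the agreement-off-the-bad-event step explicit via a coupling, whereas the paper simply asserts that ${\vect{\sigma}}^i$ emulates $\pi'$ whenever every synchronisation is correct; your version is a reasonable formalisation of the same claim.
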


\begin{proof}
	Let us fix $i \in \Nset$ and $k\in\Nseto$. From the definition of $\ell_{i,k}$ we have that the probability $err_{i,k}$ that a choice is \emph{not} updated within the length of one slot is 
	$$err_{i,k} \leq e^{-\lambda(K_i+k/\lambda)} \leq e^{-\lambda K_i} / 2^k \leq 1 / (4i\cdot (|S_1| + |S_2|) \cdot 2^k).$$ 
	Thus, the probability $corr^1_{i,k}$ that the correct synchronization is achieved within one phase is at least $corr_{i,k} \geq 1 - 2 err_{i,k}$ and the probability $inco^1_{i,k}$ that an incorrect synchronization is achieved within one phase is at most $inco^1_{i,k} \leq err_{i,k} \cdot (|S_1| + |S_2|)$ (here we bound it by the event that any synchronization choice is not switched off within the next $\neg\sync$ slot). With the remaining probability $wait^1_{i,k} = 1 - corr^1_{i,k}- inco^1_{i,k}$, no synchronization is achieved within one phase.
	
	The overall probability that the $k$th synchronization is incorrect is
	\begin{align*}
	inco_{i,k} 
	&= inco^1_{i,k} \cdot \sum_{j=0}^\infty \left(wait^1_{i,k}\right)^j  \leq inco^1_{i,k} \cdot \sum_{j=0}^\infty \left(1- corr^1_{i,k}\right)^j \\
	& \leq err_{i,k} \cdot (|S_1| + |S_2|) \cdot \sum_{j=0}^\infty \left(2 err_{i,k} \right)^j 
	\leq \frac{1}{2\cdot i \cdot 2^k}.
	\end{align*}
	Therefore, the overall probability that \emph{any} synchronization is incorrect is $\sum_{j=0}^\infty inco_{i,j} \leq 1/i$. It is easy to see that every profile ${\vect{\sigma}}^i$ emulates $\pi'$, provided every synchronization guarantees that both players have correct knowledge of the other player's state.
	\qed
\end{proof}

\end{document}